\tikzstyle{standard}=[fill=white, draw=black, shape=circle, scale=1]
\tikzstyle{label}=[fill=white, draw=white, scale=1]
\tikzstyle{weighted directed}=[->]
\tikzstyle{squiggly}=[->, decoration={{snake}}, tikzit draw={rgb,255: red,191; green,255; blue,0}, decorate]
\newtheorem{theorem}{Theorem}
\newtheorem{lemma}{Lemma}
\newtheorem{proposition}{Proposition}
\newtheorem{corollary}{Corollary}
\theoremstyle{remark}
\newcounter{fig}
\DeclareMathOperator*{\argmin}{argmin}
\newcommand{\eps}{\varepsilon}
\newcommand{\eqdef}{\overset{\text{def}}{=}}
\title{Chunking Tasks for Present-Biased Agents}
\author{Joe Halpern$^1$, Aditya Saraf$^1$}
\date{$^1$Cornell University\\[2ex]
\today}
\begin{document}

\maketitle

\begin{abstract}
    Everyone puts things off sometimes. How can we combat this
    tendency to procrastinate? A well-known technique used by
    instructors is to break up a large project into more manageable
    chunks. But how should this be done best? Here we study the
    process of chunking using the graph-theoretic model of present
    bias introduced by \citet{kleinberg2014naive}. We first analyze
    how to optimally chunk single edges within a task graph, given a
    limited number of chunks. We show that for edges on the
     shortest path, the optimal chunking makes
    initial chunks easy and later chunks progressively harder. For
    edges not on the shortest path, optimal chunking is
significantly more complex, but we provide an efficient
algorithm that chunks the edge optimally.  We then use our optimal
edge-chunking algorithm to 
    optimally chunk task graphs. 
    We show that with a linear number of chunks on each edge, the biased agent's cost can be exponentially lowered, to within a constant factor of the true cheapest path.
    Finally, we extend our model to the case where a task designer must chunk a graph for multiple types of agents simultaneously. The problem grows significantly more complex with even two types of agents, but we provide optimal graph chunking algorithms for two types.
Our work highlights
    the efficacy of chunking as a means to combat present bias. 
\end{abstract}

\section{Introduction}

Everyone puts things off sometimes. How can we combat this tendency to
procrastinate? A well-known technique used by instructors is to break
up a large project into more manageable chunks. But how should this be
done best? Here we study the process of chunking using the
graph-theoretic model of \emph{present bias} introduced by
\citet{kleinberg2014naive}. One of our main results confirms the
intuition long held by teachers: in many cases, the best way to chunk
a single task involves making the initial subtasks easy and then
getting progressively harder. We also provide algorithms that can best
``distribute'' chunks across many tasks, which could be applied in an
automated to-do list chunking app. 

Present bias is the tendency of agents to overweight costs and rewards
experienced in the current time period, which helps explain many
irrational behaviors, from procrastination to task
abandonment. \citet{kleinberg2014naive} had the crucial insight that
this diverse behavior could be captured in a single graph-theoretic
model. They represent tasks using a directed, acyclic graph $G$, with
designated start $s$ and end $t$. A path through this graph
corresponds to a plan to complete the task; each edge represents one
step of this plan. The weights on edges represent the costs of
completing that step. While the model is simple, it is
deceptively complex to analyze; it has been a popular starting point
for present bias in the CS community (see, e.g.,
\citep{gravin2016procrastination,albers2017value,oren2019principal,ma2019penalty,anagnostopoulos2020collaborative,fomin2020present}).

The goal of an agent is to complete the task while incurring the least
cost. An optimal (unbiased) agent simply computes the shortest path
and takes it. A \textit{naive} present-biased agent with bias
parameter $b > 1$ behaves as follows. At $s$, they compute their
perceived cost of each path to $t$ by scaling up the cost of the first
edge on each path by $b$. Then they take one step along this path, say
to vertex $u$, and then \textit{recompute} their perceived costs, this
time by scaling up the costs on the edges out of $u$. Notice that
the agent may plan to take some path $P$ at $s$, but then deviate 
from their plan after one step. This is because they (naively) do not take the 
future impact of their present bias into account when
planning; see \autoref{fig:branching} for an example. 

\begin{figure}[h]
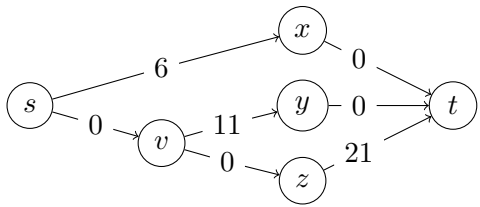

    \centering
    \tikzfig{branching}
    \caption{Taken from \citet{saraf2020competition}. The cheapest path is $(s, x, t)$ with total cost $6$. However, an agent with bias $b = 2$ will take path $(s, v, z, t)$, with cost $21$. Importantly, when the agent is deciding which vertex to move to from $s$, they evaluate the path starting with $x$ as having total cost $12$, while the path starting with $v$ has total cost $11$. This is because they assume they will behave optimally at $v$ by taking path $(v, y, t)$. However, they apply their bias at $v$ and deviate to the most expensive path.}
    \label{fig:branching}
\end{figure}

We extend the Kleinberg-Oren model by giving a task
designer the power to break up an edge into chunks. The agent
completes the chunks one at a time, which reduces the impact of their
present bias. We consider the chunks to be a mental feature -- the
designer does not actually check that the agent completes the task in
chunks, but instead suggests a chunking to the agent. Our model is
a good fit for many, but not necessarily all tasks. We now
highlight three families of applications 
and consider the extent to which our results apply to them.

The first family of applications are personal tasks, such as in the example given by George Akerlof of repeatedly putting off an errand until the next day \citep{akerlof1991procrastination}. In these examples, we believe that chunking can be an effective tool. Breaking even a simple task like ``mailing books'' down into smaller components like ``gather the books'', ``package the books'', and ``drive to the post office'' seems like a typical way to convince oneself to do an errand.
However, there is no real task designer here. Further, our results assume a known bias, but agents in our model are not fully aware that they have present bias. Thus, personal tasks are not the main application we consider (though our overall takeaway that chunking is valuable still applies to these tasks).

Next we consider educational examples, where students procrastinate on course work (while not planning with this in mind). Our model applies well here, as the task designer (the instructor) really does have a vested interest in ensuring that students complete the course, and do so as efficiently as possible. As mentioned before, we do not model the teacher as actively enforcing the chunks, for example with grades or deadlines. Our model is better understood as the teacher suggesting chunks to the students. We discuss further at the end of Section 2.

Finally, another application with great potential is to automatically
chunk to-do lists.
Consider an app that automatically  
takes in a user's to-do list, which could have multiple dependencies,
and suggests ways to chunk some tasks. To avoid overwhelming the user,
the app would not want to suggest too many chunks.
\footnote{Interestingly, it seems that Google's acquisition of the startup Timeful
has led to users of Gmail getting various ``nudge'' reminders,
where the nudges chosen are based in part on
research on present bias [J. Kleinberg, private communication,
  2022].}

We are not the first to consider ways of alleviating the harm caused by
present bias
(which can be quite significant---as shown by
\citet{kleinberg2014naive} and \citet{tang2017computational}, the
ratio of the optimal agent's cost to the biased agent's cost can be
exponential in the size of the graph).     
\citet{kleinberg2014naive} propose a model where a reward
is given after finishing the task, and where the agent will abandon
the task if at any point, they perceive the remaining cost to be
higher than the reward. Unlike an optimal agent, a biased agent may
abandon a task partway through; see \autoref{fig:gym} for an example.  
As a result, Kleinberg and Oren give the task designer the power to
arbitrarily delete vertices and edges, which can model deadlines.
They then investigate the structure of \textit{minimally motivating
    subgraphs}, the smallest subgraph where the agent completes the
task, for some fixed reward. Follow-up work of
\citet{tang2017computational}  shows that finding \textit{any}
motivating subgraph is NP-hard. Instead of deleting edges,
\citet{albers2019motivating} consider the problem of spreading a fixed
reward onto arbitrary vertices to motivate an agent to complete a
task, and find that this too is NP-hard (with a constrained budget).  
\begin{figure}[h]
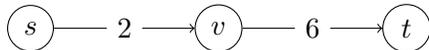

    \centering
    \tikzfig{gym_project_abandonment}
    \caption{Let $(s, v)$ represent buying a gym membership and $(v, t)$ represent working out regularly for a month \citep{roughgarden2016cs269i}. At $t$, the agent receives a reward of $11$ due to health benefits. With bias $b = 2$, the agent initially believes this task is worth completing, but due to his bias, abandons the task at vertex $v$, after having already purchased the membership.}
    \label{fig:gym}
\end{figure}

These results focus on the problem of convincing an agent to complete a
task, rather than redirecting agents to cheaper paths. Though these
goals are related, it's natural to wonder how we might sway agents
towards more optimal behavior, rather than merely settling for task
completion. In other words, even if agents are willing to complete a
task using a needlessly expensive path (perhaps because of a large
reward), we should still consider how to make them behave more
optimally.  
\citet{kleinberg2016sophisticated} partially investigate this question in a model involving \textit{sophisticated} agents, who plan around their present bias. They consider several types of \textit{commitment devices} -- tools by which sophisticated agents can constrain their future selves. However, these tools may require more powerful agents or designers, and don't necessarily make sense for naive agents. 
\citet{saraf2020competition} takes a different approach, arguing that task designers can induce optimal behavior by setting up a competition between biased agents. While they obtain strong results for several families of graphs, there are also graphs where their competitive model can offer no benefit to agents.

Finally, \citet{kleinberg2014naive} consider a restricted version of our chunking problem, which is close to a special case of our model. They focus on the single edge graph $(s, t)$, and derive the optimal chunking in that setting. When considering general graphs, we obtain a similar result when chunking edges on the cheapest path; for other edges, the optimal chunking is more complex. Further, looking at general graphs allows us to ask how a fixed chunking budget should be best allocated across multiple edges, and, more broadly, how to convince agents to take a different (and cheaper) path. 

The rest of the paper is organized as follows.
In Section 2, we present a model for chunking and explain its
simplifying features. In Section 3, we focus on chunking single edges
within a graph. We first describe how chunking an edge $(u, v)$ can be
thought of as lowering the agent's present bias towards only that
edge. We then explore the structure of optimal edge chunkings, that
is, chunkings that
lower the agent's ``selective bias'' as much as possible. For edges
on the shortest path, we provide a closed form for the optimal
chunking. For other edges, optimal chunkings are considerably more
complex, but we provide an efficient algorithm to compute them. In
Section 4, we provide an algorithm to optimally distribute a fixed
number of chunks across multiple edges within a graph. 
In Section 5, we provide a tight bound on the cost
ratio for biased agents in terms of the number of chunks alloted to
the task designer. 
Our bound implies that with a linear number of chunks alloted to each edge, 
the cost ratio can be reduced to a constant factor. Finally, in Section 6, we consider the problem of
chunking a single task graph for two types of agents
simultaneously, where an agent's type is 
their bias. As an example, consider an instructor who wants a good
chunking for both rare and frequent procrastinators. We provide 
algorithms to chunk optimally under local and global budgets for two
types of agents. We also show how to extend our result to $m$ types of
agents, if we add the (simplifying) constraint that all agents must
take the same path through the graph.  

\section{Chunking Model}
We first explain the model of present bias in more detail. As
mentioned before, we start with a weighted, directed, acyclic graph
$G$ that represents a task, with start $s$ and end $t$. A
present-biased agent with bias parameter $b$ behaves as follows. Let
$c(v \to t)$ represent the cost of the shortest path from $v$ to $t$,
and let $c(u, v)$ represent the weight of edge $(u, v)$. From node
$u$, the agent goes to vertex $\argmin_{v: (u, v) \in E} bc(u,v) + c(v
\to t)$. We refer to $bc(u, v) + c(v \to t)$ as the agent's
\emph{perceived cost of starting with edge $(u, v)$ and then taking the shortest path to $t$}. 
We abbreviate this as \emph{the agent's perceived cost of starting
with $(u, v)$}. At each vertex, they go to the neighbor that
minimizes their perceived cost, continuing until they reach $t$.

We next consider chunking.
We distinguish two different settings where chunking helps:
\begin{enumerate}
    \item The task designer wants agents to take the cheapest path through the graph, rather than the more expensive path their bias would lead them to take.
    \item In a model where agents can abandon their path at any time
      (if the perceived cost is less than the reward),
      the task designer wants to prevent such abandonment. 
\end{enumerate}
We mainly focus on the first case in this paper, but our analysis easily extends to the abandonment setting. To investigate different models of chunking, consider the following graph, the $n$-fan (in which a biased agent can take an exponentially more expensive path than optimal \citep{kleinberg2014naive}):

\ctikzfig{n-fan1}
The task designer wants the agent to take the path $(s, t)$ instead of the longer path around the fan that an agent will take when their bias $b > c$. The simplest model of chunking allows them to break the edge $(s, t)$ into pieces as follows:

\ctikzfig{n-fan-split-1}
The designer gets to choose $x$ (i.e., they get to choose how much work
is done in the first and second chunk). Note that the intermediate
node $s_1$ doesn't have any connections, except to $t$. It's easy to
show that the best choice of $x$ is $0$, as this means that the
agent's present bias does not play any role in their decision (all
edges out of $s$ have 0 cost). From a different perspective, this
model seems to be taking advantage of the ``lock-in'' effect of
$s_1$ -- once the agent goes there, they cannot take an alternative
path, even though they did not actually do any work to get there. But
our intuition suggests that chunking a very difficult task into a cost
$0$ ``task'' followed by the same difficult task should not help
much. So, this doesn't seem to be a good model for chunking.  As an
aside, even if we require that $x$ is not too small, the obvious
solution for the task designer is to make the first chunk as small as
allowed -- there's not much interesting in this model.\footnote{If we
move to the abandonment setting, the task designer \textit{is}
incentivized to do a non-trivial split here; they would want to
balance the perceived costs of starting with edges in their chunking
in order to
avoid abandonment. However, the model we investigate induces a similar
balancing problem even without abandonment (and extends naturally
to the abandonment setting).} 

The more interesting model of chunking that we study breaks the edge $(s, t)$ into pieces as follows:

\ctikzfig{n-fan-split-2}
Here, the node $s_1$ keeps all the edges to other task nodes that $s$
had. This reflects the fact that even after completing a chunk, an
agent may decide to take another path to $t$ -- completing a chunk
doesn't ``lock'' an agent into a particular path. Of course, they will
be less likely to take another path if they finished a particularly
difficult chunk. Thus the model has the necessary tension -- the
designer wants to set $x$ high enough so that the agent actually still
takes the $(s, s_1, t)$ path, but not so high that they don't take
edge $(s, s_1)$ in the first place. Put another way, since the agent
can deviate at $s_1$, the designer wants to ensure that the perceived
costs of starting with  $(s, s_1)$ and with $(s_1, t)$ are both low. While we
have shown only 2-chunk examples, in our general model the task designer
splits an edge into $k$ chunks, 
whose costs sum to the original cost.\footnote{Note that in our
formalization, we remove the original edge for simplicity. However, if
we kept the original edge, the agent would never strictly prefer it,
no matter what the chunking. So it's mathematically equivalent to
think of the original edge still being there.  Moreover, this
interpretation maps better to our examples, where the task designer
does not actually enforce the chunking.} 

It is worth discussing three simplifying features of our model. First,
we assume that tasks can be arbitrarily split: each edge in the
chunking can have any cost, so long as the total cost remains fixed. A
more realistic model might constrain edges to have fixed chunking
options. For example, when chunking an essay, it could be the case
that each chunk must consist of some number of paragraphs; essays
cannot be chunked more finely. However, we believe that solving our
continuous relaxation will provide reasonable insight into the
discrete problem. Our informal argument is as follows: if the number
of potential chunks in the discrete problem is high, then our optimal
solution to the continuous version will be a good approximation. If
the number of potential chunks is low, then solving the discrete
problem is easy (there aren't many possible chunkings). Though we will
not consider the discrete version further, it would be interesting to
understand if
there are fundamentally different challenges in that setting. 

The second simplifying assumption is that the chunking ``overhead''
cost to the agent is zero. In other words, no matter how many chunks
an edge is split into, the total cost of that edge remains fixed
(notably, it does not increase).
In reality, there is
probably some cost to the agent per chunk. For instance, the agent
might stop working between chunks, and then have some cost associated
with getting back to work. We assume that this ``restarting'' cost is
very low relative to the other costs, and thus ignore it.
In any case, since each chunk gives the task
designer (weakly) more power in our model, we typically assume that
there is some given chunking budget $k$; if chunking instead had some
fixed overhead, there would exist an optimal $k$, as additional chunks
have diminishing returns but fixed overhead.

Lastly, we specify how agents break ties. If an agent at $u$ views
multiple neighbors as having the same perceived cost, the agent will
pick the neighbor that is part of a chunked path if exactly one
neighbor is part of a chunked path. Otherwise, they pick the first
vertex in some lexicographical ordering. This tie-breaking behavior is
mathematically convenient when constructing the optimal chunking, as
we can simply ensure that the perceived cost starting with each step in
the chunking matches the agent's otherwise best option. For a more
thorough treatment of tie-breaking rules in the base model of present
bias, see \citep{dementiev2021inconsistent}. 

We also contrast our model with a model of ``checkpoints''. As we mentioned, we consider chunking to be a purely mental tool to combat present bias. One might consider a stronger model, where the task designer (e.g., an instructor) can incentivize agents to complete a task in chunks. For example, the instructor might set an earlier (graded) deadline for the thesis statement of an essay. We can model this as the task designer having the power to split up the final reward $r$ onto intermediate vertices or edges, in addition to being able to chunk edges. Although we will not investigate this checkpoint model in this paper, we hope to investigate it in future work. While both the chunking model and checkpoint model are realistic choices to model classwork, we believe that the chunking model is a better fit for algorithmically chunking a user's to-do list; in that setting, the algorithm cannot enforce the chunks, but merely suggests them to the user. 

\section{Optimal Edge-Chunking}
In this section, we consider how to optimally chunk a single
edge. What do we mean by an \textit{optimal} chunking? As mentioned
earlier, we think of chunking as lowering an agent's selective bias
towards the chunked edge. In other words, for any chunking, an agent
with bias $b$ will take the chunked path from $u$ to $v$ if and only
if an agent with bias $b' < b$ towards edge $(u, v)$ (and bias $b$
otherwise) will take $(u, v)$ in the original graph. We say that such
a chunking \emph{induces a selective bias of $b'$ towards $(u,
v)$}.\footnote{When it is clear from context, we often leave the edge
unspecified.} So, by an \textit{optimal} chunking, we mean one in
which the agent's selective bias is brought as low as possible (given
a fixed bound $k$ on the total number of chunks).  

Our results show that as the number of chunks tends to infinity, the
selective bias tends to $1$ (i.e., unbiased behavior). Thus, the
number of chunks is a powerful parameter in our model; in the next
section we answer the broader question of how to best chunk is an
arbitrary task graph with a limited chunking budget. 


\subsection{Edges on the shortest path}
The problem of optimally chunking is subtly different for edges on
the shortest path (where ``shortest'' ignores bias) and edges on
other paths. We first consider the simpler case of edges on the
shortest path, and start with two chunks.

\begin{lemma} \label{lem:base_case}
    To optimally split an edge $(u, v)$ that is on the shortest path into
two chunks, the first chunk should be a $\frac{b-1}{2b-1}$ fraction of the
work. With this split, the agent will behave with a selective bias of
$\frac{b}{2-1/b}$.  
\end{lemma}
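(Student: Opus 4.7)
The plan is to parameterize the two-chunk split by a fraction $x \in [0,1]$, so that the first chunk $(u, u_1)$ carries weight $xc$ and the second chunk $(u_1, v)$ carries weight $(1-x)c$, where $c = c(u,v)$. By the chunking model, $u_1$ inherits all of $u$'s other outgoing edges, so the set of ``alternative'' options at $u_1$ is the same as at $u$. Let $B := \min_{w \neq v} \bigl( b\, c(u,w) + c(w \to t)\bigr)$ be the minimum perceived cost of any alternative at $u$. The first key observation, which is where we use the hypothesis that $(u,v)$ lies on the shortest path, is that $c(u_1 \to t) = (1-x)c + c(v \to t)$: continuing through the second chunk is still the cheapest way out of $u_1$, since going back out through any other neighbor of $u$ would cost at least $\min_{w\neq v} c(u,w) + c(w\to t) \geq c + c(v\to t) \geq (1-x)c + c(v\to t)$.

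With this in hand, I would translate the agent's two sequential decisions into two inequalities. The agent traverses the first chunk iff $bxc + c(u_1 \to t) \leq B$, and then traverses the second chunk iff $b(1-x)c + c(v\to t) \leq B$. Substituting the expression for $c(u_1 \to t)$ and simplifying, these become
\[
\bigl(1 + (b-1)x\bigr)c + c(v\to t) \leq B, \qquad b(1-x)c + c(v\to t) \leq B.
\]
Meanwhile, an agent with selective bias $b'$ on $(u,v)$ in the original graph takes $(u,v)$ iff $b'c + c(v\to t) \leq B$. Matching the biconditional in the definition of selective bias, the chunking induces selective bias exactly
\[
b'(x) \;=\; \max\bigl\{\,1 + (b-1)x,\; b(1-x)\,\bigr\}.
\]

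The remaining step is a one-line optimization: the first term is increasing in $x$ and the second is decreasing in $x$, so the max is minimized at the unique crossing point. Solving $1 + (b-1)x = b(1-x)$ yields $x = \frac{b-1}{2b-1}$, and plugging back gives $b' = b(1-x) = \frac{b^2}{2b-1} = \frac{b}{2 - 1/b}$, matching the claim. The only subtlety worth flagging is the shortest-path hypothesis: it is what collapses $c(u_1 \to t)$ to the simple form above, making the two chunk constraints share the same right-hand side $B$. Without this, $c(u_1 \to t)$ could jump to the alternative cost $A$ for small $x$, and the clean closed form would fail --- which is presumably why the off-shortest-path case requires the more involved treatment promised later in the section.
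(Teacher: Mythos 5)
Your proof is correct and takes essentially the same approach as the paper's: compute the two perceived costs as functions of the split, observe one is increasing and the other decreasing, and balance them at the crossing point. Your framing via the threshold $B$ and the explicit biconditional with the selective-bias definition is slightly more careful than the paper's (which asserts that minimizing the max perceived cost is the right objective without spelling this out), but the underlying argument and computation are identical.
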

\begin{proof}
Suppose we chunk $(u, v)$ into $(u_1, u_2, v)$. First, note that, because $(u, v)$ is on the shortest path in the original graph, no matter how the edge is chunked, the optimal behavior from $u_2$ will be to go to $v$ -- this can only be cheaper than $(u, v)$ in the original graph. Thus, the perceived cost of starting with edge $(u_1, u_2)$ while at vertex $u_1$ is $bc(u_1, u_2) + c(u_2, v) + c(v \to t)$, as the agent naively believes they will behave optimally in the future. This is the only way that we use the fact that $(u, v)$ is on the shortest path. 

The designer wants to minimize the maximum of the perceived cost of
starting with $(u_1, u_2)$ and the perceived cost of starting with $(u_2,
v)$, to best ensure that the agent takes the chunked path. These
perceived costs are $bc(u_1, u_2) + c(u_2, v) + c(v \to t)$ and
$bc(u_2, v) + c(v \to t)$ respectively. 

Let $x = c(u,v)$ represent the total amount of work to be
chunked, and let $x_1$ and $x_2$ represent $c(u_1, u_2)$ and $c(u_2,
v)$ respectively. Note that $x_2 = x - x_1$. We now plug the $x$'s
into the expressions above to get perceived costs of
\begin{align*}
    bx_1 + x - x_1 + c(v \to t) \mbox{ and }\\
    b(x - x_1) + c(v \to t).
\end{align*}
We want to set $x_1$ to minimize the maximum of the two quantities.
That is, we choose $x$ so that
\begin{align*}
    &\argmin_{0 \le x_1 \le x} \max(bx_1 + x - x_1 + c(v \to t), b(x - x_1) +c(v \to t)) \\
    &= \argmin_{0 \le x_1 \le x} \max(bx_1 + x - x_1, b(x - x_1)) \\
    &= \argmin_{0 \le x_1 \le x} \max((b-1)x_1 + x, -bx_1 + bx). 
\end{align*}
Both expressions are linear functions of $x_1$, with the first
increasing and the second decreasing. The minimum of the maximum is
thus where they intersect, that is, when 
$$
  (b-1)x_1 + x = -bx_1 + bx.$$
Simple algebra then shows that 
$$  x_1 = \frac{b-1}{2b-1} x.$$
With this value of $x_1$, the perceived costs starting with $(u_1, u_2)$
and with $(u_2, v)$ are identical. The latter perceived cost is
\begin{align*}
    b(x-x_1) + c(v \to t) &=  bx \cdot \frac{b}{2b-1} + c(v \to t) \\
    &= \frac{b}{2-1/b} \cdot c(u, v)  + c(v \to t) \tag{since $x = c(u, v)$}.
\end{align*}
(It's easy to verify that the former perceived cost matches.) 
Thus, the agent with bias $b$ takes the path $(u_1, u_2, v)$ when an agent with bias $b^* = \frac{b}{2-1/b}$ would have taken $(u, v)$ in the original graph.
\end{proof}

We now state the following theorem, which extends the above results to $k$ chunks. We first state a more general version which will be helpful in the next section. The proof is in the appendix.
\begin{restatable}{theorem}{chunkOptEdge}
    \label{thm:chunk-opt-edge}
    Suppose we partition an edge $(u, v)$ of cost $x$ into $k$ chunks. Let $u_1, \dots, u_k$ represent the vertices in this chunking, and let $c(u_i, u_{i+1}) = x_i$, where, for $1 \le i \le k$, the $x_i$'s are defined below.
    \begin{align*}
            1\le i \le k: x_i = \frac{(b-1)^{k-i}b^{i-1}}{b^k - (b-1)^k} x.
    \end{align*}
    With this chunking, the agent has selective bias $\frac{1}{1 -
      \left(\frac{b-1}{b}\right)^{k}}$. If, with this chunking, the
    shortest path from $u_i$ to $t$ is through $u_{i+1}$ for all $i >
    1$, then this chunking is optimal. 
\end{restatable}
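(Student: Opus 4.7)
The plan is to prove this theorem by induction on $k$, with Lemma~\ref{lem:base_case} as the base case $k=2$ (the case $k=1$ is trivially true, since no chunking induces bias $b = \frac{1}{1 - (b-1)/b}$). The shortest-path hypothesis is used to write down the agent's perceived cost at each intermediate vertex $u_i$: since a naive agent assumes optimal unbiased behavior from $u_{i+1}$ onward and the hypothesis guarantees this continuation is precisely along the remaining chunks, the perceived cost at $u_i$ of starting with $(u_i, u_{i+1})$ equals $b x_i + \sum_{j > i} x_j + c(v \to t)$. The designer's goal is therefore to minimize the maximum of these $k$ perceived costs subject to $\sum_i x_i = x$.

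For the inductive step, I would view any $k$-chunking as a first chunk of size $x_1$ followed by a $(k{-}1)$-chunking of the residual edge $(u_2, v)$ of total cost $x - x_1$. The shortest-path condition on $u_i$ for $i > 1$ restricts to the analogous condition on the residual chunking (now indexed for $i' > 1$), so the induction hypothesis applies: the best residual chunking yields a perceived cost at $u_2$ of $\frac{b^{k-1}}{b^{k-1} - (b-1)^{k-1}} (x - x_1) + c(v \to t)$. Meanwhile the perceived cost at $u$ is $b x_1 + (x - x_1) + c(v \to t)$. Mirroring Lemma~\ref{lem:base_case}, the former is decreasing and the latter increasing in $x_1$, so the maximum is minimized at their intersection. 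Solving gives $x_1 = \frac{(b-1)^{k-1}}{b^k - (b-1)^k} x$, matching the claimed formula for $i=1$, and a common value of $\frac{b^k}{b^k - (b-1)^k} x + c(v \to t)$, corresponding to selective bias $\frac{1}{1 - ((b-1)/b)^k}$. A short algebraic step using the identity $b^k - (b-1)^k - (b-1)^{k-1} = b(b^{k-1} - (b-1)^{k-1})$ confirms that the tail chunk sizes supplied by the induction hypothesis simplify to the stated $x_i = \frac{(b-1)^{k-i} b^{i-1}}{b^k - (b-1)^k} x$ for $i \ge 2$.

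The main obstacle I expect is the optimal-substructure step: one must argue that the tail of any \emph{optimal} $k$-chunking is itself optimal as a $(k{-}1)$-chunking of the residual edge. This is intuitively clear (an improvement to the tail with $x_1$ fixed would improve the whole chunking), but needs care because the relevant notion of ``improvement'' is the perceived cost at $u_2$, which is exactly the quantity that feeds back into the min-max at $u$. Once this is set up carefully, the rest of the proof is a direct mirror of the two-chunk minimization in Lemma~\ref{lem:base_case}. An alternative route would avoid induction: a perturbation argument shows that any optimal chunking must equalize all $k$ perceived costs, producing the linear recurrence $x_{i+1} = \frac{b}{b-1} x_i$ which, together with $\sum_i x_i = x$, yields the formula and the selective bias in one shot.
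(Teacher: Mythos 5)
Your proposal matches the paper's own proof essentially step for step: induction on $k$ with Lemma~\ref{lem:base_case} as the base case, peeling off the first chunk $x_1$, applying the inductive hypothesis to the residual $(k{-}1)$-chunking, and equalizing the perceived cost at $u_1$ (increasing in $x_1$) with the common tail value (decreasing in $x_1$). The optimal-substructure worry you flag is resolved exactly as you suspect — the shortest-path hypothesis for $i > 1$ makes $p(e_1) = (b-1)x_1 + x + c(v \to t)$ depend only on $x_1$ and not on how the tail is split, so fixing $x_1$ and optimizing the tail is clean — and the paper handles it this way as well.
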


The following corollary immediately follows from this theorem.
\begin{corollary}
    For an edge $(u, v)$ on the shortest path, the chunking given
    in \autoref{thm:chunk-opt-edge} is optimal. 
\end{corollary}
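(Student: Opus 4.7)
The plan is to derive the corollary directly from \autoref{thm:chunk-opt-edge} by verifying its hypothesis: that with the prescribed chunking, the shortest path from $u_i$ to $t$ passes through $u_{i+1}$ for every $i > 1$. Once this is checked, the theorem's optimality conclusion applies immediately.

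First, I would recall the structure of the chunked graph from the model in Section 2: each intermediate vertex $u_i$ (for $2 \le i \le k$) inherits every outgoing edge that $u$ had in the original graph, in addition to the chunk-edge $(u_i, u_{i+1})$ of cost $x_i$ (with the convention $u_{k+1} = v$). So ``deviating'' from the chunking at an intermediate $u_i$ offers the same set of alternatives, with the same costs to $t$, as deviating from $u$ in the original graph. Next, I would use the hypothesis that $(u,v)$ lies on a shortest path: this gives $c(u,v) + c(v \to t) \le c(u, w) + c(w \to t)$ for every other edge $(u,w)$ out of $u$, i.e., $x + c(v \to t) \le c(u,w) + c(w \to t)$, where $x = c(u, v)$.

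The key step is a backward induction on $i$, from $i = k$ down to $i = 2$. For the base case $i = k$, the only chunk-option at $u_k$ is $(u_k, v)$, giving total cost $x_k + c(v \to t) \le x + c(v \to t)$, which by the inequality above is at most $c(u,w) + c(w \to t)$ for any alternative $(u,w)$. For the inductive step, assuming the shortest path from $u_{i+1}$ uses $u_{i+2}$, we get $c(u_{i+1} \to t) = \sum_{j=i+1}^{k} x_j + c(v \to t)$, so the chunk-option at $u_i$ costs $\sum_{j=i}^{k} x_j + c(v \to t) \le x + c(v \to t)$, since the $x_j$ are nonnegative and sum to $x$; again this is at most the cost of any alternative. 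Hence the shortest path from $u_i$ uses $(u_i, u_{i+1})$, completing the induction.

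The main ``obstacle'' is really just bookkeeping---lining up the indexing and the set of edges available at each intermediate vertex---rather than any substantive mathematical difficulty. The single algebraic fact driving the whole argument is that any partial sum of the chunk costs is bounded by the total $x$, which is immediate from the chunking constraint. With the hypothesis of \autoref{thm:chunk-opt-edge} verified, optimality of the prescribed chunking follows at once.
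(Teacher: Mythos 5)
Your proof is correct and follows essentially the same route as the paper: verify the hypothesis of \autoref{thm:chunk-opt-edge} by noting that since chunking does not increase total edge cost and $(u,v)$ is on a shortest path, every intermediate vertex's cheapest route to $t$ stays on the chunking. The paper states this in two sentences; your backward induction is simply a more explicit unpacking of the same observation.
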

\begin{proof}
  No matter how an edge on the shortest path is chunked, the
  shortest path from any chunk to $t$ must be through the next chunk,
  as chunking does not increase the total cost of the edge. This
  satisfies the condition in the theorem to get optimality. 
\end{proof}

The corollary says that the designer is not best served by evenly
splitting the cost between the edges -- the designer should lower the
cost of earlier edges. When they do so, the agent will behave as if
they had selective bias $\frac{1}{1 - \left(\frac{b-1}{b}\right)^{k}}$
in the original graph towards edge $(u, v)$ (while having bias $b$
towards all other edges). 

For a simple application of this corollary, suppose the agent's bias
is $2$. Then, splitting each edge on the shortest path once (so
$k=1$) causes the agent to behave as if they have bias $4/3$ on the
shortest path in the unmodified graph (and they still perceive other
edges with bias $2$). 

\subsection{Edges not on the shortest path; a motivating example}
We first motivate our results. For edges that are on the shortest
path, it's clear why a designer would want to chunk them -- they want
to convince agents to incur as little cost as possible. However, in
the next section we consider the natural problem where the
designer has a fixed chunking budget $k$. In such cases, our earlier
results imply that if the agent's bias is sufficiently high, it may
not be possible to convince them to stick to the shortest
path. However, the designer may be able to lower the agent's cost by
chunking other edges, which are not on the shortest path. Consider the
following graph as an example. 

\ctikzfig{non-optimal-chunk-req}
Suppose that the agent has bias $2$. Let $P_w, P_v, P_z$ represent the
paths to $t$ through $w, v$, or $z$ respectively. The agent's bias causes
them to take $P_z$, the most expensive path. How should we best use a
fixed budget of 3 chunks to lower the agent's cost? First, note that
by \autoref{thm:chunk-opt-edge}, the optimal chunking of $(u, w)$
induces a selective bias of $8/7$. Even with this optimal chunking,
the agent would still prefer $P_z$, as $8/7 \cdot 65 + 2 > 76$. So, we
cannot lower the agent's cost by chunking $(u, w)$. Will chunking $(u,
v)$ instead help? 

Note that, for edges not on the shortest path (which we will
sometimes abbreviate to ``non short-path edges''), we could still apply the
chunking from \autoref{thm:chunk-opt-edge} to get the selective bias
described in that theorem. For $(u, v)$, \autoref{thm:chunk-opt-edge}
tells us to set $x_1 = 2, x_2 = 4$, and $x_3 = 8$, resulting in the
following graph. 

\ctikzfig{non-opt-chunked-new}
Under this chunking, the cheapest path from $u_1$, $u_2$, or $u_3$ to
$t$ all go through $w$. The agent's perceived costs of starting with
the edges in the chunking are, in order, $71, 75$, and $76.1$
(so the agent would take edge $(u_3, z)$ instead of sticking to the 
chunking). If $(u, v)$ was a shortest edge in the original graph (for
example, if $w$ did not exist), then the same chunking would have
identical perceived costs of $76.1$ starting with all edges. But when the
cheapest path from a chunked vertex to $t$ is through the external
vertex $w$, the perceived cost of starting with early edges
decreases. An optimal 
chunking should thus increase the cost of early edges and decrease the
cost of later edges to result in more balanced perceived costs. In the
example above, if we split the costs so that $c(u_1, u_2) = c(u_2,
u_3) = 3.55$ and 
$c(u_3, v) = 6.9$, then the cheapest path from $u_1$ or $u_2$ to $t$
is through $w$, while the cheapest path from $u_3$ to $t$ is through
$v$. Thus, the perceived costs of starting with the first edge and the
second edge are both $74.1$, and the perceived cost of starting with
$(u_3, v)$ is 
$73.9$. This is the optimal chunking of $(u, v)$, and it improves the
agent's cost by convincing them to take $P_v$ instead of $P_z$. Thus,
this example shows that we have good reason to chunk non short-path
edges, and our existing chunking results are insufficient for
such edges.  

\subsection{Optimally chunking for edges not on the shortest path}

As the example in the previous section suggests, it's important to
keep track of the shortest path from chunking vertices to $t$. Note
that if the shortest path from $u_i$ to $t$ is through $w$ rather than
$u_{i+1}$, then the shortest path from any $u_j$ to $t$, where $j <
i$, is also through $w$.

Thus, for any chunking, define $u_\tau$ as the \emph{transition vertex}:
the last vertex where the shortest path is through $w$, where $w$ is
the next vertex on the shortest path from $u$ to $t$ in the original
graph.
If the shortest path always follows the chunking, then define $\tau$
as $0$. On the other hand, if the shortest path is always through
external vertices, then $\tau = k$. For a shortest-path edge, all
chunkings have $\tau = 0$ (and thus the optimal chunking is given by
\autoref{thm:chunk-opt-edge}). But for non short-path edges, the optimal
chunking may have a higher value of $\tau$ (in the previous example,
the optimal chunking had transition vertex $\tau = 2$). Though the
case where $\tau = 0$ admits a nice closed form, in general we provide
an algorithm that determines the optimal chunking by trying all
possible values of $\tau$. 

We can think of $\tau$ as the smallest value such that,
for all neighbors $w$ of $u$, we have $c(u, w) + c(w \to t) \ge c(u_{\tau+1},
u_{\tau+2}) 
+ c(u_{\tau+2} 
\to t)$. We can rewrite this as follows, using the notation of
\autoref{thm:chunk-opt-edge}: 
\begin{equation}\label{eq1}
  \begin{array}{ll}
    c(u, w) + c(w \to t) &\ge c(u_{\tau+1}, u_{\tau+2}) + c(u_{\tau+2} \to t) \\
    &= x_{\tau+1} + \sum_{i=\tau+2}^k x_k + c(v \to t) \\
        &= x - \sum_{i=1}^{\tau} x_i + c(v \to t). 
  \end{array}
  \end{equation}
Let $\delta = x + c(v \to t) - (c(u, w) + c(w \to t))$ represent the
difference between the cost of the cheapest path from $u$ to $t$ through $v$
and the cost of the
cheapest path from $u$ through $w$ in the original graph (in the
previous example, $\delta = 74.1 - 67 = 7.1$). Then \autoref{eq1} is
equivalent to $\sum_{i=1}^{\tau} x_i \ge \delta$. 
For an edge on the shortest path, $\delta$ is negative, which
is why $\tau$ must be equal to $0$ for those edges.
Moreover, if $\delta \le x$, then it is possible to split the costs
among the edges to allow any choice of $\tau$: we simply put at least $\delta$
of the cost on the first $\tau$ edges 
while ensuring that the sum of costs of the first $\tau-1$ edges does not
exceed $\delta$. So, in addition to requiring that $\sum_{i=1}^{\tau}
x_i \ge \delta$, we also need $\sum_{i=1}^{\tau-1} x_i < \delta$.  

Before we get to our main result, we first introduce some more definitions and notation. 
Let $e_i = (u_i, u_{i+1})$ be the $i$th edge of a chunking, and let $p(e_i) = bx_i + c(u_{i+1} \to t)$ represent the perceived cost of starting with edge $e_i$. 
Let the \textit{bottleneck} of a chunking be the highest
perceived cost starting with any edge on that chunking (i.e. $\max_i
p(e_i)$). It's easy to see that the bottleneck of a chunking
determines the selective bias the chunking will induce; any agent who
will get past the bottleneck will complete the entire chunked path. So
an optimal chunking is a chunking with the smallest bottleneck. Finally, let a $k$-chunking of an edge be any chunking that splits the edge into $k$ chunks.

We now state some useful lemmas; their proofs can be found in the appendix.
\begin{restatable}{lemma}{chunkImprove} 
    \label{lm:chunking_improve}
    Suppose that $C$ is a chunking with bottleneck $\beta$. If another
    chunking $O$ has bottleneck $\beta' < \beta$ and the same
    transition vertex $\tau$, then $O$ must lower the cost of all
    edges that are bottlenecks in $C$, and thus raise the cost of the
    remaining edges. 
    
\end{restatable}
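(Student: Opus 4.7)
The plan is to analyze how each edge's perceived cost changes between $C$ and $O$. Define $\Delta_i = x_i^O - x_i^C$; since both chunkings sum to $c(u,v)$, we have $\sum_i \Delta_i = 0$. Because $C$ and $O$ share the same transition vertex $\tau$, the quantity $c(u_{i+1} \to t)$ is computed identically in both chunkings: it equals the constant $c(u,w) + c(w \to t)$ (independent of the chunking) when $i+1 \le \tau$ (call this Case 1), and it equals $\sum_{j > i} x_j + c(v \to t)$ when $i+1 > \tau$ (Case 2). Subtracting, $\Delta p(e_i) := p_O(e_i) - p_C(e_i)$ equals $b\Delta_i$ in Case 1, and $b\Delta_i + \sum_{j > i}\Delta_j$ in Case 2, which using $\sum_j \Delta_j = 0$ rewrites as $(b-1)\Delta_i - \sum_{j<i}\Delta_j$.

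Every bottleneck edge $e_i$ of $C$ satisfies $p_O(e_i) \le \beta' < \beta = p_C(e_i)$, so $\Delta p(e_i) < 0$. For a Case 1 bottleneck edge this gives $b\Delta_i < 0$ directly, hence $\Delta_i < 0$. For Case 2 bottleneck edges I would proceed by reverse induction on the edge index. The base case is the last chunk $e_k$, whose perceived cost $bx_k + c(v \to t)$ depends only on $x_k$; if $e_k$ is a bottleneck then $\Delta p(e_k) = b\Delta_k < 0$, so $\Delta_k < 0$. For a Case 2 bottleneck $e_i$ with $i < k$, the inductive step combines the inequality $b\Delta_i + \sum_{j > i}\Delta_j < 0$ with the previously established negative signs of later bottleneck $\Delta_j$'s and the conservation identity $\sum_j \Delta_j = 0$.

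The main obstacle is the inductive step when some $e_j$ with $j > i$ are non-bottleneck in $C$: their $\Delta_j$'s are not immediately sign-controlled, so $\sum_{j > i}\Delta_j$ is not obviously negative. I would close this by using the uniform bounds $p_O(e_j) \le \beta'$, which apply to \emph{every} edge of $O$, and translating them through the Case 2 formula into bounds on each non-bottleneck $\Delta_j$; chaining these bounds telescopically with the bottleneck inequalities should force $\Delta_i < 0$. Once $\Delta_i < 0$ is established for every bottleneck edge of $C$, the identity $\sum_i \Delta_i = 0$ immediately yields $\sum_{i \notin B}\Delta_i > 0$, where $B$ is the set of bottleneck edges of $C$, which is the ``thus raise the cost of the remaining edges'' conclusion.
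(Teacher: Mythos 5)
Your proposal takes a genuinely different route from the paper and is, to its credit, more honest about where the difficulty lies. The paper's proof disposes of the whole lemma in one line by asserting $c(u_{i+1}^O \to t) = c(u_{i+1}^C \to t)$ "as both chunkings have the same transition vertex $\tau$," and then reads off $x_i^O < x_i^C$ from $p(e_i^O) < p(e_i^C)$. That invariance is correct only for what you call Case 1 ($i < \tau$), where $c(u_{i+1}\to t) = c(u,w)+c(w\to t)$ is genuinely chunking-independent; for Case 2 ($i \ge \tau$) it equals $\sum_{j>i}x_j + c(v\to t)$, which depends on the chunking. So the paper's argument glosses over precisely the cases you flag as problematic, and your split into Cases 1 and 2 is the right structural observation.

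The problem is that the plan you sketch for Case 2 cannot be made to work, and not merely because of a missing estimate: the per-edge conclusion ``$\Delta_i<0$ for every bottleneck $i$'' is simply false once a non-bottleneck edge sits \emph{upstream} of a Case-2 bottleneck. Concretely, take $b=2$, $\tau=0$, $c(v\to t)=0$, $x=3.4$, $C=(0.4,\,1,\,2)$ with perceived costs $(3.8,\,4,\,4)$ so $\beta=4$ on bottleneck edges $\{2,3\}$; then $O=(0.55,\,1.1,\,1.75)$ has perceived costs $(3.95,\,3.95,\,3.5)$, bottleneck $3.95<\beta$, same $\tau$, yet $x_2^O=1.1>1=x_2^C$. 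Your Case-2 identity $\Delta p(e_i)=(b-1)\Delta_i-\sum_{j<i}\Delta_j$ makes the failure mode transparent: at $i=2$ it only gives $(b-1)\Delta_2<\Delta_1$, and since $e_1$ is non-bottleneck, $\Delta_1$ can be positive (here $+0.15$), allowing $\Delta_2>0$. Also notice that your \emph{reverse} induction is the wrong direction even in the benign case: with $\Delta_{j}<0$ for all $j>i$, the inequality $b\Delta_i + \sum_{j>i}\Delta_j<0$ gives $b\Delta_i<-\sum_{j>i}\Delta_j$, a positive upper bound, which says nothing about the sign of $\Delta_i$. The usable direction is forward (as in the paper's proof of the \emph{other} lemma): $(b-1)\Delta_i<\sum_{j<i}\Delta_j$, which forces $\Delta_i<0$ provided every index $j<i$ already has $\Delta_j<0$, i.e.\ provided the bottleneck set is an initial segment.

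What is actually true, and what the paper's Theorem on the algorithm really needs, is weaker. When the bottleneck set is an initial segment $\{1,\dots,\tau\}$, forward induction yields the per-edge claim. When it is a final segment $\{\tau,\dots,k\}$, the per-edge claim can be bypassed with a tail-sum telescope: setting $Q_i=\sum_{j\ge i}(x_j^O-x_j^C)$, the bottleneck condition on $e_i$ gives $bQ_i<(b-1)Q_{i+1}$, and $Q_{k+1}=0$ forces $Q_i<0$ by backward induction, hence $\sum_{j<\tau}x_j^O>\sum_{j<\tau}x_j^C$, which is exactly the partial-sum fact invoked. The lemma as stated, without any structural assumption on where the bottleneck edges sit, does not hold, and no amount of chaining the $p_O(e_j)\le\beta'$ bounds will rescue a false claim.
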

Though the lemma seems obvious at first glance, it relies crucially on
the fact that $C$ and $O$ have the same transition vertex $\tau$. It's
possible for $O$ to not lower the cost of all edges that are
bottlenecks in $C$ but still get a lower bottleneck cost if $O$ has a
different transition point. But with $\tau$ fixed, the difference
between the perceived costs starting with any edge in $C$ compared to
$O$ depends only on the cost the chunkings assign to the edge. 
\begin{restatable}{lemma}{chunkingOpt}
    \label{lm:chunking_opt}
    If a chunking $C$ has the same perceived cost starting with
    any edge in the chunking, then $C$ is optimal. 
\end{restatable}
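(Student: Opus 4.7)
I argue by contradiction: suppose $O$ is a chunking of $(u, v)$ with cost vector $(y_1, \dots, y_k)$ summing to $c(u, v)$ and bottleneck $\beta' < \beta$. The argument splits on whether $O$ shares the transition vertex $\tau$ of $C$.

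If $O$ and $C$ have the same transition vertex, I apply \autoref{lm:chunking_improve} directly. Since every edge of $C$ has perceived cost equal to the bottleneck $\beta$, every edge of $C$ is a bottleneck edge, so the lemma forces $O$ to strictly lower the cost of every edge. This contradicts $\sum_i y_i = \sum_i x_i = c(u, v)$.

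When $\tau^O \neq \tau^C$, I decompose $O$'s edges into an external regime ($i < \tau^O$, where $u_{i+1}$ goes through $w$) and a chunking regime ($i \geq \tau^O$, where $u_{i+1}$ follows the chunking). In the external regime, $p_O(e_i) = b y_i + A$ with $A = c(u,w) + c(w \to t)$, so the requirement $p_O(e_i) < \beta$ forces $y_i < (\beta - A)/b$, which by the balance of $C$ equals the uniform external cost $x_{\text{early}}$ of $C$; summing gives $\sum_{i < \tau^O} y_i \le (\tau^O - 1)\,x_{\text{early}}$. In the chunking regime, $O$'s sub-chunking from $u_{\tau^O}$ to $v$ (with total cost $T_O = \sum_{i \ge \tau^O} y_i$ and $m_O = k - \tau^O + 1$ chunks) satisfies the hypothesis of \autoref{thm:chunk-opt-edge}, since every chunked vertex beyond $u_{\tau^O}$ routes through the next chunk. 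Applying that theorem as a lower bound yields $\max_{i \geq \tau^O} p_O(e_i) \geq T_O / (1 - ((b-1)/b)^{m_O}) + B$, where $B = c(v \to t)$; requiring this to be less than $\beta$ forces $T_O < (\beta - B)(1 - ((b-1)/b)^{m_O})$. Summing the two bounds gives $c(u,v) < g(\tau^O)$, where I define $g(\tau) := (\tau - 1)\,x_{\text{early}} + (\beta - B)(1 - ((b-1)/b)^{k - \tau + 1})$.

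The main obstacle is then to prove $g(\tau) \leq c(u, v)$ for all valid $\tau$, with equality only at $\tau = \tau^C$. The identity $g(\tau^C) = c(u, v)$ is immediate from summing the costs of $C$ across its two regimes, using the standard balanced-chunking recursion to identify $(\beta - B)(1 - ((b-1)/b)^{m_C}) = \sum_{i \ge \tau^C} x_i$. For strict inequality elsewhere, I compute the discrete differences $g(\tau+1) - g(\tau) = x_{\text{early}} - (\beta - B)((b-1)/b)^{k-\tau}/b$ and note that they are strictly decreasing in $\tau$, so $g$ is strictly concave. Finally, the two defining conditions of $\tau^C$ (namely $u_{\tau^C}$ prefers $w$, and $u_{\tau^C + 1}$ prefers the chunking), after substituting the closed-form costs $x_{\tau^C} = (\beta - B)(b-1)^{m_C - 1}/b^{m_C}$ from the balanced recursion, translate directly to $g(\tau^C) > g(\tau^C - 1)$ and $g(\tau^C + 1) < g(\tau^C)$, pinning the unique peak of $g$ at $\tau^C$. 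Hence $g(\tau^O) < g(\tau^C) = c(u,v)$ for $\tau^O \neq \tau^C$, contradicting the displayed inequality and completing the proof.
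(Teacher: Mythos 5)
Your proposal is a genuinely different route from the paper's. The paper proves the lemma by a single clean induction: it shows directly that $\sum_{i=1}^j x_i^C > \sum_{i=1}^j x_i^O$ for every prefix length $j$, reading this off from the pointwise inequality $p(e_j^C)>p(e_j^O)$ and the shape of the perceived-cost formula $p(e_j)=bx_j+\min(A,\,x-\sum_{i\le j}x_i+B)$; summing to $j=k$ then says $O$ fails to allocate the full cost $x$. This argument makes no case distinction on transition vertices at all and does not invoke \autoref{lm:chunking_improve} or \autoref{thm:chunk-opt-edge}. Your proof instead splits on whether $\tau^O=\tau^C$, dispatches the equal-$\tau$ case with \autoref{lm:chunking_improve}, and handles the unequal case via a potential function $g(\tau)$ whose strict concavity and peak at $\tau^C$ you derive from the balanced-chunking closed forms and the two transition inequalities.

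Your argument is essentially sound — the case-1 application of \autoref{lm:chunking_improve} is correct since every edge of $C$ is a bottleneck, and the case-2 concavity computation is valid: the discrete difference $g(\tau+1)-g(\tau)=x_{\text{early}}-(\beta-B)\bigl(\tfrac{b-1}{b}\bigr)^{k-\tau}/b$ is indeed strictly decreasing, $g(\tau^C)=c(u,v)$ does follow from summing $C$'s two regimes, and the transition conditions at $\tau^C$ do pin it as the argmax. What your write-up leaves implicit are the boundary cases $\tau^O\in\{0,1\}$ (where the ``external'' regime is empty and the formula $(\tau-1)x_{\text{early}}$ and $m_O=k-\tau+1$ misindex by one; note that $\tau=0$ and $\tau=1$ produce identical perceived-cost structure because $p(e_1)$ depends only on $u_2$), and the case $\tau^C\le 1$ where $x_{\text{early}}$ is undefined because $C$ has no external edges — there the lemma follows directly from the optimality clause of \autoref{thm:chunk-opt-edge}. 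These are patchable, but the upshot is that your proof pays a real complexity cost: it needs \autoref{lm:chunking_improve}, the optimality of \autoref{thm:chunk-opt-edge} as a lower bound, a concavity argument, and boundary-case bookkeeping, whereas the paper's prefix-sum induction is self-contained, uniform across all transition vertices, and about half the length.
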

\autoref{lm:chunking_opt} guides the algorithm, which tries to ensure
that the perceived costs starting with edges in $C$ are as close as
possible. At a high level, the algorithm enumerates over all values of
$\tau \in \{1, \dots, k\}$. We start with a chunking where the first
$\tau$ edges are assigned cost $\delta/\tau$, which ensures that they
all have the same perceived cost $\alpha$. We then use
\autoref{thm:chunk-opt-edge} to distribute the remaining cost over the
last $k-\tau$ edges, which also equalizes their perceived cost to some
$\beta$. If $\alpha \ge \beta$, we argue that this chunking is optimal
for the fixed $\tau$. Otherwise, we make some local updates to the
chunking, which brings $\beta$ as close to $\alpha$ as possible while
maintaining the invariant that $\beta \ge \alpha$. The full description of this algorithm, \autoref{alg:opt-edge-chunk}, can be found in the appendix.

\begin{restatable}{theorem}{algChunkingOpt}
    \label{thm:algChunkingOpt}
    Given any edge $(u, v)$, we can determine the optimal $k$-chunking
    in $O(k)$ time, assuming that the shortest paths from $u \to t$
    and $v \to t$ have been precomputed. 
\end{restatable}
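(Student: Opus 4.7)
The plan is to analyze the algorithm sketched in the excerpt (\autoref{alg:opt-edge-chunk} in the appendix), which enumerates every possible transition vertex $\tau \in \{0, 1, \ldots, k\}$, computes in $O(1)$ the best bottleneck achievable by a chunking with that transition, and returns the best chunking overall. Since there are $k+1$ choices of $\tau$, correctness together with an $O(1)$ per-$\tau$ cost immediately yields the claimed $O(k)$ running time.

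For fixed $\tau$, I would argue that the optimal chunking has a two-part structure: the first $\tau$ edges carry a total cost $s_1 \ge \delta$, and the last $k-\tau$ edges carry the remaining $x - s_1$. Every first-part perceived cost reduces to $b x_i + \phi$, where $\phi = c(u,w)+c(w\to t)$: for $i<\tau$ the next chunking vertex's shortest path runs through $w$; for $i=\tau$, the identity $\phi = x-\delta+c(v\to t)$ (from the definition of $\delta$) rewrites $e_\tau$'s perceived cost in the same form when the first-$\tau$ sum equals $s_1$. Meanwhile, the last $k-\tau$ edges, whose starting vertex $u_{\tau+1}$ has shortest path along the chunking, can be split optimally via \autoref{thm:chunk-opt-edge}, equalizing their perceived costs to $\beta(s_1) = s(k-\tau)(x-s_1)+c(v\to t)$, where $s(m)=1/(1-((b-1)/b)^m)$ is the selective bias from that theorem. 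By \autoref{lm:chunking_opt}, equalization within each part is optimal.

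The first-part bottleneck $\alpha(s_1)$ is monotonically increasing in $s_1$, while $\beta(s_1)$ is monotonically decreasing, so $\max(\alpha,\beta)$ is minimized either at the closed-form solution of $\alpha(s_1) = \beta(s_1)$, or at the boundary $s_1 = \delta$ when $\alpha(\delta) \ge \beta(\delta)$. In the latter case, \autoref{lm:chunking_improve} rules out further improvement, because reducing any first-part cost requires shrinking $s_1$ below $\delta$ (breaking the transition constraint) and pure rebalancing within an already-equal first part cannot lower $\alpha$. In the former case, we solve a single linear equation in the shift $\epsilon = s_1 - \delta$, which is $O(1)$ arithmetic. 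Maintaining $s(k-\tau)$ incrementally across iterations (multiplying by $(b-1)/b$ as $\tau$ decreases by one) keeps each iteration truly constant-time, so the total cost is $O(k)$.

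The main obstacle is pinning down the exact local update when $s_1 > \delta$: absorbing extra cost into the first part must not simply dump it onto $x_\tau$, since $e_\tau$'s perceived cost depends on $s_1$ differently than the others through the $x-s_1$ term in $c(u_{\tau+1}\to t)$; rebalancing to keep first-part perceived costs equal slightly shifts every first-$\tau$ edge. This bookkeeping must be paired with a verification that the transition vertex is preserved, i.e.\ $\sum_{j<\tau}x_j < \delta \le \sum_{j\le\tau} x_j$, and that the hypothesis of \autoref{thm:chunk-opt-edge} still holds on the second segment, i.e.\ no intermediate vertex there prefers $w$ to its next chunk. The second point follows because the theorem's formula assigns the smallest cost to the earliest edge of that segment, so the shortest path from every vertex in the segment remains along the chunking whenever $\tau$ was chosen to be the actual transition vertex; any $\tau$ for which feasibility fails yields a candidate that cannot be optimal, and since the algorithm tries every $\tau$, the true optimum is still returned.
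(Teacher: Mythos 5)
Your high-level plan matches the paper's: enumerate $\tau$, compute a closed-form best bottleneck for each fixed $\tau$ in $O(1)$, and return the best over all $\tau$. The two-part decomposition (first $\tau$ edges carrying some total $s_1 \ge \delta$, last $k-\tau$ edges chunked via \autoref{thm:chunk-opt-edge}) and the monotonicity argument for $\alpha$ vs.\ $\beta$ are also aligned with what the paper does for its initial chunking. But there are two genuine problems. First, your claim that every first-part perceived cost ``reduces to $bx_i + \phi$'' is wrong for $i=\tau$: we have $p(e_\tau) = bx_\tau + c(u_{\tau+1}\to t)$, and since $u_{\tau+1}$ is past the transition vertex, $c(u_{\tau+1}\to t) = x - s_1 + c(v\to t)$, which equals $\phi$ only when $s_1 = \delta$ exactly. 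For $s_1 > \delta$ (the interesting subcase) the constant term in $p(e_\tau)$ differs from $\phi$, so $e_\tau$ is \emph{not} interchangeable with $e_1,\dots,e_{\tau-1}$. The paper's key observation (its ``Case 2'') is precisely that $e_\tau$'s perceived cost has this dual character and can be grouped with \emph{either} the earlier or the later edges; the construction in that case explicitly moves $e_\tau$ to the later group, parameterizes by the per-edge cost $y$ on the first $\tau-1$ edges, solves a linear equation for $y^*$, and caps at $y = \min(y^*, \delta/(\tau-1))$ to preserve the transition constraint.

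Second, and more seriously, you flag the ``main obstacle''---how to absorb extra cost into the first part while respecting $\sum_{j<\tau}x_j \le \delta \le \sum_{j\le\tau}x_j$ and keeping the first-part perceived costs balanced---but you do not resolve it. That obstacle is exactly where the bulk of the paper's proof lives: Claim \ref{cl:final_alpha}, the argument that the constructed chunking truly has transition vertex $\tau$, and the optimality argument via \autoref{lm:chunking_improve} are all needed to close this case. Your remark that ``any $\tau$ for which feasibility fails yields a candidate that cannot be optimal'' also doesn't quite substitute for this: if a candidate chunking's actual transition vertex differs from the $\tau$ you assumed, the closed-form $\alpha$ and $\beta$ you computed for it are simply wrong, and the algorithm could return an incorrect bottleneck and thus a suboptimal chunking. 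Finally, you omit the edge cases $\tau = k$ and $\delta > x$, both of which the paper handles separately because the two-part structure degenerates there. In short, the algorithmic skeleton and the runtime bookkeeping are right, but the proof of correctness for the hard case is missing, along with the edge cases.
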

\begin{proof}[Proof Sketch]
    For a fixed $\tau$, we start by setting $x_1 = x_2 = \dots = x_\tau = \delta/\tau$, and chunk the remaining $x - \delta$ cost over the remaining $k - \tau$ edges according to \autoref{thm:chunk-opt-edge}. Doing so ensures that $p(e_i) = \alpha$ for all $i \le \tau$ and that $p(e_i) = \beta$ for all $i > \tau$ ($\alpha$ and $\beta$ are defined in the appendix). If $\alpha = \beta$, by \autoref{lm:chunking_opt} we're done. In the case where $\alpha > \beta$, we show that we're done for this fixed $\tau$.

    The case where $\beta > \alpha$ is the bulk of the proof. The key is that $p(e_\tau)$ can be grouped into \textit{either} the earlier or later edges. Since $\beta > \alpha$, we carefully increase the cost of the first $\tau - 1$ edges and decrease the cost of the later edges to produce the optimal chunking for this value of $\tau$.
\end{proof}
\DontPrintSemicolon

\section{Optimal Chunking in Task Graphs}
In the previous section, we focused on optimally chunking a single
edge. One reason \textit{why} a task designer might want to do that is
to convince agents to take much cheaper paths through the graph, by
chunking the right edges. In this section, we assume that the designer
can chunk any edge in the graph, but can place only a limited number
of chunks (their chunking ``budget''). Which edges should they chunk
to ensure that the present-biased agent takes as cheap a path as
possible, and how should they chunk those edges? 

We first answer the latter question. Is lowering the agent's selective
bias towards an edge as much as possible (i.e., optimally chunking
that edge) always the best way to reduce their overall cost? Though
this might seem obviously true, a surprising fact is that a
present-biased agent's cost is not monotone in their bias; a smaller
bias may sometimes increase their total cost
\citep{kleinberg2016sophisticated}. Despite this, when trying to
minimize the agent's cost, the designer should optimally chunk any
edge they want to chunk (e.g., by using
\autoref{alg:opt-edge-chunk}). The only challenge is in finding 
which edges to chunk. 

To see why this is true, first note that chunking an edge $(u, v)$
will not change its overall cost, and thus will not impact the agent's
decisions unless they are at $u$. Second, it's easy to see that
chunking cannot \textit{increase} one's selective bias, as no edge in
the chunking can have more cost than the original edge cost. Thus, any
chunking of edge $(u, v)$ serves to convince the agent to take
$(u, v)$. And the best way to accomplish that is to minimize the
agent's perceived cost starting with that chunked edge, which is
exactly what an optimal edge-chunking does.  

\subsection{Local Constraints}
We consider two types of constraints on the designer. We call the
first a \emph{local} constraint; in this case the designer can break
any set of edges into up to $k$ chunks, for some parameter $k$.  If we
think of edges as representing relatively large subtasks, then this
just says that any relatively large subtask can be split into up to
$k$ smaller subtasks.  We call the second a \emph{global} constraint:
in this case, the designer gets a budget of $k$ chunks, and can use no
more than $k$ chunks altogether.

In this section we consider local constraints.
A naive approach would be to just optimally chunk every edge into $k$
chunks, using our earlier results. But this wouldn't necessarily give the
best overall chunking for the graph. Why not? The intuition is that we
want the agent's perceived cost of the path that the designer actually
wants the agent to use to be low. We are better served by \textit{not} 
chunking edges away from this path, so that the agent is not tempted
to deviate. So at a high level, the algorithm first figures out the
cheapest \textit{feasible} path for the agent (given $k$), and then
uses the optimal edge-chunking algorithm to actually chunk this path. 

\begin{theorem}\label{thm:local-budget}
    Given any task graph $G = (V, E)$ and a local constraint $k$,
    we can optimally chunk $G$ with at most $|E|$ applications of
    \autoref{alg:opt-edge-chunk}, for a total runtime of $O(|E|k +
    |V|)$. 
\end{theorem}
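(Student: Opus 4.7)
The plan is to reduce the problem to a filtered shortest-path computation: identify those edges that can be \emph{coerced} into being followed by the agent via chunking, and then select the cheapest $s$-to-$t$ path using only such edges. First, in $O(|V|+|E|)$ time I compute the shortest-path distances $c(v \to t)$ for every $v$ via a reverse topological pass, and also precompute the two smallest values of $b\,c(u,w)+c(w \to t)$ among out-neighbors of each $u$; this lets me evaluate $\mu(u,v) := \min_{w \ne v}\bigl[b\,c(u,w)+c(w \to t)\bigr]$ in $O(1)$ per query.

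Next, for each edge $(u,v)\in E$, I invoke \autoref{alg:opt-edge-chunk}, which by \autoref{thm:algChunkingOpt} returns in $O(k)$ time the minimum bottleneck $\hat p(u,v)$ achievable with $k$ chunks. I mark $(u,v)$ \emph{viable} iff $\hat p(u,v) \le \mu(u,v)$. Let $G'$ be the subgraph of viable edges, weighted by original edge costs; I find the minimum-cost $s$-to-$t$ path $P$ in $G'$ via one more topological pass, and chunk only the edges of $P$ using their already-computed optimal chunkings. The total runtime is $O(|E|k + |V|)$, with at most $|E|$ calls to \autoref{alg:opt-edge-chunk}.

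Feasibility follows directly from viability: at each vertex $u \in P$ (and at every intermediate chunked vertex along the $(u,v)$ chunk), the agent's perceived cost of continuing the chunk is at most $\hat p(u,v) \le \mu(u,v)$, matching or beating every alternative. The tie-breaking rule from Section 2 then guarantees that the agent completes each chunked edge, and by induction traverses all of $P$.

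The main obstacle is optimality. Fix any chunking $\mathcal{C}$ that induces the agent to take some path $P'$; I must show $\mathrm{cost}(P') \ge \mathrm{cost}(P)$, which reduces to showing every edge of $P'$ is viable. Consider $(u,v) \in P'$. The agent reaches $v$ from $u$ under $\mathcal{C}$, so the bottleneck perceived cost $\beta^{\mathcal{C}}(u,v)$ along this traversal (or simply $b\,c(u,v)+c(v\to t)$ if $\mathcal{C}$ leaves $(u,v)$ unchunked) is at most the perceived cost under $\mathcal{C}$ of every alternative $(u,w)$. A brief case analysis on whether $\mathcal{C}$ chunks $(u,w)$---using $x_1 \le c(u,w)$ and $c(w_1 \to t) \le c(u,w)-x_1 + c(w \to t)$ to bound the in-chunk branch---shows that $p^{\mathcal{C}}(u,w) \le b\,c(u,w)+c(w \to t)$ regardless. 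Combined with the optimality bound $\beta^{\mathcal{C}}(u,v) \ge \hat p(u,v)$, this yields $\hat p(u,v) \le b\,c(u,w)+c(w \to t)$ for every $w \ne v$, so $(u,v)$ is viable. Hence $P'$ is a path in $G'$ and $\mathrm{cost}(P') \ge \mathrm{cost}(P)$, completing the argument.
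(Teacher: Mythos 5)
Your proof is correct and takes essentially the same approach as the paper's: filter out edges whose minimum $k$-chunking bottleneck exceeds the agent's best alternative at the source vertex, then compute a shortest path in the filtered subgraph and chunk each of its edges with \autoref{alg:opt-edge-chunk}. Your viability threshold $\mu(u,v)$ (minimizing over alternatives $w\neq v$) is equivalent to the paper's $\alpha_u$ (minimizing over all out-neighbors of $u$, including $v$), since an edge that is the agent's default choice is always viable under either criterion; and your explicit check in the optimality argument---that chunking an alternative $(u,w)$ can only lower its perceived cost, so off-path chunking never helps persuade the agent toward $(u,v)$---is a worthwhile elaboration of a step the paper leaves implicit.
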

\begin{proof}
    First, we can use well-known algorithms to find the costs of the shortest path from any node to $t$ in time $O(|E| + |V|)$, since $G$ is a directed, acyclic graph \cite{cormenAlgorithms}.
Given a vertex $u$, let $w = \argmin_{v: (u, v) \in E} bc(u, v) +
    c(v \to t)$ be the vertex that the present-biased agent would go
    to without any chunking. Further, let $\alpha_u = p(u, w)$ be the
    perceived cost of starting with edge $(u, w)$. Let $v \neq w$ be
    an arbitrary out-neighbor of $u$ (i.e., a vertex $v$ such that there
    is an edge $(u,v)$). \autoref{alg:opt-edge-chunk}
        gives us the lowest possible bottleneck cost of a $k$-chunking
    of $(u,v)$; denote this as $\beta_{u, v}$. If $\beta_{u, v} \le
    \alpha_u$, the agent can be made to take $(u, v)$. If not, then
    they won't take $(u, v)$ under any $k$-chunking. 

The algorithm is straightforward. At every vertex $u$,
    determine $\alpha_u$ as well as $\beta_{u, v}$ for all
        out-neighbors $v$ of $u$. If $\beta_{u, v} > \alpha_u$, remove edge $(u,
    v)$ from the graph. Call the resulting graph $G'$. Then, simply
        compute the shortest path in $G'$, and chunk every edge on that
    path with \autoref{alg:opt-edge-chunk}. 

    There will always be an $s$-$t$ path in $G'$, as the edges the
    agents would take without chunking can never be removed. By
    construction, the path in $G'$ that we chunk is one that the agent
    will take in $G$ after chunking. Finally, there can be no cheaper
        path, as we remove only edges that the agent cannot be convinced
    to take. 
\end{proof}
We briefly discuss a different perspective on the algorithm above,
which will be useful when comparing to the results of the next section.
We can think of the algorithm as a
dynamic program with the following recurrence: 
\begin{align*}
    cost[u] = \min_{\mathclap{v: (u, v) \in E, \beta_{u, v} \le \alpha_u}}
        c(u,v) + cost[v]. 
\end{align*}
Here, $cost[u]$ is the cost of the cheapest $u$ to $t$ path we can convince the agent to take, and the base case is simply $cost[t] = 0$. This recurrence is exactly the recurrence that a shortest-path algorithm solves, except for the condition that $\beta_{u, v} \le \alpha_u$. Thus, the first part of the algorithm simply removes edges that do not satisfy this condition, and then the solution to the shortest path problem will solve the above recurrence.

\subsection{Global Chunking Budget}
In this section we consider global constraints; the designer
must consider where to best allocate chunks to have the most
impact. As before, we can use the optimal edge-chunking algorithm to
solve this problem; only marginally more computation is required. 

\begin{theorem}\label{thm:global-budget}
    Given any task graph $G = (V, E)$ and a global constraint
    $k$, we can determine the optimal chunking configuration with at
    most $O(|E|\log k)$ applications of \autoref{alg:opt-edge-chunk},
    for a total runtime of $O(|E|k \log k + |V|)$. 
\end{theorem}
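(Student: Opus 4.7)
The plan is to reduce the global-budget problem to a resource-constrained shortest path (RCSP) problem and solve the RCSP by dynamic programming. Mirroring the argument behind \autoref{thm:local-budget}, I would first argue that in any optimal configuration we only chunk edges on the single $s$-$t$ path $P$ that the agent actually traverses: chunking an off-path edge only (weakly) lowers its perceived cost at its tail, which can only make the agent more likely to deviate from $P$, so off-path chunks are wasted. Moreover, each edge on $P$ should receive exactly the minimum number of chunks needed to make the agent take it -- any surplus could be reallocated to another path edge (or is otherwise wasted).

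Next, for every edge $(u,v) \in E$ I would define $k^{*}_{u,v}$ as the smallest integer $j \in \{0,1,\dots,k\}$ for which the bottleneck $\beta_{u,v}(j)$ returned by \autoref{alg:opt-edge-chunk} satisfies $\beta_{u,v}(j) \le \alpha_u$, setting $k^{*}_{u,v} = \infty$ if no such $j$ exists. The crucial property is that $\beta_{u,v}(j)$ is monotone non-increasing in $j$: any $j$-chunking can be simulated by a $(j+1)$-chunking that inserts one zero-cost chunk, so additional chunks never hurt. Monotonicity justifies a binary search for $k^{*}_{u,v}$ over $\{0,\dots,k\}$ using $O(\log k)$ applications of \autoref{alg:opt-edge-chunk} per edge, for a total of $O(|E|\log k)$ applications.

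With $k^{*}_{u,v}$ known for every edge, the remaining task is the RCSP: find the cheapest $s$-$t$ path $P$ subject to $\sum_{e \in P} k^{*}_{e} \le k$. Since $G$ is a DAG, this is solved by the standard pseudopolynomial dynamic program
\begin{align*}
\mathrm{cost}[u][j] \;=\; \min_{\substack{(u,v)\in E \\ k^{*}_{u,v} \le j}} c(u,v) + \mathrm{cost}[v][\,j - k^{*}_{u,v}\,],
\end{align*}
with base case $\mathrm{cost}[t][j] = 0$, processed in reverse topological order; the answer is $\mathrm{cost}[s][k]$. After reading off the optimizing path $P^{*}$ from the DP, I would invoke \autoref{alg:opt-edge-chunk} once more on each edge of $P^{*}$ with its prescribed chunk count to produce the final configuration.

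The runtime accounting is: $O(|V|+|E|)$ to precompute shortest paths to $t$ (used internally by each call to \autoref{alg:opt-edge-chunk}); $O(|E|\log k)$ applications of \autoref{alg:opt-edge-chunk}, each taking $O(k)$ time, giving $O(|E|k\log k)$; the DP has $O(|V|k)$ states and $O(|E|k)$ total transitions; and the final per-path chunking adds only $O(|V|k)$. Summing yields $O(|E|k\log k + |V|)$, matching the claim. I expect the main obstacle to be the two reduction steps at the start -- that off-path chunking is never helpful, and that each path edge should receive exactly $k^{*}_{e}$ chunks -- both of which rely on monotonicity of $\beta_{u,v}(j)$ in $j$ together with the fact that chunking never increases an edge's perceived cost. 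Once those are established, the RCSP formulation and its DP solution are routine.
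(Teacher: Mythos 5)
Your proposal is correct and follows essentially the same route as the paper's proof: binary-search each edge (using monotonicity of the optimal bottleneck in the chunk count) to find the minimum number of chunks $l_{u,v}$ needed to make it persuadable, then run a budgeted shortest-path DP $cost[u,i]=\min_{(u,v):\,l_{u,v}\le i} c(u,v)+cost[v,i-l_{u,v}]$ over reverse topological order, which is exactly your RCSP formulation. The only difference is expository — you make explicit the preliminary observations (off-path chunking is wasted; each path edge should get exactly its minimum) that the paper leaves implicit in the recurrence.
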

\begin{proof}
    As before, we first compute the cost of the shortest path from any node to $t$ in time $O(|V| + |E|)$.
        For a local budget, we sorted edges into feasible and infeasible
    edges, where an edge was feasible if we could convince the agent
    to take it with at most $k$ chunks. Here, we instead determine the
    minimum number of chunks that's necessary for an agent to take
    each edge (if the number is at most $k$). Since the optimal
    bottleneck cost is decreasing in the number of chunks $k$, we can
    simply use binary search to find this minimum number. 

    In more detail, let $u$ be an arbitrary vertex and define
    $\alpha_u$ as above. For any out-neighbor $v$ of $u$, let
    $\beta_{u, v}^l$ be the lowest possible bottleneck cost of any
    $l$-chunking of $(u,v)$. Let $l_{u, v}$ be the smallest $l \le k$
    such that $\beta_{u, v}^l \le \alpha_u$. If no such $l$ exists,
    then $l_{u, v} = \infty$. $l_{u, v}$ can be computed in $O(\log k)$
    applications of \autoref{alg:opt-edge-chunk} with binary search,
    since $\beta_{u, v}^l$ is decreasing in $l$. 

   Now let $cost[u, i]$ denote the cost of the cheapest path from 
    $u$ to $t$ that we can convince the agent to take with at most $i$
    chunks. The base case is simply $cost[t, i] = 0$ for all $0 \le i
    \le k$. The recurrence is as follows. 
    \begin{align*}
        cost[u, i] = \min_{\mathclap{v: (u, v) \in E, l_{u, v} \le i}}
                c(u, v) + cost[v, i - l_{u, v}]. 
    \end{align*}
    The final solution is $cost[s, k]$. The correctness of this recurrence follows from the fact that $l_{u, v}$ is the smallest number of chunks needed to convince the agent to take edge $(u, v)$. For the runtime, note that it takes $O(E k \log k)$ to compute $l_{u, v}$ for all $(u, v) \in E$. For the recurrence, the $\min$ considers $|E|$ possibilities for each value of $i \in \{0, \dots, k\}$, for a total runtime of $O(|E|k)$. Finally, to actually compute the recurrence, we can simply proceed backwards through some topological ordering of the graph. 
\end{proof}

\section{Optimizing the cost ratio}
Define the cost ratio of a present-biased agent to be $C_b(s \to
t)/c(s \to t)$, where $C_b(s \to t)$ is the cost that a present-biased
agent with bias parameter $b$ incurs in the graph, and $c(s \to t)$ is
the shortest path cost. The goal of this section is to understand how
the cost ratio of the present-biased agent decreases as the task
designer places more chunks in the graph. Put another way, in the
previous section we provided algorithms that optimally chunked task
graphs, given a fixed chunking budget $k$. Here, we prove performance
guarantees on those algorithms, where the algorithm's ``performance''
is measured in how much it reduces the cost of the agent's path.  

Existing results have characterized the worst-case cost ratio over all task graphs.
\begin{theorem}[Adapted from \citet{tang2017computational}]
    \label{thm:tang}
    The cost ratio for an agent with present bias $b$ is at most $b^n$, over all task graphs. The $n$-fan (see \autoref{fig:n-fan}) can get arbitrarily close to this cost ratio as $c$ approaches $b$ from below.
\end{theorem}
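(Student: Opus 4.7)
The plan has two independent parts: an upper bound via an amortized argument along the agent's path, and a matching lower bound via direct calculation on the $n$-fan.

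For the upper bound, let $s = u_0, u_1, \ldots, u_k = t$ be the sequence of vertices visited by the biased agent, and let $d_i = c(u_i \to t)$ be the true shortest-path cost from $u_i$. The single ingredient I need is the decision rule: since the agent at $u_i$ chose $u_{i+1}$ over the first edge of the true shortest path from $u_i$, we have $b \cdot c(u_i, u_{i+1}) + d_{i+1} \le b \cdot d_i$. This gives simultaneously the edge bound $c(u_i, u_{i+1}) \le d_i - d_{i+1}/b$ and the ``depth-growth'' bound $d_{i+1} \le b \cdot d_i$, hence $d_i \le b^i d_0$. Summing the edge bound along the path,
\begin{equation*}
    C_b(s \to t) = \sum_{i=0}^{k-1} c(u_i, u_{i+1}) \le d_0 + \Bigl(1 - \tfrac{1}{b}\Bigr)\sum_{i=1}^{k-1} d_i - \tfrac{d_k}{b},
\end{equation*}
and since $d_k = 0$ and $d_i \le b^i d_0$, I evaluate the geometric sum to get $C_b(s \to t) \le b^{k-1} d_0$. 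Since $k \le n$, this is at most $b^n \cdot c(s \to t)$ as desired.

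For the lower bound, I unpack the $n$-fan: it has a ``spine'' $s = v_0, v_1, \ldots, v_n$ with each step $(v_i, v_{i+1})$ costing $c^i$ (up to normalization), together with a short-cut edge $(v_i, t)$ at each intermediate vertex and a direct edge $(s, t)$ realizing the shortest path. The costs are calibrated so that at every vertex $v_i$, an agent with bias $b > c$ strictly prefers to advance one more step into the fan (perceiving $b c^i + c^{i+1} \cdot \Theta(1)$ as cheaper than taking the local short-cut), and is eventually forced at $v_n$ to incur the full accumulated cost. The main step is to verify this preference at every vertex and then compute that the agent's total cost is $\Theta(c^n)$ while the shortest path has cost $\Theta(1)$, yielding a cost ratio of order $c^n$ which tends to $b^n$ as $c \uparrow b$.

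The only genuine obstacle is the telescoping step in the upper bound: one has to avoid the natural but weaker argument ``$c(u_i, u_{i+1}) \le d_i$ and $d_i \le b^i d_0$, so the total cost is $\sum b^i d_0$,'' which loses a factor of $b/(b-1)$. Pulling out the sharper $(1 - 1/b)$ coefficient from the edge bound before applying $d_i \le b^i d_0$ is what makes the geometric series collapse to a clean $b^{k-1}$. Everything else is routine case-checking (for the fan) or algebraic manipulation (for the sum).
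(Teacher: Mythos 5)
The paper does not actually prove this theorem; it cites it as "adapted from \citet{tang2017computational}," so there is no in-paper proof to compare against. Your reconstruction of the upper bound is correct and is essentially the standard argument: comparing the agent's chosen edge at $u_i$ against the first step of the true shortest path from $u_i$ gives $b\,c(u_i,u_{i+1}) + d_{i+1} \le b\,c(u_i,w^*) + c(w^*\to t) \le b\,d_i$, from which both the per-edge bound $c(u_i,u_{i+1}) \le d_i - d_{i+1}/b$ and the growth bound $d_{i+1}\le b\,d_i$ follow; the telescoping you flag as the key step does indeed collapse the geometric sum to $b^{k-1}d_0$, avoiding the loose $\sum b^i d_0$ bound. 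This is sound.

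The one concrete slip is in the description of the $n$-fan. You have the weights flipped: the spine edges $(v_i,v_{i+1})$ are cost $0$, and it is the short-cut edges $(v_i,t)$ that cost $c^i$ (with $1 < c < b$, and $c(s,t)=1$). With this assignment, at $v_i$ the agent compares perceived cost $b\cdot 0 + c^{i+1} = c^{i+1}$ for advancing against $b\,c^i$ for the short-cut, and since $c<b$ prefers to advance; they end up paying only the final edge $c^n$ against an optimal cost of $1$, giving ratio $c^n \to b^n$. Your version, with the spine edges carrying cost $c^i$ and the short-cuts cheap, would have the agent immediately bail out to $t$. Since you hedged with "up to normalization" and the high-level verification-and-calculation plan is otherwise right, this is a correctable detail rather than a structural gap, but it is worth fixing so the preference check you promise to carry out actually goes through.
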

\newcommand{\bmin}{\ensuremath{b_{\text{min}}}}
We want to characterize the worst-case cost ratio after chunking. More precisely, we consider the following question. Let $G$ be arbitrary, and let $G'$ denote an optimal $k$-chunking of $G$. What is the worst-case cost ratio for $G'$? 
We start by considering local constraints; thus, $G'$ is the result of
breaking an arbitrary number of 
edges in $G$ into at most $k$ chunks.  
Let $\bmin$ be the
selective bias guaranteed by \autoref{thm:chunk-opt-edge}. That is,
let: 
$$\bmin = \frac{1}{1-\left(\frac{b-1}{b}\right)^k}.$$

\begin{theorem}
    \label{thm:costRatio}
If $G'$ is an optimal chunking of $G$ with local constraint $k$,
then the cost ratio 
    for an agent with present bias $b$ in $G'$ is at most $\bmin^n$. 
\end{theorem}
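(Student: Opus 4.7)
The plan is to exhibit a specific $k$-chunking that already achieves cost ratio $\bmin^n$ and then invoke optimality of $G'$. Let $P$ be the path that a hypothetical agent with present bias $\bmin$ (rather than $b$) would take in $G$. Applying \autoref{thm:tang} at bias $\bmin$ yields $c(P) \le \bmin^n \cdot c(s \to t)$.

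Next, construct a candidate chunking $G^+$ by chunking only the edges of $P$, each into $k$ chunks using the formulas of \autoref{thm:chunk-opt-edge}. This is a valid chunking under the local constraint. I claim the bias-$b$ agent in $G^+$ follows exactly $P$. Borrowing the notation of \autoref{thm:local-budget}, let $\alpha_u$ denote the perceived cost of the biased default at $u$ in the unchunked $G$. The central sub-claim is that for each $(u,v) \in P$, the bottleneck $\beta$ of its chunking satisfies $\beta \le \alpha_u$. By the construction of \autoref{thm:chunk-opt-edge}, the perceived cost of the final chunk is exactly $\bmin \cdot c(u,v) + c(v \to t)$, and every earlier chunk's perceived cost is bounded above by this same quantity, since the shortest path from any intermediate chunk vertex to $t$ is at most the cost of continuing through the remaining chunks to $v$. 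Hence $\beta = \bmin c(u,v) + c(v \to t)$. Since $v$ minimizes $\bmin c(u,v') + c(v' \to t)$ over out-neighbors $v'$ of $u$ by definition of $P$, we get $\beta \le \bmin c(u,w) + c(w \to t) \le b c(u,w) + c(w \to t) = \alpha_u$ for the biased default $w$, using the easy inequality $\bmin \le b$ that follows from the formula for $\bmin$.

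Once $\beta \le \alpha_u$, two things follow. At the original vertex $u$, the chunked edge's first perceived cost is at most $\beta \le \alpha_u$, beating every unchunked alternative. At any intermediate chunk vertex $u_j$, continuing has perceived cost at most $\beta \le \alpha_u$, while any external deviation from $u_j$ has perceived cost at least $\alpha_u$, because chunk vertices inherit $u$'s external edges at identical costs and chunking preserves all shortest-path costs to $t$. With the tie-breaking rule favoring the chunked path, the agent completes every chunked edge and thus traces $P$ in $G^+$. Therefore the agent's cost in $G^+$ equals $c(P) \le \bmin^n c(s \to t)$, and the bound for $G'$ follows from optimality.

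The delicate step is ruling out mid-chunk deviations to vertices off $P$, even when $(u,v)$ is not itself a shortest-path edge of $G$ (so that the chunking of \autoref{thm:chunk-opt-edge} is not necessarily optimal for $(u,v)$ in isolation). The argument goes through because all external edge costs and downstream shortest-path costs from each chunk vertex match those from the original vertex $u$, so the single quantity $\alpha_u$ simultaneously controls every escape route out of the chunked path.
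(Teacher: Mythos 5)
Your proof is correct, and it takes a genuinely different route from the paper's. The paper constructs a witness chunking $G''$ by chunking \emph{every} edge of $G$ into $k$ chunks via \autoref{thm:chunk-opt-edge}, then argues that the bias-$b$ agent in $G''$ behaves exactly like a bias-$\bmin$ agent in $G$ and applies \autoref{thm:tang} at bias $\bmin$. You instead identify the path $P$ the bias-$\bmin$ agent would take, chunk \emph{only} the edges of $P$, and verify directly that the bias-$b$ agent in the resulting graph $G^+$ stays on $P$ at every original and intermediate chunk vertex, using the bottleneck bound $\beta = \bmin c(u,v) + c(v\to t)$, the fact that $v$ minimizes $\bmin c(u,\cdot) + c(\cdot\to t)$ over $u$'s out-neighbors, and $\bmin \le b$. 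What your approach buys: the escape options from every chunk vertex of an edge on $P$ point to unchunked original vertices of $G$ at their original costs, so the analysis is genuinely local to one chunked edge at a time; you never have to reason about how chunkings of different edges interact. The paper's approach buys brevity, but the assertion that the bias-$b$ agent in the fully-chunked $G''$ incurs exactly the cost of the bias-$\bmin$ agent in $G$ is stated as an immediate consequence of the single-edge definition of selective bias even though, at an original vertex $u$ of $G''$, the agent is choosing among first chunks whose perceived costs can fall strictly below $\bmin c(u,w) + c(w\to t)$ when the shortest escape from the first chunk vertex bypasses $w$; reconciling this needs an argument the paper does not spell out. Your version sidesteps that subtlety entirely and is, if anything, the more airtight of the two.
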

\begin{proof}
    We simply chunk every edge into $k$ chunks using the chunking
    given in \autoref{thm:chunk-opt-edge}, which results in the agent
    viewing every edge with a selective bias of $\bmin$. Call the
    resulting graph $G''$. By the definition of selective bias, for
    every edge $(u, v) \in G$, an agent with bias $\bmin$ would go from
    $u$ to $v$ if and only if the agent with bias $b$ would
    traverse the chunking $(u_1, u_2, \dots, u_k, v)$ in $G$. Since
    this holds for every edge, the agent will incur exactly the same
    cost as an agent with bias $\bmin$ would incur in $G$. So by
    \autoref{thm:tang}, they incur cost at most $\bmin^n$ in $G''$,
    with bias $b$.  

    The theorem follows from the fact that $G'$ is an optimal chunking of $G$, so the agent will only do better there as compared to $G''$.
\end{proof}
\begin{restatable}{corollary}{costRatioCor}
Given a local constraint $k
    = O(n)$, the optimal chunking $G'$ of $G$ 
has constant cost ratio. 
\end{restatable}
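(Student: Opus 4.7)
The plan is to combine Theorem \ref{thm:costRatio} with an elementary estimate of $\bmin^n$. Since that theorem already gives a cost-ratio upper bound of $\bmin^n$ for any optimal chunking under a local constraint $k$, it suffices to show that $\bmin^n = O(1)$ whenever $k$ is allowed to grow linearly in $n$ with a sufficiently large constant factor. In other words, the corollary should be read as: there is a constant $c = c(b)$ such that if $k \ge cn$, the cost ratio is bounded by a constant depending only on $b$.

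First I would set $r = (b-1)/b \in (0,1)$ so that $\bmin = 1/(1 - r^k)$, and take $k = cn$ where $c$ is chosen so that $r^c \le 1/e$ (any $c \ge 1/\ln(1/r)$ works). With this choice, $r^{cn} = (r^c)^n \le e^{-n}$, hence
\[
\bmin^n \;=\; \frac{1}{(1 - r^{cn})^n} \;\le\; \frac{1}{(1 - e^{-n})^n}.
\]
I would then apply Bernoulli's inequality $(1-x)^n \ge 1 - nx$ (valid for $x \in [0,1]$ and $n \ge 1$) to get $(1 - e^{-n})^n \ge 1 - n e^{-n}$. Since the function $f(n) = n e^{-n}$ is decreasing for $n \ge 1$ with $f(1) = 1/e$, this yields $\bmin^n \le 1/(1 - 1/e) = e/(e-1)$, a constant depending only on $b$ (through the choice of $c$). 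This is the entire estimate.

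There is no real obstacle here: the argument is a single line of calculation built on Theorem \ref{thm:costRatio}. The only subtle point is interpretational---the statement ``$k = O(n)$'' must be read as permitting $k$ to be linear in $n$ with a sufficiently large hidden constant (rather than merely upper-bounded by a linear function), since $\bmin^n$ is only small when $k$ is large. Once this reading is adopted, the calculation above shows that a linear number of chunks per edge suffices to collapse the worst-case exponential cost ratio of Theorem \ref{thm:tang} down to a constant, which is the headline qualitative message advertised in the introduction.
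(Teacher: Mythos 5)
Your proof is correct and reaches the same conclusion by the same strategy: apply \autoref{thm:costRatio} and then argue that $\bmin^n$ stays bounded when $k$ grows linearly in $n$. The paper runs the computation in the opposite direction --- it fixes a target constant $c$, solves $\bmin = c^{1/n}$ for $k$, and uses a series expansion of $\log(1/(c^{1/n}-1))$ to show that the required $k$ is $O(n)$. You instead fix $k = cn$ with $r^c \le 1/e$ and bound $\bmin^n \le 1/(1-e^{-n})^n \le e/(e-1)$ directly via Bernoulli's inequality. Your version is somewhat more elementary (no appeal to an asymptotic series, a fully explicit constant bound) and also makes plain the interpretational point that the hidden constant in $k = O(n)$ must be large enough, which the paper leaves implicit. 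Both are short arithmetic consequences of \autoref{thm:costRatio}, so the difference is one of presentation rather than substance.
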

\begin{proof}
    The proof involves only arithmetic after applying \autoref{thm:costRatio}. Details can be found in the appendix.
\end{proof}
The corollary shows that we can get an exponential reduction in the
agent's worst-case cost with only a linear number of chunks on every
edge, demonstrating the power of chunking. However, from a different
perspective, the bound in \autoref{thm:costRatio} seems weak. We
showed earlier that it's never necessary to chunk two edges leading out of the
same vertex, but here we chunk all edges. Further, we chunk every edge
with \autoref{thm:chunk-opt-edge}, despite that chunking not being
optimal for non short-path edges. Despite these concerns, the bound in
the theorem is tight, as demonstrated by chunking the $n$-fan. 

\begin{figure}[h]
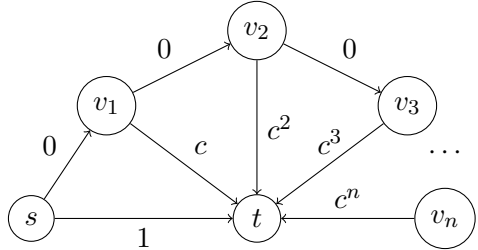

    \ctikzfig{n-fan1}
    \caption{This graph is the $n$-fan. If $c < b$, the agent will
      prefer edge $(v_i, v_{i+1})$ to $(v_i, t)$ for all $i$. Thus,
      the agent goes all the way around the fan, and incurs cost
      $c^n$.} 
    \label{fig:n-fan}
\end{figure}

\begin{lemma}
If $G$ is an $n$-fan with $c < \bmin$ and $G'$ is an optimal
    chunking of $G$ given local constraint $k$, then the
    cost ratio for an agent with present bias $b$ is $c^n$ in $G'$. 
\end{lemma}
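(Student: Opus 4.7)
The plan is to show that no $k$-chunking of the $n$-fan can lower the agent's cost below $c^n$. Combined with the fact that chunking preserves total edge weights (so the shortest $s$-$t$ path cost in $G'$ equals that in $G$), this pins the cost ratio to exactly $c^n$, matching the upper bound established in \autoref{thm:costRatio}.

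The central ingredient is \autoref{thm:chunk-opt-edge}: no $k$-chunking of any single edge can induce a selective bias strictly less than $\bmin$. Thus, in any chunked graph $G'$, each edge is perceived by the agent with an effective selective bias lying in $[\bmin,b]$. By the operational definition of selective bias in Section~3, the agent's trajectory through $G'$ coincides with the trajectory of a bias-heterogeneous agent in $G$ whose per-edge bias on each edge $e$ equals the induced $b'_e \ge \bmin$. So to analyze $G'$, I only need to analyze an agent in the original $n$-fan whose per-edge biases are at least $\bmin$.

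Next I would apply the fan's defining property recorded in \autoref{fig:n-fan}: at each intermediate vertex $v_i$, the agent prefers $(v_i,v_{i+1})$ over $(v_i,t)$ whenever the relevant bias parameter exceeds $c$. Since $c < \bmin$ by hypothesis, every per-edge effective bias in $G'$ exceeds $c$, so this preference holds at each $v_i$. Inductively along the fan, the agent therefore traverses the full path $s=v_0\to v_1\to\cdots\to v_n=t$ and pays cost $c^n$; dividing by the shortest-path cost (unchanged by chunking) gives cost ratio $c^n$.

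The main obstacle is that the designer can assign \emph{different} selective biases to $(v_i,v_{i+1})$ and $(v_i,t)$, so one cannot just invoke the ``uniform bias'' version of the fan's property. I would handle this by reducing to the worst-case adversarial allocation at each $v_i$, namely leaving $(v_i,v_{i+1})$ unchunked (so $b'_{\text{fan}}=b$) while maximally chunking $(v_i,t)$ (so $b'_{\text{direct}}=\bmin$), and then verifying directly from the fan's weight pattern that the going-around inequality $b'_{\text{fan}}\cdot c(v_i,v_{i+1})+c(v_{i+1}\to t) < b'_{\text{direct}}\cdot c(v_i,t)$ is still forced by the single condition $c<\bmin$. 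This monotonicity in each per-edge bias, combined with the hypothesis, completes the argument.
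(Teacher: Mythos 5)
Your high-level strategy is the right one, and it mirrors the paper's: show that no $k$-chunking can lower the agent's cost below $c^n$, by showing the best you can do on each relevant decision still leaves the fan-edge preferred. However, the key lemma you invoke is stated incorrectly and, as stated, is false.

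You write that by \autoref{thm:chunk-opt-edge}, ``no $k$-chunking of any single edge can induce a selective bias strictly less than $\bmin$.'' That is not what the theorem says. \autoref{thm:chunk-opt-edge} only guarantees that the particular chunking it describes achieves selective bias exactly $\bmin$, and that this chunking is \emph{optimal} under an additional condition (that the shortest path from each chunking vertex follows the chunking). For edges not on the shortest path, the optimal chunking can achieve a strictly \emph{lower} bottleneck, and hence a strictly smaller selective bias, than $\bmin$. Indeed, the paper's own motivating example in Section~3.2 exhibits exactly this: the Theorem~\ref{thm:chunk-opt-edge} chunking of $(u,v)$ has bottleneck $76.1$ (selective bias $\bmin = 8/7$), while the true optimal chunking has bottleneck $74.1$ (selective bias $1$). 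So the universal lower bound you rely on contradicts the paper.

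The fix is small but is precisely the content of the paper's proof: you must observe that in the $n$-fan, each edge $(v_i,t)$ \emph{is} on the shortest path from $v_i$ to $t$, so the optimality condition in \autoref{thm:chunk-opt-edge} is satisfied for that edge and the minimum achievable bottleneck (hence selective bias) really is $\bmin$. Without this observation, the claim that ``maximally chunking $(v_i,t)$'' yields $b'_{\text{direct}} = \bmin$ is unjustified; with it, the rest of your monotonicity argument (using $c(v_i,v_{i+1})=0$ and $c < \bmin \le b'_{\text{direct}}$) goes through and gives the same conclusion as the paper.
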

\begin{proof}
    Let $G'$ be constructed by chunking every edge in the $n$-fan via
    \autoref{thm:chunk-opt-edge} (we can ignore the $0$ cost edges, as
    chunking a $0$ cost edge has no impact on the agent's
        decisions). In $G'$, the agent acts as if they had bias $\bmin$ in
    $G$. And such an agent would incur cost $c^n$ by going all the way
        around the fan, since $c < \bmin$. It remains to show 
        that $G'$ is an optimal chunking of $G$. 

    In fact, we show the stronger claim that any chunking of $G$ with a local budget of $k$ is (weakly) optimal, as no such chunking can cause the agent to take a cheaper path. To see this, suppose there is a chunking $G^*$ of $G$ such the agent goes from $v_i$ to $t$, for $i < n$ (this is the only way they could take a cheaper path). Then, $G^*$'s chunking of edge $(v_i, t)$ must have lower bottleneck cost than in $G'$. We claim that this is impossible, because \autoref{thm:chunk-opt-edge} will give the optimal chunking for edge $(v_i, t)$. To see this, notice that the shortest path from $v_i$ to $t$ is through edge $(v_i, t)$, which is exactly when \autoref{thm:chunk-opt-edge}'s chunking is optimal. As a result, no $G^*$ exists, and so $G'$ is an optimal chunking. 
\end{proof}

We have provided a tight characterization for the worst-case cost
ratio in terms of the number of chunks given a local constraint.
We conjecture that a similar result extends to global constraints.
Let $k$ be the global chunking
budget. Clearly, we could get an upper bound on the worst-case cost
ratio similar to that of \autoref{thm:costRatio} by evenly splitting
the chunks so that each edge satisfies a local constraint of $k/m$,
where $m = |E|$. We conjecture that this would also be an
asymptotically tight bound, as it seems that the optimal chunking in
the $n$-fan would need to spread chunks evenly among half the edges
(i.e., the edges $(v_i, t)$). 

\section{Optimal Chunking for Multiple Agents}
We now consider the problem of chunking a task graph for two types
of agent, where an agent's type is their bias. For example, an
instructor might reasonably expect some students to procrastinate
rarely and others to procrastinate frequently. Yet the instructor cannot
chunk the task separately for different students (indeed, they may
well not know a given student's type).  How should
they chunk the task while balancing the cost that both types of
students incur? We answer this question in two settings. We first show
how to optimally chunk the graph for two types of agents, $A_1$ and
$A_2$ with $b_1 < b_2$. Second, we show how to optimally chunk the
graph for $m$ types of agents, with the additional constraint that all
agents take the same path. Allowing agents to take different paths
gives the designer more power but also makes the problem significantly
more complex to analyze; removing this possibility allows us to design
for $m$ types, rather than 2. 

Note that in the case of a single agent, there is an obvious way to define the ``optimal'' way to chunk an edge -- it's the one that agent perceives as cheapest. This definition is also useful for chunking the task graph optimally, as it tells us which edges we can persuade the agent to take. With two agents, it's unclear what it would mean to ``optimally'' chunk an edge. An intuitive definition would be that the optimal chunking for an edge minimizes the average perceived cost of the two agents. But that is wholly unhelpful for graph chunking, as it doesn't tell us which edges we can persuade either agent to take. So, we instead consider two related problems: convincing agents to take the same path, and convincing agents to split up. Solving these two problems will allow us to chunk the task graph while minimizing the sum of the agents' costs. 

\subsection{Splitting Agents onto Separate Paths}
In this section, we want to find the chunking $C^*$ of $(u, v)$ such that $A_1$ takes $C^*$ and $A_2$ finds $C^*$ ``maximally unappealing'': formally, $C^*$ has the maximum perceived cost for $A_2$ over all chunkings $A_1$ would take. We can use such a chunking to split up two agents who are both at the same vertex. We start by defining some terms. Let $p(e; b_i)$ represent the perceived cost of edge $e$ for the agent with bias $b_i$.  Here, agent $A_1$ has bias $b_1$, and agent $A_2$ has bias $b_2$, where $b_1 < b_2$. Then, let $\alpha_u^{(i)}$ represent $A_i$'s perceived cost of their best option at $u$ (without chunking). So, $\alpha_u^{(i)} = p(u, w_i; b_i)$, where $w_i = \argmin_{v:(u, v) \in E} b_ic(u, w_i) + c(w_i \to t)$.

We now describe the algorithm that solves this problem, \autoref{alg:chunkSplit}, at a high level; a full description can be found in the appendix. \autoref{alg:chunkSplit} first computes $C^*_1$, the optimal chunking of $(u, v)$ for $A_1$.\footnote{The algorithm does not rely on starting with an \textit{optimal} chunking; any chunking that $A_1$ takes would work.} Then, the algorithm iterates over all choices of $e_i$ and raises $p(e_i; b_2)$ as much as possible while ensuring that $A_1$ still takes the chunking. It does so by ``siphoning'' cost from other edges in the chunking onto $e_i$. It repeats this process for all choices of $e_i$.  This siphoning has three phases. 

In the first phase, we siphon from $x_{i-1}, \dots, x_1$ to $x_i$.\footnote{To ease exposition, we can think of ``siphoning'' as a continuous process where one cost is decreased as another increases. In practice, how much to siphon can be computed in $O(1)$ time; see the appendix for details.}  In the second phase, we siphon from $x_{i+1}, \dots, x_k$ to $x_i$. These phases are very straightforward, and terminate when $p(e_i; b_1) = \alpha_u^{(1)}$, where $\alpha_u^{(1)}$ is the perceived cost of the best alternative to $(u, v)$ from $A_1$'s perspective. In the third phase, we decrease $x_{>i}$ and increase $x_{\le i}$; because $b_1 < b_2$, doing this results in increasing $p(e_i; b_2)$ without increasing $p(e_i; b_1)$. 

Call the resulting chunking $C_i$. Note that $A_1$ will surely take $C_i$: $A_1$ took the original chunking, and all edges which were increased (potentially all $e_{\le i}$) were not increased beyond $\alpha_u^{(1)}$. We first prove the following conditions of the algorithm.
\begin{lemma}\label{lm:postconditions}
    Let $C_i = (e_1, \dots, e_k)$ be the chunking produced by
    iteration $i$ of \autoref{alg:chunkSplit}. Then: 
    \begin{enumerate}[(a)]
        \item $\sum_{j \neq i} x_j > 0 \implies p(e_i; b_1) = \alpha_u^{(1)}$
        \item $\sum_{j > i} x_j > 0 \implies \forall j \le i, p(e_j; b_1) = \alpha_u^{(1)}$
    \end{enumerate}
\end{lemma}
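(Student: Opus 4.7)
The plan is to track the invariants maintained by \autoref{alg:chunkSplit}'s three phases and derive both claims from the phases' termination conditions. Phases 1 and 2 push cost onto $e_i$ until either the source edges $e_{<i}$ or $e_{>i}$ are exhausted, or raising $x_i$ further would push $p(e_i; b_1)$ above $\alpha_u^{(1)}$ (at which point $A_1$ would deviate at $u$). Phase 3 shifts cost from $e_{>i}$ onto $e_{\le i}$ while, by construction, holding $p(e_i; b_1)$ fixed. Throughout, the constraint ``$A_1$ still takes the chunking'' forces $p(e_l; b_1) \le \alpha_u^{(1)}$ at every index $l$.

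For part (a), assume $\sum_{j \neq i} x_j > 0$ and suppose toward a contradiction that $p(e_i; b_1) < \alpha_u^{(1)}$ at the end of iteration $i$. Since Phase 3 preserves $p(e_i; b_1)$, the strict inequality already held at the end of Phase 2. Then neither Phase 1 nor Phase 2 could have terminated via the cap, so each must have terminated by exhausting its source edges, giving $\sum_{j \neq i} x_j = 0$ and contradicting the assumption.

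For part (b), the case $j = i$ follows from (a), since $\sum_{j > i} x_j > 0$ implies $\sum_{j \neq i} x_j > 0$. For $j < i$, the argument is an exchange principle. I would consider the infinitesimal siphon that removes $\delta$ from some $x_m$ with $m > i$ and $x_m > 0$ (available by hypothesis), adds $\delta / b_1$ to $x_i$, and adds $\delta (b_1 - 1)/b_1$ to $x_j$. A direct computation on the formula $p(e_l; b_1) = b_1 x_l + \sum_{l' > l} x_{l'} + c(v \to t)$ shows that this operation preserves total cost, keeps $p(e_i; b_1)$ unchanged, raises $p(e_j; b_1)$ by $(b_1 - 1)^2 \delta / b_1 > 0$, and leaves every other $p(e_l; b_1)$ with $l \le i$ unchanged or decreased. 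Hence if any $p(e_j; b_1)$ with $j < i$ were strictly below $\alpha_u^{(1)}$ at termination of Phase 3, this siphon (for sufficiently small $\delta$) would be a feasible further step satisfying all constraints---contradicting termination. Thus $p(e_j; b_1) = \alpha_u^{(1)}$ for every $j \le i$.

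The main obstacle is verifying the ``no intermediate blow-up'' property of the exchange: naively shifting mass between non-adjacent edges can raise the perceived cost at intermediate vertices $j < l < i$, but the specific split $(1/b_1, (b_1 - 1)/b_1)$ of the incoming mass across $e_i$ and $e_j$ is precisely what cancels those contributions. This is routine once the perceived-cost formula is expanded and the change in each $\sum_{l' > l} x_{l'}$ is tracked, but it must be checked separately for the ranges $l < j$, $j < l < i$, $l = i$, $i < l < m$, $l = m$, and $l > m$.
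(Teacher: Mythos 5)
Your argument for (a) matches the paper's: both observe that if any mass remains off $e_i$, then Phases 1 and 2 must have stopped because the cap $\alpha_u^{(1)}$ was reached. For (b), you take the same high-level route as the paper (Phase 3 siphons from $x_{>i}$ until exhaustion or until the early edges are all at cap) but try to justify the termination condition via an explicit exchange argument, which is a reasonable thing to want to do since the paper's version is a single terse sentence.

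There is a concrete gap in that exchange, though. You use the formula $p(e_l; b_1) = b_1 x_l + \sum_{l' > l} x_{l'} + c(v \to t)$ unconditionally, but $c(u_{l+1} \to t)$ is actually $\min\bigl(c(u,w) + c(w \to t),\ \sum_{l' > l} x_{l'} + c(v \to t)\bigr)$. When the first term is the minimizer (i.e., $l$ is before the transition vertex $\tau$, which happens exactly when $\sum_{l'>l} x_{l'} \ge x - \delta$), the cancellation you rely on breaks. In particular for $l = i$: removing $\delta$ from $x_m$ does not lower $c(u_{i+1}\to t)$ at all in that regime, so adding $\delta/b_1$ to $x_i$ raises $p(e_i; b_1)$ by $\delta$, and the exchange is infeasible once $p(e_i;b_1) = \alpha_u^{(1)}$. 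An improving move can still exist, but it must be a \emph{finite} shift large enough to push $\sum_{j>i} x_j$ below $x-\delta$ so that $e_i$ crosses the transition — something an infinitesimal-exchange contradiction cannot detect. Your "main obstacle" paragraph flags a case split over index ranges, but not this transition-vertex case split, which is the one that actually threatens the argument. To be fair, the paper's own one-line proof of (b) also glosses over this, so both arguments sit at a similar level of informality; the difference is that your version makes a specific quantitative claim ("keeps $p(e_i; b_1)$ unchanged") that is false before the transition, so the gap is visible.
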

\begin{proof}
    For (a), if any $x_j >0$, then the algorithm terminated early in phase 1 or phase 2, which implies that $p(e_i; b_1) = \alpha_u^{(1)}$. For (b), if more could be siphoned from $x_{>i}$, then the algorithm would siphon more in phase 3, unless no edges in $e_{<i}$ can be increased further.
\end{proof}
The following theorem says that $A_2$ finds edge $e_i$ in $C_i$ maximally unappealing over all chunkings $A_1$ would take; the proof is in the appendix.
\begin{restatable}{theorem}{chunkSplitThm}
    \label{thm:chunk_split}
If $C_i$ is the output of the $i$th iteration of
    \autoref{alg:chunkSplit} and $C'$ is another chunking such that
    $p(e'_i; b_2) > p(e_i; b_2)$, then $A_1$ will not take $C'$. 
\end{restatable}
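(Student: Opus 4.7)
The plan is to prove the contrapositive: if $A_1$ takes $C'$, so $p(e'_j; b_1) \le \alpha_u^{(1)}$ for every $j$, then $p(e'_i; b_2) \le p(e_i; b_2)$. The key algebraic tool is the identity $p(e; b_2) = p(e; b_1) + (b_2 - b_1) x(e)$, which links the two agents' perceived costs through the weight $x(e)$ on the edge. I would parametrize any chunking by its cumulative sums $S_j = \sum_{l \le j} x_l$, with $S_0 = 0$ and $S_k = x$, and, in the regime where each $c(u_{j+1} \to t)$ tracks the chunked continuation (so $c(u_{j+1} \to t) = x - S_j + T$ with $T = c(v \to t)$), rewrite $p(e_j; b) = (b-1) S_j - b S_{j-1} + x + T$.

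In the main case $\sum_{j > i} x_j > 0$, \autoref{lm:postconditions}(b) gives $p(e_j; b_1) = \alpha_u^{(1)}$ for all $j \le i$, equivalently $(b_1 - 1) S_j - b_1 S_{j-1} = K$ with $K = \alpha_u^{(1)} - x - T$. Letting $\Delta S_j = S'_j - S_j$, subtracting the $C'$ constraint $p(e'_j; b_1) \le \alpha_u^{(1)}$ from this $C_i$ equality gives $\Delta S_j \le \frac{b_1}{b_1 - 1} \Delta S_{j-1}$ for each $j \le i$. Since $\Delta S_0 = 0$ and $\frac{b_1}{b_1 - 1} > 0$, a short induction yields $\Delta S_j \le 0$ for all $j \le i$. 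Substituting $\Delta S_i \le \frac{b_1}{b_1 - 1} \Delta S_{i-1}$ into $p(e'_i; b_2) - p(e_i; b_2) = (b_2 - 1) \Delta S_i - b_2 \Delta S_{i-1}$ then yields $\frac{b_2 - b_1}{b_1 - 1} \Delta S_{i-1} \le 0$, using $b_2 > b_1 > 1$ and $\Delta S_{i-1} \le 0$, which contradicts $p(e'_i; b_2) > p(e_i; b_2)$.

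The edge case $\sum_{j > i} x_j = 0$ needs separate treatment. If $\sum_{j < i} x_j > 0$, \autoref{lm:postconditions}(a) gives $p(e_i; b_1) = \alpha_u^{(1)}$; the monotonicity $c'(u'_{i+1} \to t) \ge c(u_{i+1} \to t) = \min(T, \mu)$ combined with the $A_1$ constraint bounds $p(e'_i; b_2) - p(e_i; b_2)$ by $-\frac{b_2 - b_1}{b_1} \min(y_{>i}, \mu - T) \le 0$, where $y_{>i} = \sum_{j > i} x'_j$ and $\mu$ is the best external cost at $u$. If $\sum_{j < i} x_j = 0$ as well, then $x_i = x$ and a direct estimate gives $p(e'_i; b_2) - p(e_i; b_2) \le -(b_2 - 1) y_{>i} \le 0$. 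The main obstacle will be the capped regime where $c(u_{j+1} \to t) = \mu$ for some $j \le i$, because the linear formula for $p(e_j; b)$ no longer holds. I expect capping only to tighten matters: once $c'(u'_{j+1} \to t)$ saturates at $\mu$, further increasing $\sum_{l > j} x'_l$ cannot raise it, so the recurrence on $\Delta S_j$ is either unchanged or replaced by a stricter inequality, and the induction $\Delta S_j \le 0$ for $j \le i$ should carry through.
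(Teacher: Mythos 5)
Your contrapositive reduction and the identity $p(e;b_2)=p(e;b_1)+(b_2-b_1)x(e)$ are exactly the right core tools, and your edge-case estimates ($\sum_{j>i}x_j=0$) check out. The genuine gap is in your treatment of capping in the main case, which you acknowledge but do not close. Your recurrence $(b_1-1)S_j - b_1 S_{j-1}=K$ requires that $c(u_{j+1}\to t)=x-S_j+T$ for \emph{both} $C_i$ and $C'$ at every $j\le i$. Your stated intuition---``once $c'(u'_{j+1}\to t)$ saturates at $\mu$, further increasing $\sum_{l>j}x'_l$ cannot raise it''---covers the sub-case where $C'$ caps (there $S'_j\le\gamma< S_j$ gives $\Delta S_j<0$ for free), but the problematic sub-case is the opposite one: $C_i$ capped at $j$ ($S_j\le\gamma$) while $C'$ is uncapped ($S'_j\ge\gamma$), which forces $\Delta S_j\ge 0$ on its face and does not look like a ``stricter inequality.'' One can in fact show this sub-case forces $\Delta S_j=0$ and $S_j=S'_j=\gamma$: substituting the mixed-regime formulas for $p(e_j;b_1)=\alpha_u^{(1)}$ and $p(e'_j;b_1)\le\alpha_u^{(1)}$ and using $\mu=x+T-\gamma$ yields $(b_1-1)\Delta S_j - b_1\Delta S_{j-1}\le S_j-\gamma\le 0$, so $\Delta S_j\le 0$ as well. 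But that derivation is not a corollary of your heuristic; it is a separate computation, and without it the induction $\Delta S_j\le 0$ is not established.

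The paper's proof sidesteps this entirely by splitting on $x'_i\le x_i$ versus $x'_i>x_i$ rather than on cumulative sums. When $x'_i\le x_i$, the chain $p(e'_i;b_2)>p(e_i;b_2)\iff c(u'_{i+1}\to t)-c(u_{i+1}\to t)>b_2(x_i-x'_i)\implies \text{same with }b_1\iff p(e'_i;b_1)>p(e_i;b_1)$ never expands $c(u_{i+1}\to t)$, so the cap is invisible. When $x'_i>x_i$, a short argument (via the one-Lipschitz, decreasing $c(\cdot)$ and \autoref{lm:postconditions}(b)) finds the first $j\le i$ with $S'_j>S_j$ and shows $A_1$ already deviates there. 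Your route is salvageable, but the paper's split on $x'_i$ avoids ever needing a regime-by-regime analysis of the first $i$ edges, which is where the real work would otherwise be.
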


The theorem can be applied to show that our algorithm is correct. Let $C^*$ be the chunking with the maximum perceived cost from $A_2$'s perspective that $A_1$ will still take. Let $i^*$ be the bottleneck of $C^*$ for $A_2$. Then, the contrapositive of the theorem shows that our algorithm will find $C^*$ (or a chunking with equivalent $A_2$-perceived cost) when $i = i^*$.

Unfortunately, this problem is not symmetric with respect to $A_1$ and $A_2$. In other words, we still must solve the problem of chunking an edge such that $A_2$ takes it but $A_1$ finds it maximally unappealing. The only modification we need to make is to phase 3, where we instead \textit{increase} $x_{>i}$ and decrease $x_{\le i}$, which will increase $p(e_i; b_1)$ without increasing $p(e_i; b_2)$. More details can be found in the appendix.

\subsection{Keeping Agents on the Same Path}
In this section, we consider the problem of chunking a single edge $(u, v)$ so that all agents take the chunking. This problem can solved greedily, even if we have $m$ types of agents.
\DontPrintSemicolon
\setcounter{algocf}{2}
\begin{algorithm}
    \For{$i = k$ \KwTo $1$} {
        maximize $x_i$ such that $p(e_i; b_j) \le \alpha_u^{(j)}$ for all $j \in [m]$\\
        \If{$x_i < 0$}{
            \Return $\bot$
        }
        \If{$\sum_{i} x_i \ge x$} {
            lower $x_i$ so that $\sum_{i} x_i = x$ \\
            \Return{chunking $C$}
        }
    }
    \Return $\bot$ \tcp*[f]{$\sum_{i} x_i < x$}
    \caption{Greedily chunk edge $(u, v)$ into $k$ chunks for $m$ agents}
    \label{alg:greedy2}
\end{algorithm}
This algorithm will produce a chunking that the agents will all take, iff such a chunking exists. We use the following lemma, which is proven in the appendix. To introduce the lemma, we define a \textit{partial chunking} as a chunking that does not assign all the cost of the original edge. \autoref{alg:greedy2} can be viewed as building partial chunkings into a complete chunking.
\begin{restatable}{lemma}{mAgentsGreedy}
\label{lm:m_agents_greedy}
    Let $C$ and $C'$ be two (possibly partial) chunkings of the same edge. Suppose that $\sum_{i = l}^k x_i' > \sum_{i = l}^k x_i$. Then, there exists an $i \in [l, k]$ such that for all $b > 1$, $p(e_i'; b) > p(e_i; b)$.
\end{restatable}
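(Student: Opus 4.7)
The plan is to identify the right index by a tail-sum argument and then to compare perceived costs via a short case analysis on intermediate-vertex shortest paths. Let $R_j = \sum_{i=j}^{k} x_i$ and $R_j' = \sum_{i=j}^{k} x_i'$ denote tail sums, so the hypothesis reads $R_l' > R_l$. I would take $i^* \in [l,k]$ to be the largest index with $R_{i^*}' > R_{i^*}$; this exists by hypothesis. The first small claim is that $x_{i^*}' > x_{i^*}$. If $i^* = k$ this is immediate from $x_k' = R_k' > R_k = x_k$; otherwise the maximality of $i^*$ forces $R_{i^*+1}' \le R_{i^*+1}$, and then $x_{i^*}' = R_{i^*}' - R_{i^*+1}' > R_{i^*} - R_{i^*+1} = x_{i^*}$.

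The central auxiliary step is to prove
\[
c(u_{i^*+1}' \to t) - c(u_{i^*+1} \to t) \;\ge\; R_{i^*+1}' - R_{i^*+1}.
\]
The key observation is that the intermediate vertex $u_{i^*+1}$ inherits its external out-edges (all out-edges of $u$ other than the next chunk) from $u$, and those do not depend on how the edge is chunked. So the shortest path from $u_{i^*+1}$ to $t$ takes the form $\min(R_{i^*+1} + c(v \to t),\, M)$, where $M$ is the cost of the cheapest external $u \to t$ route avoiding the chunking, and $M$ is the same in $C$ and $C'$. A short four-case analysis on whether each $\min$ is realized by its chunking-path argument or by $M$ then establishes the inequality. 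The only case that requires genuine input is when both minima equal $M$, so the left side is zero; there I would invoke the maximality of $i^*$, which gives $R_{i^*+1}' \le R_{i^*+1}$, so the right side is $\le 0$ as well.

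Given these two ingredients, the conclusion is a one-line computation. Using the identity $R_{i^*}' - R_{i^*} = (x_{i^*}' - x_{i^*}) + (R_{i^*+1}' - R_{i^*+1})$,
\begin{align*}
p(e_{i^*}'; b) - p(e_{i^*}; b)
&= b(x_{i^*}' - x_{i^*}) + \bigl(c(u_{i^*+1}' \to t) - c(u_{i^*+1} \to t)\bigr) \\
&\ge b(x_{i^*}' - x_{i^*}) + (R_{i^*+1}' - R_{i^*+1}) \\
&= (b-1)(x_{i^*}' - x_{i^*}) + (R_{i^*}' - R_{i^*}).
\end{align*}
The first summand is nonnegative for every $b > 1$ (since $x_{i^*}' > x_{i^*}$), and the second is strictly positive by the choice of $i^*$, so the total is positive for every $b > 1$, as required. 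The main obstacle is the intermediate-vertex shortest-path inequality: the conceptual content is noticing that only the tail-sum difference $R_{i^*+1}' - R_{i^*+1}$, and not the full combinatorics of competing external paths, governs the change in shortest-path cost at the intermediate vertex. Once that observation is isolated, the maximality-of-$i^*$ choice ties everything together cleanly.
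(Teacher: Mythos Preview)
Your proof is correct and takes a genuinely different route from the paper's. The paper argues by contrapositive: assuming that for every $i \in [l,k]$ there is some bias $b_i$ with $p(e_i; b_i) \ge p(e_i'; b_i)$, it shows by downward induction on $j$ (from $k$ to $l$) that $\sum_{i \ge j} x_i \ge \sum_{i \ge j} x_i'$, and reaches a contradiction at $j = l$. You instead argue directly and constructively, picking $i^*$ as the largest index with a strict tail-sum excess and verifying the perceived-cost inequality there for all $b > 1$ simultaneously. Your approach has the advantage of actually exhibiting the witnessing index and giving a uniform lower bound $(b-1)(x_{i^*}'-x_{i^*}) + (R_{i^*}'-R_{i^*})$ on the perceived-cost gap; the paper's inductive argument is more mechanical but less informative. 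One small point on your four-case analysis: the mixed case where the $C'$-minimum equals $M$ but the $C$-minimum follows the chunking (i.e., $R_{i^*+1}' + c(v \to t) > M \ge R_{i^*+1} + c(v \to t)$) is not quite trivial either---it would force $R_{i^*+1}' > R_{i^*+1}$ and is therefore ruled out by the same maximality fact $R_{i^*+1}' \le R_{i^*+1}$ that you already invoke for the both-$M$ case. You have the right ingredient; just be sure to note that two of the four cases, not one, lean on maximality.
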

The lemma says that if a chunking $C'$ assigns more cost to the last $k-l$ edges than $C'$, then one of those last $k-l$ edges must have a higher perceived cost (for \textit{any} present-biased agent). We now prove that the algorithm is correct.

\begin{theorem}
    \label{thm:samePath} \autoref{alg:greedy2} runs in time $O(mk)$. Further:
    \begin{enumerate}[(a)]
        \item If \autoref{alg:greedy2} returns a chunking $C$, then all agents will take $C$.
        \item If \autoref{alg:greedy2} returns $\bot$, then no chunking exists that all agents would take.
    \end{enumerate}
\end{theorem}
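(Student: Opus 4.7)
The proof splits into three claims. For the runtime: the main loop executes $k$ times, and within each iteration the $m$ constraints $p(e_i; b_j) \le \alpha_u^{(j)}$ each give a linear upper bound on $x_i$, so maximizing $x_i$ reduces to a minimum over $m$ values in $O(m)$ time; the subsequent tests are $O(1)$. Thus the algorithm runs in $O(mk)$ time.

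For (a), suppose the algorithm returns a chunking $C$ at iteration $i^*$. By construction, the value assigned to $x_i$ at each iteration $i \ge i^*$ satisfies $p(e_i; b_j) \le \alpha_u^{(j)}$ for every $j$. The final step lowers $x_{i^*}$ so that $\sum_l x_l = x$; this only decreases $p(e_{i^*}; b_j)$, and since it also lowers $c(u_{l+1} \to t)$ for $l < i^*$, it only decreases the perceived costs of earlier edges as well. Hence every agent $A_j$ prefers the chunking to their best alternative with perceived cost $\alpha_u^{(j)}$, and will take $C$.

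For (b), I plan to prove the contrapositive: if some chunking $C'$ is taken by every agent, the algorithm does not return $\bot$. Let $\tilde x_i$ denote the value the algorithm would compute at step $i$, ignoring the non-negativity check. Two facts suffice: (i) $\tilde x_i \ge 0$ at every step, and (ii) $\sum_l \tilde x_l \ge x$, so that the ``$\sum x_i \ge x$'' test eventually triggers. I would establish (i) by reverse induction on $i$. The base case $i=k$ is straightforward: if $\tilde x_k < 0$ then $\alpha_u^{(j^*)} < c(v \to t)$ for some $j^*$, but any chunking $C'$ satisfies $p(e_k'; b_{j^*}) \ge c(v \to t) > \alpha_u^{(j^*)}$, contradicting that $A_{j^*}$ takes $C'$. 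For $i < k$, the step-$(i{+}1)$ constraint for each agent $j$ bounds $\sum_{l > i+1} \tilde x_l$, and rearranging shows the step-$i$ upper bound on $x_i$ coming from $j$ is non-negative; taking the min over $j$ gives $\tilde x_i \ge 0$.

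For (ii), I would argue by contradiction: suppose $\sum_l \tilde x_l < x = \sum_l x_l'$. By \autoref{lm:m_agents_greedy} applied with $l=1$, some index $i$ satisfies $p(e_i'; b) > p(\tilde e_i; b)$ for all $b > 1$. Since $\tilde x_i$ is maximized subject to $p(\tilde e_i; b_j) \le \alpha_u^{(j)}$ for every $j$, some constraint is tight: $p(\tilde e_i; b_{j^*}) = \alpha_u^{(j^*)}$ for some $j^*$. Combining these inequalities gives $p(e_i'; b_{j^*}) > \alpha_u^{(j^*)}$, so $A_{j^*}$ would not take $C'$---a contradiction. The main obstacle, I expect, is step (i): while \autoref{lm:m_agents_greedy} disposes of the ``sum too small'' case cleanly, showing that each $\tilde x_i$ remains non-negative requires directly tracking how a tight constraint at one iteration propagates to non-negativity at the next, which the lemma does not supply.
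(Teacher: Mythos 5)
Your runtime argument, part (a), and sub-claim (ii) of part (b) all track the paper's proof closely. In particular, your handling of (ii) is the paper's treatment of the case where \autoref{alg:greedy2} returns $\bot$ after completing the loop: apply \autoref{lm:m_agents_greedy} with $l=1$ to obtain an index $i$ with $p(e_i';b) > p(\tilde e_i;b)$ for every $b>1$, observe that the algorithm's maximization forces some agent's constraint to be tight at $\tilde e_i$, and conclude that this agent abandons $C'$ at chunk $i$. The divergence---and the place you yourself flag as the obstacle---is sub-claim (i).

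The paper handles the mid-loop $\bot$ case (your (i)) with a direct, non-inductive argument. Suppose \autoref{alg:greedy2} returns $\bot$ at iteration $i$; then even $x_i = 0$ violates $p(e_i;b_{j^*}) \le \alpha_u^{(j^*)}$ for some $j^*$, i.e., $\min(c(u,w)+c(w\to t),\ \sum_{l>i} x_l + c(v\to t)) > \alpha_u^{(j^*)}$. Now observe that, over all complete chunkings, the smallest possible perceived cost of the \emph{first} chunk for $A_{j^*}$ is achieved by placing zero cost on it, which yields $\min(c(u,w)+c(w\to t),\ x + c(v\to t))$. Since $\sum_{l>i} x_l < x$ (otherwise the loop would already have returned at iteration $i+1$), this lower bound is at least $p(e_i;b_{j^*}) > \alpha_u^{(j^*)}$, so $A_{j^*}$ would deviate at the very first chunk of any complete chunking. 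This requires no reverse induction and no propagation of constraint tightness from step $i{+}1$ to step $i$, which is exactly the machinery you anticipated would be the sticking point.

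So the structure and key lemma of your proposal match the paper's, but the one step you left unfinished is filled in the paper by a simpler direct comparison to the zero-first-chunk chunking rather than by the reverse induction you sketched.
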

\begin{proof}
    Statement (a) is obvious; if a chunking is returned, then it must be the case that $p(e_i; b_1) \le \alpha_u^{(j)}$ for all $i$ and for all $j$. Thus, every chunk is more appealing than every agent's best outside option, and so all agents take $C$. The runtime is also obvious: inside the loop, the only work being done is computing the maximum $x_i$ such that $p(e_i; b_j) \le \alpha_u^{(j})$, for all $j \in [m]$.

    We prove statement (b) by looking at two cases. For the first
    case, suppose the algorithm returns $\bot$ at iteration $i$. This
    means that when $x_i = 0$, $p(e_i; b_j) > \alpha_u^{(j)}$ for some
    agent $j$. However, note that if $x_i = 0$, then $p(e_i; b_j) =
    \min(c(u, w) + c(w \to t), \sum_{l > i} x_l + c(v \to t))$, where
    $\sum_{l > i} x_l \le x$ (or the algorithm would have terminated
        at $i+1$). Over all chunkings, the smallest perceived cost of  
    the first edge is achieved when no weight is placed on it. Let
    $e_1^{\min}$ be the first edge in such a chunking. Then,
    $p(e_1^{\min}; b_j) = \min(c(u, w) + c(w \to t), x + c(v \to
    t))$. Since $x \ge \sum_{j > i} x_j$, we know that $p(e_1^{\min};
    b_j) \ge p(e_i; b_j) > \alpha_u^{(j)}$. Thus, in any other
    chunking, the agent $j$ would deviate at the first chunk. 

    In the second case, suppose the algorithm returns $\bot$ at the end. This means that, for all $i$, $p(e_i; b_j) = \alpha_u^{(j)}$ for some agent $j$ and $\sum_{i} x_i < x$. In other words, the chunking $C$ that the algorithm produces is a partial chunking, and a complete chunking must assign more cost. However, \autoref{lm:m_agents_greedy} says that if any chunking $C'$ assigns more cost, then there would be some edge $e'$ of $C'$ which all agents would perceive as more expensive. So, some agent would abandon their path at $e'$. Thus, there is no complete chunking that all agents would take.
\end{proof}

\subsection{Optimal Graph Chunking for Multiple Agents}
We now revisit the problem of optimal graph chunking, with a local or global chunking budget, $k$.
\subsubsection{Two Types}
We first assume we have a local chunking budget of $k$ chunks per edge, and try to minimize the sum of the two agents' (real) costs.\footnote{It's trivial to modify the recurrence to instead minimize the maximum of the two types' costs, a weighted average (useful if one type is much more common), or many other such functions.} We first reformulate our solution to the single agent case to introduce the idea of ``persuadable'' edges. In that case, we used the recurrence $cost[u]$ to represent the minimum cost of any $u \to t$ path that we could persuade the agent to take. We computed the recurrence via $cost[u] = \min_{v \in \mathcal{P}(u)} c(u, v) + cost[v]$, where $\mathcal{P}(u) = \{v: (u, v) \in E, \beta_{u, v} \le \alpha_u\}$ represents the set of vertices we can persuade the agent to take from $u$.

We can define a very similar recurrence for two agents. Say that two
paths $P$ and $Q$ are $(A_1, A_2)$-compatible if we can chunk (some
of) the edges along $P$ and $Q$ such that $A_1$ takes $P$ and $A_2$
takes $Q$. Let $cost[u, y]$ represent the minimum sum of the costs of
any $u \to t$ path $P$ and a $y \to t$ path $Q$ such that $(P, Q)$ are
$(A_1, A_2)$-compatible. Further, let $\mathcal{P}(u, y)$ be the set
of all edges $(v, z)$ such that $(u, v)$ and $(y, z)$ can be
``compatibly-chunked''. This means that, if $(u, v) = (y, z)$, then
there exists a chunking of $(u, v)$ that both agents take. Otherwise,
there exist chunkings $C_1, C_2$ of $(u, v)$ and $(y, z)$ such that
$A_1$ takes $C_1$ and $A_2$ takes $C_2$. If $u \neq y$ (i.e., the
agents start at different vertices), then $\mathcal{P}(u, y)$ can be
easily computed via the algorithms in Section 4. And $\mathcal{P}(u,
u)$ can be computed via the algorithms in Section 6.1 and 6.2.  

With these functions, the recurrence can be broken into three cases. The first case is when $A_2$ is about to go to the vertex, $u$, that $A_1$ is currently at. In this case, we need to ensure that our chunking of $(u, v)$ for $A_1$ doesn't cause issues for $A_2$. This case can be represented as:
\begin{align*}
    C_1(u, v, y) = \begin{cases*}
        c(y, u) + cost[u, u] & if $(v, u) \in \mathcal{P}(u, y)$ \\
                \infty & otherwise.
    \end{cases*}
\end{align*} 
The second case is similar, but with the agents flipped.
\begin{align*}
    C_2(u, y, z) = \begin{cases*}
        c(u, y) + cost[y, y] & if $(y, z) \in \mathcal{P}(u, y)$ \\
                \infty & otherwise.
    \end{cases*}
\end{align*}
Finally, if neither of the previous cases occur, the cost is:
\begin{align*}
    C_3(u, v, y, z) = c(u, v) + c(y, z) + cost[v, z].
\end{align*}
Putting it all together, the recurrence is:
\begin{align*}
    cost[u, y] = &\min_{(v, z) \in \mathcal{P}(u, y)} \min(C_1(u, v,
        y), C_2(u, y, z), C_3(u, v, y, z)).
\end{align*}
We first prove the correctness of this recurrence.
\begin{lemma}
    \label{lm:costRec}
    The recurrence for $cost[u, y]$ above is the cost of the cheapest paths $P: u \to t$ and $Q: y \to t$ such that $P$ and $Q$ are $(A_1, A_2)$-compatible.
\end{lemma}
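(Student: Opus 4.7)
The plan is to prove the lemma by strong induction on pairs $(u, y)$, ordered by their position in a reverse topological ordering of $G$ (a pair is "smaller" when both coordinates are closer to $t$ in the DAG). The base case $u = y = t$ is immediate, with the two required paths both being empty and $cost[t, t] = 0$. For the inductive step, I would establish two containments: \textbf{(i)} every $(A_1, A_2)$-compatible pair of paths starting at $(u, y)$ has total cost at least the recurrence value, and \textbf{(ii)} the recurrence value is achieved by some compatible pair.

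For direction (i), I would take an optimal compatible pair $(P^*, Q^*)$, let $(u, v^*)$ and $(y, z^*)$ be their respective first edges, and note that compatibility of the first steps forces $(v^*, z^*) \in \mathcal{P}(u, y)$. I would then split on how the paths interact at their first step: if $z^* = u$, the tail of $Q^*$ (starting at $u$) paired with $P^*$ is a compatible pair starting at $(u, u)$ and the inductive hypothesis bounds its cost by $cost[u, u]$, matching $C_1$; if $v^* = y$ we obtain the symmetric $C_2$ bound; and otherwise the remaining tails form a compatible pair starting at $(v^*, z^*)$, giving the $C_3$ bound. For direction (ii), for each $(v, z) \in \mathcal{P}(u, y)$ and each active case, I would construct an explicit compatible pair realizing the claimed cost by combining the chunking that witnesses $(v, z) \in \mathcal{P}(u, y)$ with the (inductively obtained) chunkings for the downstream subproblem. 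The key observation making this construction legal is that chunkings on distinct edges are independent: the chunking that persuades $A_1$ to take $(u, v)$ constrains only the chunking of $(u,v)$ (and leaves the chunkings of the other out-edges of $u$ free to be left trivial, preserving $A_1$'s preference), and similarly for $A_2$ at $y$, so downstream inductive chunkings can be combined without conflict.

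The step I expect to be the main obstacle is verifying the merger cases $C_1$ and $C_2$. In $C_1$, once $A_2$ commits to the edge $(y, u)$, both agents are effectively starting from $u$, and the future is captured by $cost[u, u]$, which itself relies on the two-agent single-edge algorithms from Sections 6.1--6.2 to chunk edges out of $u$ in ways that are simultaneously acceptable to both agents. I would need to check carefully that (a) committing to $(y, u)$ for $A_2$ does not interfere with $cost[u, u]$'s chunkings, since the edges out of $y$ other than $(y, u)$ only affect $A_2$'s decision at $y$, which has already been made; and (b) the $c(y, u)$ term correctly accounts for the single extra edge not already absorbed into the recursive subproblem. A symmetric argument handles $C_2$. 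Once these cases are verified, the remaining case $C_3$ is straightforward, as the agents move to disjoint vertices and the chunkings of $(u, v)$ and $(y, z)$ affect no common decision, cleanly decoupling the first step from the inductive subproblem $cost[v, z]$.
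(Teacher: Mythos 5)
Your proposal is correct and follows essentially the same approach as the paper's proof: induction on the DAG with a case split based on whether the agents' first edges share an endpoint ($v^* = y$, $z^* = u$, or neither), matching the paper's three-case min over $C_1, C_2, C_3$. Your version is somewhat more careful in that you explicitly argue both the lower-bound direction and achievability, and flag the independence of chunkings on distinct edges, whereas the paper mainly checks that each case of the min yields an $(A_1, A_2)$-compatible pair given inductively correct downstream $cost$ values.
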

\begin{proof}
    Assume that $cost[v, z]$ have been correctly computed for all $v$
    (resp. $z$) that are out-neighbors of $u$ (resp. $y$). We know
    that $u \neq v$ and $y \neq z$, because there are no self-loops in
        a DAG. We now proceed by cases. 
    
    \paragraph{Case 1: $u = y$.} First, note that $v \neq u, z \neq u$ for all $(v, z) \in \mathcal{P}(u, u)$. So, we will only be in the first case of the min.
    In this case, $P(u, y) = P(u, u)$ will return all $(v, z)$ such that there exist chunkings $C_1$ of $(u, v)$ and $C_2$ of $(u, z)$ such that $A_1$ takes $C_1$ and $A_2$ takes $C_2$, if both are at $u$. Further, if $v = z$, then $C_1 = C_2$ (i.e., $(u, v)$ is chunked such that both agents take it). Recall that $cost[v, z]$ is the cheapest cost of $(A_1, A_2)$ compatible paths $P': v \to t$ and $Q': z \to t$. Since $A_1$ going from $u \to v$ is compatible with $A_2$ going from $u \to z$, we get that the paths $P: (u, v) \cup P'$ and $Q: (u, z) \cup Q'$ are $(A_1, A_2)$ compatible.
    \paragraph{Case 2: $u \neq y$.} When $u \neq y$, all three cases of the min are possible. Since $P(u, y)$ describes all possible ways to chunk for $A_1$ at $u$ and $A_2$ at $y$, the min will be correct as long as all three cases lead to $(A_1, A_2)$-compatible paths, so that's what we'll prove.

    In the first case, assume that $v \neq y, u \neq z$. From the correctness of $cost[v, z]$, and the fact that $(u, v)$ and $(y, z)$ share no endpoints, it immediately follows that the $u \to t$ and $y \to t$ paths are $(A_1, A_2)$-compatible.

    In the second case, assume that $v = y$ (this implies that $z \neq u$, as otherwise $u$ and $y$ form a cycle). In other words, $A_1$ will go from $u$ to $y$ and meet $A_2$ there. Thus, we simply add the edge $(u, y)$ to $A_1$'s path and continue the traversal with both agents at $y$. So by the correctness of $cost[y, y]$, it follows that the $u \to t$ and $y \to t$ paths are $(A_1, A_2)$-compatible. 

    The third case, where $z = u$, is symmetric to the second case, but with the agents swapped.
\end{proof}
Suppose that there is a local budget of $k$ chunks per edge.
\begin{restatable}{theorem}{twoagentslocal}
    Given any task graph $G = (V, E)$ and a local constraint $k$, we can optimally chunk $G$ for two types of agents in time $O(|E|^2k^2 + |V|)$.
\end{restatable}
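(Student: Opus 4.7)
The plan is to reduce the theorem to two ingredients: the correctness of the recurrence established in \autoref{lm:costRec}, and a careful accounting of the runtime for (i) tabulating the compatibility relation $\mathcal{P}$ and (ii) evaluating the DP. Correctness of $cost[u, y]$ is already done, so the proof is essentially a runtime analysis.

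First I would compute shortest-path distances $c(v \to t)$ for all $v \in V$ via a reverse topological scan, giving the $O(|V| + |E|)$ startup cost that accounts for the additive $O(|V|)$ term and which feeds every subsequent perceived-cost computation. Next, I would build the compatibility tables $\mathcal{P}(u, y)$. When $u \neq y$, the two edge chunkings are independent, so it suffices to precompute, for every $(u, v) \in E$ and each agent $i \in \{1, 2\}$, the minimum bottleneck cost of a $k$-chunking of $(u, v)$ using \autoref{alg:opt-edge-chunk}; this is $O(|E| k)$ total, after which each test $(v, z) \in \mathcal{P}(u, y)$ is a constant-time comparison against $\alpha_u^{(1)}$ and $\alpha_y^{(2)}$. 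When $u = y$, the case $v = z$ is handled by \autoref{alg:greedy2} in $O(k)$ per out-neighbor, and the case $v \neq z$ is handled by \autoref{alg:chunkSplit} (and its $A_2$-symmetric variant) in $O(k^2)$ per ordered pair of out-neighbors. Summing across $u$ gives a bound of $\sum_u |N^+(u)|^2 \cdot O(k^2)$, and since $\sum_u |N^+(u)|^2 \le \bigl(\sum_u |N^+(u)|\bigr)^2 = |E|^2$, this phase costs $O(|E|^2 k^2)$.

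For the DP itself, I would evaluate $cost[u, y]$ by lazy memoization starting from $(s, s)$, with base cases $cost[t, t] = 0$ and $cost[t, y], cost[u, t]$ handled by the single-agent procedure of \autoref{thm:local-budget}. Each reached pair performs a minimization over at most $|N^+(u)| \cdot |N^+(y)|$ candidate next moves, evaluated as a constant-time lookup in the precomputed tables and classified into the three cases $C_1, C_2, C_3$ of the recurrence. Since $\sum_{u, y} |N^+(u)| \cdot |N^+(y)| = \bigl(\sum_u |N^+(u)|\bigr)^2 = |E|^2$, the DP contributes only $O(|E|^2)$ on top of Phase 2, and returning $cost[s, s]$ produces the answer.

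The main obstacle is the Phase-2 bound for the $u = y$ case, which rests on the inequality $\sum_u |N^+(u)|^2 \le |E|^2$ to absorb the per-vertex quadratic blow-up caused by invoking \autoref{alg:chunkSplit} over all ordered out-neighbor pairs; without this observation one would get a weaker bound involving the maximum out-degree. A secondary subtlety is using lazy evaluation rather than eagerly initializing a $|V|^2$-sized table, so that the DP bookkeeping does not introduce a spurious $|V|^2$ term. Once these two points are in hand, summing Phases 1 through 3 yields the stated $O(|E|^2 k^2 + |V|)$ runtime.
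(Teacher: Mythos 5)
Your proposal is correct and follows essentially the same route as the paper's own proof: both reduce correctness to \autoref{lm:costRec} and then account for the runtime by showing that tabulating $\mathcal{P}$ is the bottleneck, with the $u = y$ case contributing $O(k^2)$ per ordered out-neighbor pair and $\sum_u |N^+(u)|^2 \le |E|^2$ pairs in total, after which the recurrence is a constant-time lookup per entry of $\mathcal{P} \subseteq E \times E$. Your extra care about lazy memoization (to avoid touching $|V|^2$ states) and the $cost[t,\cdot]$, $cost[\cdot,t]$ base cases fills in minor details that the paper leaves implicit, but does not change the structure of the argument.
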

\begin{proof}[Proof Sketch]
    The runtime of the algorithm is dominated by determining when it's possible to split the agents onto separate paths. All together, this will take $O(|E|^2)$ applications of the algorithm in Section 6.1, for a total runtime of $O(|E|^2k^2)$. The algorithm first computes $\mathcal{P}(u, y)$ for all $u, y \in V$, and then computes the $cost$ recurrence.
    More details can be found in the appendix.
\end{proof}
Finally, suppose there is a global budget of $k$ chunks.
\begin{restatable}{theorem}{twoagentsglobal}
    Given any task graph $G = (V, E)$ and a global constraint $k$, we can optimally chunk $G$ for two types of agents in time $O(|E|^2 k^3 \log k + |V|)$.
\end{restatable}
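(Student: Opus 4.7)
The plan is to extend the recurrence of Lemma~\ref{lm:costRec} with an extra dimension for the remaining chunk budget, defining $cost[u, y, l]$ as the minimum sum of costs of an $(A_1, A_2)$-compatible pair of paths from $u$ and $y$ to $t$ that uses at most $l$ chunks in total. The base cases handle $u = t$ or $y = t$, and the recurrence mirrors the three cases of the local-budget setting, except that at each transition we must also decide how to divide the remaining budget between the two edges being chunked.

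The first step is a preprocessing phase that, for every ordered pair of edges $(e_1, e_2)$ and every $l_1 \in [0, k]$, computes the minimum $l_2$ such that a compatible chunking using $l_1$ chunks on $e_1$ and $l_2$ chunks on $e_2$ is feasible. Since feasibility is monotone in each coordinate (adding chunks on either edge can only help), binary search over $l_2$ suffices. Each feasibility check invokes the single-edge algorithm of Theorem~\ref{thm:algChunkingOpt} when the edges have distinct tails, \autoref{alg:greedy2} when both agents must traverse a shared edge, or \autoref{alg:chunkSplit} (together with its mirror, with the roles of $A_1$ and $A_2$ swapped) when the agents split at a common vertex. Each check costs $O(k^2)$, giving $O(k \log k \cdot k^2) = O(k^3 \log k)$ work per pair and $O(|E|^2 k^3 \log k)$ in total.

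For the DP itself, observe that $cost[v, z, \cdot]$ is monotone non-increasing in its budget argument, so for each fixed $l_1$ the optimal $l_2$ is precisely the precomputed minimum feasible value. For each state $(u, y, l)$, we iterate over out-neighbor pairs $(v, z)$ of $(u, y)$ and over $l_1 \in [0, l]$, doing $O(1)$ work per triple via a table lookup. Summing, weighted by out-degree products, yields $\sum_{l} l \cdot \sum_{u,y}(\deg u)(\deg y) = O(|E|^2 k^2)$ total DP work, which is dominated by the preprocessing. Adding the $O(|V| + |E|)$ shortest-path precomputation yields the claimed $O(|E|^2 k^3 \log k + |V|)$ bound.

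The hard part will be verifying that the per-pair minimum-$l_2$ tables correctly capture feasibility in the split case ($u = y$, $v \neq z$), where each agent must simultaneously prefer their assigned chunked edge to the other agent's chunked edge. The argument should lean on \autoref{thm:chunk_split}: \autoref{alg:chunkSplit} produces a chunking of $e_1$ that is maximally unappealing to $A_2$ among chunkings $A_1$ will take, and its mirror does the same with the roles swapped for $e_2$. Combining these two Pareto-optimal chunkings yields the tightest mutual feasibility threshold for any pair $(l_1, l_2)$, and a short monotonicity argument then justifies the binary search over $l_2$.
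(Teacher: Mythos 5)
Your proof follows essentially the same approach as the paper's: a three-index DP with the remaining chunk budget as the extra dimension, a preprocessing phase that uses binary search over one budget coordinate (exploiting monotonicity of feasibility) to build per-pair chunk-count tables, and the split case at a shared vertex identified as the $O(k^3\log k)$-per-pair bottleneck via \autoref{alg:chunkSplit}. The only difference is bookkeeping --- you store a per-$l_1$ minimum-$l_2$ table and re-iterate over $l_1$ inside the DP, whereas the paper collapses each ordered pair of edges to a single minimum total $l$ (found by binary-searching each column of the feasibility matrix), so the DP step is $O(|E|^2 k^2)$ for you versus $O(|E|^2 k)$ for the paper --- but preprocessing dominates in both, and the asymptotics agree.
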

\begin{proof}[Proof Sketch]
    Like in the single-agent global budget case, we first modify the function $\mathcal{P}$ to $\mathcal{P}'$, where $\mathcal{P'}(u, y)$ returns the set of $(v, z, i)$ such that $i$ is the minimum number of chunks to compatibly chunk $(u, v)$ and $(y, z)$ (where $i = \infty$ if no chunking is possible). The bottleneck is in computing the minimum number of chunks to split the agents from one vertex to two separate vertices. 
\end{proof}
\subsubsection{$m$ Types of Agents Taking the Same Path}
Assume that there are $m$ types of agents but only chunkings
where all $m$ types take the same path are allowed. This easily
reduces to the single agent case (found in Section 4), but we simply
use \autoref{alg:greedy2} to determine what edges we can persuade the
group of agents to take. More detail can be found in the appendix;
here, we simply state the main theorems. 
\begin{restatable}{theorem}{mAgentsLocal}
    Given any task graph $G = (V, E)$ and a local constraint $k$, we can find the optimal single-path chunking of $G$ for $m$ types of agents with at most $|E|$ applications of \autoref{alg:greedy2}, for a total runtime of $O(|E|mk + |V|)$.
\end{restatable}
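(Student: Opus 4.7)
The plan is to reduce this problem to the single-agent case handled by \autoref{thm:local-budget}, with \autoref{alg:greedy2} taking the place of \autoref{alg:opt-edge-chunk} as the "edge feasibility oracle." The single-agent algorithm had two conceptual steps: (i) determine which edges can be made persuasive to the agent with a $k$-chunking, and (ii) run a shortest-path computation on the surviving subgraph, chunking the edges along the winning path. For $m$ agents constrained to take a single path, the only change is that an edge $(u,v)$ is ``feasible'' iff there exists a $k$-chunking of $(u,v)$ that \emph{every} agent prefers to their respective outside options at $u$ -- and \autoref{thm:samePath} tells us that \autoref{alg:greedy2} decides this exactly and returns such a chunking when one exists.

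Concretely, I would first precompute, for every vertex $u$ and every agent type $j \in [m]$, the quantities $\alpha_u^{(j)} = p(u, w_j; b_j)$ where $w_j$ is the $j$th agent's greedy neighbor from $u$ in the unchunked graph. This requires the shortest-path costs $c(v \to t)$, which can be obtained in $O(|V|+|E|)$ time since $G$ is a DAG, followed by $O(|E|m)$ work to compute the $\alpha_u^{(j)}$ values. Then, for each edge $(u,v) \in E$, I would invoke \autoref{alg:greedy2} to test feasibility; by \autoref{thm:samePath}(b), if it returns $\bot$ then no chunking can persuade all $m$ agents to traverse $(u,v)$, and by \autoref{thm:samePath}(a), if it returns a chunking $C$ then all agents do traverse it. Edges returning $\bot$ are deleted, producing a subgraph $G'$.

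After this pruning step, I compute the shortest $s$-$t$ path in $G'$ using the standard DAG shortest-path algorithm in $O(|V| + |E|)$, and chunk each edge on this path via \autoref{alg:greedy2}. Correctness follows the same template as \autoref{thm:local-budget}: $G'$ contains an $s$-$t$ path because the unchunked greedy path out of any vertex survives (all agents already take their own greedy edge without chunking, but we care about the \emph{common} path, so the argument is slightly different -- I would explicitly note that if \autoref{alg:greedy2} returns $\bot$ for every out-edge of some reachable vertex, then no single-path chunking exists at all, and the algorithm would correctly report infeasibility). Conversely, any common path the designer could persuade all $m$ agents to follow consists entirely of feasible edges and hence is a candidate in $G'$, so the shortest such path is the global optimum.

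For the runtime, we perform at most $|E|$ calls to \autoref{alg:greedy2}, each costing $O(mk)$ by \autoref{thm:samePath}, giving $O(|E|mk)$ for the feasibility pass; the shortest-path computation and the final chunking pass along the chosen path contribute only $O(|V|+|E|)$ and $O(|E|mk)$ respectively, so the total is $O(|E|mk + |V|)$ as claimed. The main subtlety -- and the step I would write most carefully -- is verifying that restricting attention to a single ``jointly feasible'' path is without loss of generality in this single-path setting: since all $m$ agents must traverse the same sequence of edges, an edge is usable on the common path iff a single chunking of it persuades every agent, which is precisely the property \autoref{alg:greedy2} certifies.
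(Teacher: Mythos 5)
Your proposal matches the paper's proof: prune edges on which \autoref{alg:greedy2} returns $\bot$, run a DAG shortest-path on the surviving subgraph, and chunk the winning path, with correctness and the $O(|E|mk+|V|)$ runtime following exactly as in \autoref{thm:local-budget} but with \autoref{alg:greedy2} in place of \autoref{alg:opt-edge-chunk}. You are right to flag that the ``$G'$ always contains an $s$-$t$ path'' claim from the single-agent argument does not transfer verbatim here (the per-agent greedy edges need not coincide, and $k$ chunks may not suffice to unify them); the paper glosses over this, and your explicit infeasibility fallback is a small but genuine tightening of the argument.
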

\begin{restatable}{theorem}{mAgentsGlobal}
    Given any task graph $G = (V, E)$ and a global constraint $k$, we can find the optimal single-path chunking of 
 $G$ for $m$ types of agents with at most $|E|\log k$ applications of \autoref{alg:greedy2}, for a total runtime of $O(|E|mk\log k + |V|)$.
\end{restatable}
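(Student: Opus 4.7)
The plan is to mirror the proof of \autoref{thm:global-budget} (the single-agent global budget case), replacing the call to \autoref{alg:opt-edge-chunk} with a call to \autoref{alg:greedy2}, which is the multi-agent analogue for keeping all $m$ agents on the same path. The reduction is clean because we are restricted to single-path chunkings: at each edge $(u, v)$ we only need to decide whether all $m$ agents can be persuaded to traverse $(u, v)$ using some budget of chunks, and if so, the minimum number of chunks required.

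First, I would define $l_{u, v}$ to be the smallest $l \le k$ such that \autoref{alg:greedy2}, run on edge $(u, v)$ with $l$ chunks and all $m$ agents' thresholds $\alpha_u^{(j)}$, returns a chunking (rather than $\bot$); set $l_{u, v} = \infty$ if no such $l$ exists. The key monotonicity fact I would prove (or rather, observe) is that if \autoref{alg:greedy2} succeeds with $l$ chunks, it also succeeds with any $l' > l$. This follows because additional chunks can be assigned cost $0$ (or inserted without changing the perceived-cost structure for any agent), so the set of feasible chunk-budgets is upward-closed. Monotonicity justifies computing $l_{u, v}$ by binary search over $\{1, \dots, k\}$ using $O(\log k)$ calls to \autoref{alg:greedy2}; since each call runs in $O(mk)$ time by \autoref{thm:samePath}, we spend $O(mk\log k)$ time per edge and $O(|E|mk\log k)$ total.

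Second, I would set up the same DP recurrence as in \autoref{thm:global-budget}, but with the multi-agent $l_{u, v}$:
\begin{align*}
    cost[u, i] = \min_{\mathclap{v: (u, v) \in E,\ l_{u, v} \le i}}
        c(u, v) + cost[v, i - l_{u, v}],
\end{align*}
with base case $cost[t, i] = 0$ for $0 \le i \le k$, and the answer is $cost[s, k]$. Correctness follows from the fact that we are restricted to single-path chunkings: along any $s$-$t$ path $P$ that all agents traverse, each edge $(u, v) \in P$ must be chunkable so that every agent prefers the chunking to their best outside option at $u$, which by definition of $l_{u, v}$ requires at least $l_{u, v}$ chunks. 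The recurrence takes $O(|E|k)$ time via a reverse topological traversal. The shortest-path precomputation for the $\alpha_u^{(j)}$'s across the $m$ agents adds $O(m(|V| + |E|))$, which is absorbed into the stated runtime.

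The main obstacle---such as it is---is confirming the monotonicity of \autoref{alg:greedy2}'s output in the chunk budget, so that binary search is valid; once that is established, the rest is a direct transcription of the single-agent argument. Adding up, the total runtime is $O(|E| m k \log k + |V|)$ with $O(|E| \log k)$ calls to \autoref{alg:greedy2}, matching the theorem statement.
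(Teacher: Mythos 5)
Your proposal is correct and follows essentially the same route as the paper: reuse the dynamic program from \autoref{thm:global-budget} verbatim, but replace the single-agent bottleneck computation with \autoref{alg:greedy2} and binary-search for the minimum feasible chunk count per edge. Your explicit note that feasibility is upward-closed in the chunk budget (by prepending a cost-$0$ chunk) is a small but welcome addition; the paper leans on the analogous monotonicity implicitly, just as it does in \autoref{thm:global-budget}.
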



\section{Conclusion}

We have supplemented a graph-theoretic model of present bias
with a model of chunking, giving task designers the ability to chunk
edges in order to reduce the impact of present bias. We found that the
best way to chunk an edge is relatively straightforward for edges
on the shortest path, but significantly more complicated for edges
off the shortest path. We then used our optimal edge-chunking
algorithm to optimally chunk task graphs. We provided tight
theoretical guarantees on how much we can reduce an agent's cost ratio
as a function of the number of chunks we place in the graph. Finally, we showed how to optimally chunk task graphs for two types of agents simultaneously. Overall,
our work highlights the efficacy of chunking as a means to defeat the
harms agents incur due to their present bias.  

Our work raises several open questions. We highlight two interesting future directions. 
First, we saw that the problem grew significantly more complicated when designing for two types of agents. Can we extend our results to an arbitrary number of types? More generally, suppose the task designer was uncertain about the agents' present-bias and captured this uncertainty with a distribution over $b$. Our work can be seen as solving this problem when the support of this bias distribution is two. But can we chunk in the case where $b$ is continuously distributed? 

Second, as explained before, our model is best understood as the task designer suggesting a chunking to agents, rather than enforcing this chunking. In some situations, such as classroom settings, the task designer may want to place intermediate checkpoints to guarantee that agents make regular progress on the task. How should these checkpoints be modeled, and how much can they lower agents' costs compared to chunking?

%
%
%
%
%

\section*{Acknowledgements}
  The authors were supported in part by NSF grant
 IIS-1703846, MURI grant W911NF-19-1-0217, ARO grant W911NF-22-1-0061,
 and AFOSR grant FA23862114029.

\bibliographystyle{ACM-Reference-Format}
\bibliography{refs}

\appendix
\allowdisplaybreaks

\section{Optimal Edge Chunking Proofs}
\chunkOptEdge*
\begin{proof}
  \autoref{lem:base_case} proves the case where $k = 2$.
Suppose that the theorem holds for $k-1$ chunks; we prove it for
  $k$ 
    chunks. For now, we assume that the shortest path from $u_i$ to
    $u$ is $u_{i+1}$ for all $i > 1$. At the end, we'll consider when
    this is not true. Say we put cost $x_1$ on the first edge. Then,
    we apply the inductive hypothesis to the other $k-1$ edges, now
    with a task of cost $x - x_1$. The costs $x_2, \dots, x_k$ are
    thus: 
    \begin{align*}
            x_i = \frac{(b-1)^{k-i-1}b^{i-2}}{b^{k-1}-(b-1)^{k-1}}(x-x_1).
    \end{align*}
    Because the shortest path through $u_i$ is $u_{i+1}$ for all $i > 2$ as well, we know from the inductive hypothesis that this chunking is optimal (given that $x_1$ is on the first edge). Further, the perceived costs of starting with those edges are all
    \begin{align*}
     \frac{1}{1 - \left(\frac{b-1}{b}\right)^{k-1}} (x - x_1) + c(v \to t).
    \end{align*}
    We want to minimize the maximum of the perceived cost of starting
    with edge $(u_1, u_2)$ and all the other edges. As before, we can
    do so by setting the perceived costs equal, as one side is
    decreasing in $x_1$ while the other is increasing in
    $x_1$. Because the shortest path from $u_i$ is through $u_{i+1}$
    for $i > 1$, the perceived cost of starting with $(u_1, u_2)$ is
    $bx_1 + c(u_2 \to t) = bx_1 + x_2 + c(u_3 \to t) = \dots = bx_1 +
    \sum_{i = 2}^k x_i + c(v \to t) = bx_1 + x - x_1 + c(v \to t) =
    (b-1)x_1 + x + c(v \to t)$. 
    \begin{align*}
        (b-1)x_1 + x + c(v \to t) &= \frac{1}{1 - \left(\frac{b-1}{b}\right)^{k-1}} (x - x_1) + c(v \to t) \\
        (b-1)x_1 + x &= \frac{b^{k-1}}{b^{k-1} - (b-1)^{k-1}} (x - x_1) \\
        \left(b-1 + \frac{b^{k-1}}{b^{k-1} - (b-1)^{k-1}}\right) x_1 &= \left(\frac{b^{k-1}}{b^{k-1} - (b-1)^{k-1}} - 1\right)x \\
        \frac{(b-1)(b^{k-1} - (b-1)^{k-1})+b^{k-1}}{b^{k-1} - (b-1)^{k-1}} x_1 &= \frac{b^{k-1} - b^{k-1} + (b-1)^{k-1}}{b^{k-1} - (b-1)^{k-1}} x \\
        (b^{k} - b^{k-1} - (b-1)^{k} + b^{k-1})x_1 &= (b-1)^{k-1} x \\
                x_1 &= \frac{(b-1)^{k-1}}{b^{k} - (b-1)^{k}} x.
    \end{align*}
    Thus, $x_1$ matches the chunking in the theorem. We now verify that the perceived cost matches:
    \begin{align*}
        (b-1)x_1 &+ x + c(v \to t) = (b-1)\cdot\frac{(b-1)^{k-1}}{b^{k} - (b-1)^{k}} \cdot x + x + c(v \to t) \\
        &= \left(\frac{(b-1)^{k}}{b^{k} - (b-1)^{k}} + 1\right) x + c(v \to t) \\
        &= \frac{(b-1)^{k} + b^{k} - (b-1)^{k}}{b^{k} - (b-1)^{k}}  x + c(v \to t) \\
        &= \frac{b^{k}}{b^{k} - (b-1)^{k}}x + c(v \to t)\\
            &= \frac{1}{1 - \left(\frac{b-1}{b}\right)^{k}}x + c(v \to t).
    \end{align*}
    A similar calculation will show that all the perceived costs are the same:
    \begin{align*}
        &\frac{1}{1 - \left(\frac{b-1}{b}\right)^{k-1}} (x - x_1) + c(v \to t) \\
        &= \frac{b^{k-1}}{b^{k-1} - (b-1)^{k-1}}\left(x - \frac{(b-1)^{k-1}}{b^{k} - (b-1)^{k}} x\right) + c(v \to t) \\
        &= \frac{b^{k-1}}{b^{k-1} - (b-1)^{k-1}}\left(1 - \frac{(b-1)^{k-1}}{b^{k} - (b-1)^{k}} \right)x + c(v \to t) \\
        &= \frac{b^{k-1}}{b^{k-1} - (b-1)^{k-1}}\left(\frac{b^{k} - (b-1)^{k} - (b-1)^{k-1}}{b^{k} - (b-1)^{k}} \right)x + c(v \to t) \\
        &= \frac{b^{k-1}}{b^{k-1} - (b-1)^{k-1}}\left(\frac{b^{k} - (b-1)^{k-1}(b-1+1)}{b^{k} - (b-1)^{k}} \right)x + c(v \to t) \\
        &= \frac{b^{k-1}}{b^{k-1} - (b-1)^{k-1}}\left(\frac{b(b^{k-1} - (b-1)^{k-1})}{b^{k} - (b-1)^{k}} \right)x + c(v \to t) \\
        &= \frac{b^{k}}{b^{k} - (b-1)^{k}}x + c(v \to t) \\
            &= \frac{1}{1 - \left(\frac{b-1}{b}\right)^{k}}x + c(v \to t).
    \end{align*}
    Finally, we can plug the value of $x_1$ into the formula for $x_i$:
    \begin{align*}
        x_i &= \frac{(b-1)^{k-i}b^{i-2}}{b^{k-1}-(b-1)^{k-1}}(x-x_1) \\
         &= \frac{(b-1)^{k-i}b^{i-2}}{b^{k-1}-(b-1)^{k-1}}\left(x-\frac{(b-1)^{k-1}}{b^{k} - (b-1)^{k}} x\right) \\
         &= \frac{(b-1)^{k-i}b^{i-2}}{b^{k-1}-(b-1)^{k-1}}\left(1-\frac{(b-1)^{k-1}}{b^{k} - (b-1)^{k}} \right) x\\
        &=
        \frac{(b-1)^{k-i}b^{i-2}}{b^{k-1}-(b-1)^{k-1}}\left(\frac{b(b^{k-1}
                    - (b-1)^{k-1})}{b^{k} - (b-1)^{k}} \right) x \tag{see
          previous derivation}\\ 
                &= \frac{(b-1)^{k-i}b^{i-1}}{b^{k} - (b-1)^{k}} x. 
    \end{align*}
    Thus, we've shown all components of the inductive statement. To summarize, under the assumption that the shortest path from $u_i$ is through $u_{i+1}$ for all $i>1$, we've shown that the chunking in the theorem is optimal and produces the correct selective bias.

    When the shortest path from $u_i$ to $t$ is through some external vertex $w$ instead of $u_{i+1}$, we've overestimated the perceived cost at some edges. In our calculations, we assumed that all edges would have perceived cost $bx_i + \sum_{j > i} x_j + c(v \to t)$, but actually some edges would have a \textit{lower} perceived cost of $p(e_i) = bx_i + c(w \to t)$. However, the final edge $(u_k, v)$ would still have perceived cost $bx_k + c(v \to t)$, as we assumed in the theorem, and thus the perceived cost of that edge in the chunking would be $\frac{1}{1 - \left(\frac{b-1}{b}\right)^{k}}x + c(v \to t)$. So, though optimality can no longer be guaranteed, the chunking in the theorem produces the expected selective bias regardless of whether the edge is on the shortest path.
\end{proof}
As a brief sanity check, we show that the $x_i$'s defined in the theorem actually sum to $x$.
\begin{proposition}
    $$\forall k \ge 1, \sum_{i=1}^k (b-1)^{k-i}b^{i-1} = b^{k} -
    (b-1)^{k}.$$ 
\end{proposition}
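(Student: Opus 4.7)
The plan is to recognize the sum as a disguised form of the standard polynomial factorization
\[
a^k - c^k = (a-c)\sum_{i=0}^{k-1} a^{k-1-i} c^{i},
\]
applied with $a = b$ and $c = b-1$. Since $a - c = 1$ in this case, the factorization immediately gives
\[
b^k - (b-1)^k = \sum_{i=0}^{k-1} b^{k-1-i}(b-1)^{i}.
\]
The only remaining step is to reindex this sum to match the form stated in the proposition. Substituting $j = k-i$ on the right-hand side (so $j$ runs from $1$ to $k$) transforms $b^{k-1-i}(b-1)^i$ into $b^{j-1}(b-1)^{k-j}$, which is exactly the summand in the claim after relabeling $j$ as $i$.

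If one prefers not to cite the factorization as a black box, the identical result falls out of the finite geometric series. Pulling $(b-1)^{k-1}$ out of $\sum_{i=1}^k (b-1)^{k-i}b^{i-1}$ leaves $\sum_{i=1}^k (b/(b-1))^{i-1}$, whose closed form is $\bigl((b/(b-1))^k - 1\bigr)/\bigl(b/(b-1) - 1\bigr)$. The denominator simplifies to $1/(b-1)$, and multiplying through by $(b-1)^{k-1}$ and the reciprocal of the denominator collapses the expression to $b^k - (b-1)^k$. A third, equally short route is induction on $k$: the base case $k=1$ gives $1 = b - (b-1)$, and the inductive step uses $\sum_{i=1}^{k+1}(b-1)^{k+1-i}b^{i-1} = (b-1)\sum_{i=1}^{k}(b-1)^{k-i}b^{i-1} + b^{k}$, which by the inductive hypothesis equals $(b-1)(b^k - (b-1)^k) + b^k = b^{k+1} - (b-1)^{k+1}$.

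There is no real obstacle here; the identity is a one-line consequence of either the geometric series formula or the $a^k - c^k$ factorization, and the proposition is stated essentially as a sanity check that the weights $x_i$ in Theorem~\ref{thm:chunk-opt-edge} sum to the total cost $x$. I would present whichever derivation is most consistent with the paper's style, probably the factorization approach since it is the shortest.
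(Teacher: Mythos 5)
Your proposal is correct. The paper's own proof is the induction you describe as your third route, and your account of it (base case $k=1$, peel off the $i=k+1$ term, factor out $(b-1)$, apply the inductive hypothesis) matches the paper step for step. Your preferred derivation via the factorization $a^k - c^k = (a-c)\sum_{i=0}^{k-1} a^{k-1-i}c^i$ with $a=b$, $c=b-1$ is a genuinely different and shorter route: it makes the identity a one-line corollary of a standard algebraic fact rather than an inductive computation, at the cost of invoking that factorization as known. The geometric-series variant is essentially the same observation in different clothing. All three are sound; the induction has the small advantage of being fully self-contained, which may be why the paper chose it for what it frames as a sanity check.
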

\begin{proof}
    When $k = 1$, the left side is $(b-1)^0 b^0 = 1$, while the right
        side is $b^1 - (b-1)^1 = 1$. Suppose that the statement holds for
    $k$. Then: 
    \begin{align*}
        \sum_{i=1}^{k+1} (b-1)^{k+1-i}b^{i-1} &= b^{k} + \sum_{i=1}^{k} (b-1)^{k+1-i}b^{i-1} \\
        &= b^{k} + (b-1)\sum_{i=1}^{k} (b-1)^{k-i}b^{i-1} \\
        &= b^{k} + (b-1)(b^{k} - (b-1)^{k}) \tag*{by the inductive hypothesis} \\
        &= b^{k} + b^{k+1} - b^{k} - (b-1)^{k+1} \\
                &=  b^{k+1} - (b-1)^{k+1}.
    \end{align*}
\end{proof}

\chunkImprove*
\begin{proof}
    Let $C$ have bottleneck $\beta$ and $O$ have bottleneck $\beta'$, where both chunkings have the same transition vertex $\tau$. Let $J = \{j: p(e_j^C) < \beta\}$ and $I = \{i: p(e_i^C) = \beta\}$ partition the indices. We will show that $\sum_{j \in J} x_j^O > \sum_{j \in J} x_j^C$ and that $x_i^O < x_i^C$ for all $i \in I$.
    
    Since $O$ has a lower bottleneck, it must be the case that $p(e_k^O) < \beta$ for all $k$. This implies that for all $i \in I$, we get that $p(e_i^O) < p(e_i^C)$ (since $p(e_i^C) = \beta$).
Note that $c(u_i^O \to t) = c(u_i^C \to t)$, as both
    chunkings have the same transition vertex $\tau$. Since $p(e_i^O)
    = bx_i^O + c(u_i^O \to t)$ and $p(e_i^C) = bx_i^C + c(u_i^C \to
    t)$, the fact that $p(e_i^O) < p(e_i^C)$ implies that $x_i^O <
    x_i^C$. 

  Clearly if $x_i^O < x_i^C$ for all $i\in I$, then $\sum_{j
    \in J} x_j^O > \sum_{j \in J} x_j^C$, as $I$ and $J$ partition
    the indices, and both chunkings must sum to $x$. 
\end{proof}

\chunkingOpt*
\begin{proof}
    Let $C$ have bottleneck $\beta$ and transition vertex $\tau$, and let $O$ have bottleneck $\beta' < \beta$ (and an arbitrary transition vertex). We prove that $\sum_{i = 1}^j x_i^C > \sum_{i=1}^j x_i^O$ for all $j$ by induction. With this proven, we get our desired contradiction with $\sum_{i = 1}^k x_i^C = x > \sum_{i=1}^k x_i^O$, which means that $O$ does not assign all the cost.

    For the base case of $j = 1$, note that $p(e_1^C) > p(e_1^O)$ (because the bottleneck is lower). Expanding the perceived cost equations:
    \begin{align*}
        p(e_1^C) &> p(e_1^O)  \\
        bx_1^C + c(u_2^C \to t) &> bx_1^O + c(u_2^O \to t) \\
        bx_1^C + \min{c(u,w) + c(w \to t), x - x_1^C + c(v \to t)} &>
                bx_1^O + \min{c(u,w) + c(w \to t), x - x_1^O + c(v \to t)}.  
    \end{align*}
If $x_1^O = x_1^C + \eps$ for any positive $\eps$, the first
    term would go up by $b\eps$ and the $\min$ would decrease by at
    most $\eps$ (if the both $\min$s were the second term). Because $b
    > 1$, this would never satisfy the above equation, and so $x_1^C >
    x_1^O$.  

    The inductive case is essentially analogous to the base case. The perceived cost equation for arbitrary $j$ expands to:
    \begin{align*}
        bx_1^C &+ \min(c(u,w) + c(w \to t), x - \sum_{i = 1}^{j-1} x_i^C - x_j^C + c(v \to t)) \\
        &> bx_1^O + \min(c(u,w) + c(w \to t), x - \sum_{i = 1}^{j-1}
                x_i^O - x_j^O + c(v \to t)). 
    \end{align*}
    The inductive hypothesis tells us that $\sum_{i = 1}^{j-1} x_i^C > \sum_{i = 1}^{j-1} x_i^O$, so these terms do not change the argument. The only way that the inequality can be satisfied is if $x_j^C < x_j^O$. Otherwise, if $x_1^O = x_1^C + \eps$ for any positive $\eps$, the first term would go up by $b\eps$ and the $\min$ would decrease by at most $\eps$ (since the sum is greater on the left hand side). So by induction, we get the desired result.
\end{proof}

\algChunkingOpt*
\begin{proof}
    Let $w$ denote the node following $u$ on the shortest path from $u$ to $t$. If $v=w$, we can simply apply \autoref{thm:chunk-opt-edge} to immediately get the best partition. So assume $v \neq w$. This means that $\delta >0$. 
    
    We first focus on the difficult case where $\delta \le x$; the
    case where $\delta > x$ will be covered at the end. As
    mentioned earlier, this means that we can satisfy any value of
    $\tau$, by placing at least $\delta$ cost on the first $\tau$
    edges while ensuring that the total cost of the first $\tau-1$ edges
    is less than $\delta$. The case where $\tau=k$ is an edge case that
        will be handled at the end. So suppose that $\tau \in \{1, \dots,
    k-1\}$. We explain how to optimally chunk $(u,
    v)$ for this fixed value of $\tau$; in other words, we produce the
    optimal chunking over all chunkings that satisfy $\sum_{i=1}^\tau
    x_i \ge \delta$ and $\sum_{i=1}^{\tau-1} x_i \le \delta$.  

    We start by setting $x_1 = x_2 = \dots = x_\tau =
        \delta/\tau$. Then for all $i < \tau$, $p(e_i) = bx_i + c(u,w) +
    c(w \to t) = \frac{b\delta}{\tau} + c(u,w) + c(w \to t)$. Further: 
   $$\begin{array}{ll}
      p(e_\tau) &= bx_\tau + c(u_{\tau+1} \to t)\\ &= bx_\tau + \sum_{i
              = \tau+1}^k x_i + c(v \to t)  \hfill \mbox{(shortest path from
          $u_{\tau+1}$ follows the chunking)}\\ 
            &= bx_\tau + x - \sum_{i=1}^\tau x_i + c(v \to t)
\\                        &= b\cdot \frac{\delta}{\tau} - \tau\cdot\frac{\delta}{\tau} +
x + c(v \to t) \hfill \mbox{(substituing $x_{i} = \delta/\tau$ for $i \le \tau$)}\\
                    &= \frac{b\delta}{\tau} + c(u, w) + c(w \to t)
\hfill \mbox{(since $\delta = x + c(v \to t) - c(u,w) - c(w \to t)$).}   
\end{array}$$
    Let $\alpha = \frac{b\delta}{\tau} + c(u,w) + c(w \to t)$. Then $p(e_i) = \alpha$ for all $i \le \tau$.

    Now we can chunk the remaining $x - \delta$ cost over the remaining $k-\tau$ edges according to \autoref{thm:chunk-opt-edge}, which gives them perceived costs:
    \begin{align*}
        \frac{x-\delta}{1-\left(\frac{b-1}{b}\right)^{k-\tau}} + c(v \to t) \eqdef \beta.
    \end{align*}

    
    From \autoref{lm:chunking_opt}, we know that if $\alpha = \beta$, we have the optimal chunking (for \textit{any} transition vertex $\tau$, not just the current $\tau$). In that case, we stop the algorithm and return this chunking. Otherwise, there are two cases:

    Case 1: $\alpha > \beta$. In this case, we claim that our chunking is optimal among all chunkings with transition vertex $\tau$. Notice that our chunking has bottleneck $\alpha$. By \autoref{lm:chunking_improve}, if another chunking, $O$ with the same $\tau$ has bottleneck lower than $\alpha$, it must assign lower cost to all of the first $\tau$ edges. But this means that $\sum_{i=1}^\tau x_i^O < \delta$, which means the transition vertex would be later than $\tau$. Thus, if $\alpha > \beta$, our chunking is optimal (for this $\tau$).

    Case 2: $\beta > \alpha$. The key to this case is that the
    perceived cost of starting with $e_\tau$ can be understood in two
    ways, which allows us to group it into either the earlier or later
    set of edges. This isn't the case for any other edge, and using
    this fact will allow us to modify our original chunking to lower
    $\beta$. More specifically, the original chunking ensures that the
    perceived cost of starting with $e_\tau$ is equal to all previous
    edges; the first modification we do in this case is to set the
    perceived cost of starting with $e_\tau$ equal to all \textit{later}
    edges instead. 
    
    We start by leaving $x_i$ fixed at $\delta/\tau$ for all $i <
    \tau$, but then chunking the remaining $x - \delta\cdot
    \frac{\tau-1}{\tau}$ work over the remaining $k-1+1$ edges
    according to \autoref{thm:chunk-opt-edge}, which modifies
    $x_\tau$. Because this assignment equalizes the perceived cost of
    starting with $e_\tau$ with that of later edges, it must have increased
    $x_\tau$ to be higher than $\delta/\tau$; by similar reasoning,
    all $x_i$ where $i > \tau$ must have decreased. Thus, this
    chunking has $p(e_i) = \beta'$ for all $i \ge \tau$, where $\beta'
    < \beta$. Further, since the perceived cost of starting with
    $e_\tau$ was $\alpha$, and $x_\tau$ increased, the new perceived
    cost of starting with $e_\tau$, $\beta'$ must still be higher than
    $\alpha$.   

    We now increase $x_1, \dots, x_{\tau-1}$ to raise $\alpha$
    and lower $\beta'$. We do so by setting $x_{i}$ to a
    placeholder $y$ for all $i < \tau$ and then solving for the
    optimal $y$. Note setting all these values equal is (weakly)
    dominant, because the perceived costs of starting with these edges
    are all $bx_i + c(u,w) + c(w \to t)$. Thus, if another chunking
    had $x_i \neq x_j$, where $i, j < \tau$, then setting $x_i$ and
    $x_j$ equal to their average would only decrease $\max_{i < \tau}
    p(e_i)$. This would either reduce the bottleneck (if the
    bottleneck is before $\tau$) or keep it the same. So we can set
    them all equal to $y$ without loss of generality. 

    With this, $p(e_i)$ for $i < \tau$ is $by + c(u,w) + c(w \to
    t)$. We then use \autoref{thm:chunk-opt-edge} to optimally split
    the remaining $x - y(\tau-1)$ work over the remaining $k-\tau+1$
        edges. With that, for all $i \ge \tau$, we get
    \begin{align*}
        p(e_i) = \frac{x-y(\tau-1)}{1 -
                    \left(\frac{b-1}{b}\right)^{k+1-\tau}} + c(v \to t). 
    \end{align*}

    We now set the two perceived costs equal and solve for the best $y$:
    \begin{align*}
        \frac{x-y(\tau-1)}{1 - \left(\frac{b-1}{b}\right)^{k-\tau+1}}
                + c(v \to t) &= by + c(u, w) + c(w \to t) \mbox{, so}\\ 
        y\left(\frac{\tau-1}{1 -
          \left(\frac{b-1}{b}\right)^{k-\tau+1}} + b\right) &=
        \frac{x}{1 - \left(\frac{b-1}{b}\right)^{k-\tau+1}} + c(v \to
                t) - c(w \to t) - c(u, w). 
    \end{align*}
    To ease notation, let $z_\tau = 1 -
\left(\frac{b-1}{b}\right)^{k-\tau+1}$.  We can then simplify as  follows: 
    \begin{align*}
        y\left(\frac{\tau-1+z_\tau b}{z_\tau}\right) &=
                \frac{x}{z_\tau} + c(v \to t) - c(u,w) - c(w \to t),
                \mbox{ so} \\ 
        y &= \frac{x + z_\tau(c(v \to t) - c(w \to t) - c(u, w))}{\tau-1+z_\tau b} \\
        &= \frac{\delta z_\tau + (1-z_\tau)x}{\tau-1+z_\tau b} \eqdef y^*.
    \end{align*}
For our final chunking, $C^*$, we set $x_{i} = \min(y^*,
    \frac{\delta}{\tau-1})$ for $i < \tau$, and split the remaining
    work over the latter edges via \autoref{thm:chunk-opt-edge}. Under
    this chunking, let $\alpha^* = p(e_i)$ for $i < \tau$ and let
    $\beta^* = p(e_i)$ for $i \ge \tau$. We claim the following. 
    \begin{restatable}{claim}{finalAlpha}
        \label{cl:final_alpha}
        $\alpha^*, \beta^* > \alpha$
    \end{restatable}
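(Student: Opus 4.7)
The plan is to leverage monotonicity of two perceived-cost functions in the common value $y$ assigned to the first $\tau-1$ edges. Let $f(y) = by + c(u,w) + c(w\to t)$ denote the perceived cost of each of the first $\tau-1$ edges when they are all assigned cost $y$, and let $g(y) = \frac{x - y(\tau-1)}{z_\tau} + c(v\to t)$ denote the common perceived cost of edges $\tau,\ldots,k$ after distributing the residual $x - y(\tau-1)$ via \autoref{thm:chunk-opt-edge}. For $\tau \ge 2$, $f$ is strictly increasing and $g$ is strictly decreasing, and by construction $y^*$ is the unique solution of $f(y) = g(y)$; call this common value $E$.

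The first key step is to establish $y^* > \delta/\tau$. I would check by direct substitution that $f(\delta/\tau) = \alpha$ and that $g(\delta/\tau)$ equals $\beta'$, the value produced by the intermediate chunking from Case 2 (keeping $x_i = \delta/\tau$ for $i < \tau$ and re-balancing edges $\tau,\ldots,k$ via \autoref{thm:chunk-opt-edge}). Since the surrounding discussion already establishes $\beta' > \alpha$ in Case 2, monotonicity forces the intersection point $y^*$ to lie strictly above $\delta/\tau$, and in particular $E = f(y^*) > f(\delta/\tau) = \alpha$.

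With this in hand, I would split on the two branches of the $\min$ defining $x_i$ in $C^*$. If $y^* \le \delta/(\tau-1)$, the algorithm sets each $x_i = y^*$ for $i < \tau$, so $\alpha^* = \beta^* = E > \alpha$ and the claim follows. If $y^* > \delta/(\tau-1)$, the algorithm caps each $x_i$ at $\delta/(\tau-1)$. Then $\alpha^* = f(\delta/(\tau-1))$ exceeds $\alpha = f(\delta/\tau)$ by monotonicity of $f$ (using $\delta > 0$ and $\tau-1 < \tau$), while $\beta^* = g(\delta/(\tau-1))$; since $\delta/\tau < \delta/(\tau-1) < y^*$, strict monotonicity of $g$ places $\beta^*$ strictly between $E$ and $\beta'$, both of which exceed $\alpha$, so $\beta^* > \alpha$.

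The main obstacle is purely bookkeeping: matching $f(\delta/\tau)$ with $\alpha$ and $g(\delta/\tau)$ with the already-defined $\beta'$ requires identifying exactly which intermediate chunking each function value corresponds to, and carefully tracking how the ``residual $x - y(\tau-1)$'' accounting lines up with the two distinct applications of \autoref{thm:chunk-opt-edge} (the one inside the first modification, and the one producing $C^*$). Once these identifications are pinned down, every inequality reduces to strict monotonicity of $f$ or $g$ applied at a specific ordered pair of arguments. The boundary case $\tau = 1$ is degenerate since there are no edges before $\tau$: $\alpha^*$ is vacuous and $\beta^* = \beta'$, so the claim reduces to the already-noted inequality $\beta' > \alpha$.
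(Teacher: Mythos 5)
Your proof is correct and follows essentially the same route as the paper's: identify $f(y)=by+c(u,w)+c(w\to t)$ and $g(y)$ as increasing/decreasing functions of $y$, use the Case-2 fact $\alpha=f(\delta/\tau)<g(\delta/\tau)=\beta'$ to place $y^*$ strictly above $\delta/\tau$, and then conclude $\alpha^*,\beta^*>\alpha$ from monotonicity at $y=\min(y^*,\delta/(\tau-1))$. The only (cosmetic) difference is that you split explicitly on the two branches of the $\min$, whereas the paper argues $\alpha^*>\alpha$ from $\min(y^*,\delta/(\tau-1))>\delta/\tau$ directly and $\beta^*>\alpha$ from $\beta^*(y)\ge\beta^*(y^*)=\alpha^*(y^*)>\alpha$.
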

    The intuition for this is that the chunking $C^*$ increases the
        cost of early edges, while decreasing the cost of later
    edges. But we still ensure that the later edges have perceived
    cost at least as great as the early edges.
    \begin{proof}
    Note that with $y = \delta/l$, we got that $\alpha <
    \beta'$. Further, with $y = y^*$, the perceived costs starting with
    any edge would be equal, by definition of $y^*$. Thus, we know
    that $y^* > \delta/l$. It follows that $\min(y^*,
    \frac{\delta}{l-1}) > \delta/l$, and thus $\alpha^* > \alpha$.  

    Note that if $y = y^*$, then $\beta^* = \alpha^* > \alpha$,
    since choosing edge costs so that the perceived costs of starting
    with all edge in the chunking are equal 
    means that the costs on the early edges increase. Further,
    $\beta^*$ is decreasing in $y$. Since $y = \min(y^*,
    \frac{\delta}{l-1}) \le y^*$, this implies that $\beta^* >
    \alpha$. 
    \end{proof}

    We now show that $C^*$ has transition vertex $\tau$. By
    construction, we have that $\sum_{i < \tau} x_i \le
    \delta$. Let $i \le \tau$ be arbitrary. When $x_i$ was
    $\delta/\tau$ in the original chunking, we had that $p(e_i)$ was
    $\alpha$. By \autoref{cl:final_alpha}, we know that $p(e_i) >
    \alpha$, which means that $x_i > \delta/\tau$ (since perceived
    costs are strictly increasing in the actual cost). Thus, $\sum_{i
      \le \tau} x_i \ge \delta$.  

    We claim that $C^*$ is optimal (for the fixed transition
    vertex). First, note that if $y^* \le \frac{\delta}{\tau-1}$ and
    thus $x_i = y^*$ for all $i < \tau$, then $\alpha^* = \beta^*$ and
    the chunking is optimal (over \textit{all} transition vertices) by
    \autoref{lm:chunking_opt}. Otherwise, suppose that $y^* >
    \frac{\delta}{\tau-1}$ and so $x_{i} = y = \frac{\delta}{\tau-1}$
    for all $i < \tau$. Since $y < y^*$, and $\alpha^*$ is increasing
        in $y$, we know that $\beta^* > \alpha^*$. So the bottleneck of
    $C^*$ is $\beta^*$ in this case; by \autoref{lm:chunking_improve},
    any better chunking $O$ with the same transition vertex must have
    $\sum_{i < \tau} x_i^O > \sum_{i < \tau} x_i^{C^*} =
    \delta$. Thus, $O$ would have an earlier transition vertex, which
    is a contradiction.  

    Lastly, we discuss the runtime of the algorithm. In our analysis,
    for a fixed $\tau$, we must compare the $\alpha$ and $\beta$
    values in two chunkings -- the initial one where $x_i = \delta/t$
    for all $i \le \tau$, and the modified one where $x_i = y$ for all
    $i < \tau$. Since we have closed-form equations for the $\alpha$
    and $\beta$ values in each chunking, we do not need to construct
    them for each $\tau$. We simply keep track of which value of
    $\tau$ produces the smallest perceived cost, and whether the best
    chunking for that $\tau$ was the initial chunking or the modified
    one. We can thus do only constant work for each $\tau$, resulting
    in a runtime of $O(k)$. See \autoref{alg:opt-edge-chunk} for
    details. 

    Finally, we prove the remaining two edges cases.

    The first is when $\tau=k$. In this case, the first chunking would set all costs equal to $\delta/k$, which would not cover the full cost of the original edge. However, this case is also very simple, as all edges have the same perceived cost of $bx_i + c(u,w) + c(w \to t)$ when $\tau=k$. So, this case proceeds as follows. First, we set all $x_i = x/k$. If $x/k < \frac{\delta}{k-1}$, this would satisfy the constraint that $\tau=k$, and since all edges would have the same perceived cost, this would be optimal. Otherwise, we would set $x_{i} = \frac{\delta}{k-1}$ for all $i < k$ and $x_k = x - \delta$, which would be optimal for $\tau=k$, as this would be as close as we could get to uniform costs. 
    
    Finally, we consider the case where $\delta > x$. We established
    earlier that the shortest path will switch from the chunking to
    the $w$ vertices if at least $\delta$ work has been completed on
    the chunking. Since $\delta > x$, this can't happen, and so no
    matter how we chunk, the shortest path from any $u_{<k}$ is
    through $w$. This means that $p(e_i) = bx_i + c(u, w) + c(w \to
    t)$ for all $i < k$. Note that $e_k = (u_k, v)$; so, this final
    edge locks the agent into going to $v$. Thus, $p(e_k) = bx_k + c(v
    \to t) = b(x - \sum_{i<k}x_i)+c(v \to t)$. To optimally chunk, we
    set all $x_i = y$ for $i < k$ and then set the perceived cost of
    starting with the final edge equal to this to find the optimal $y$. 
    \begin{align*}
        by + c(u,w) + c(w \to t) &=  b(x - (k-1)y)+c(v \to t) \\
        byk &= bx + c(v \to t) - c(u,w) - c(w \to t) = \delta + (b-1)x \\
                y &= \frac{\delta+(b-1)x}{bk} \eqdef y^*.
    \end{align*}
We now simply set $y = \min\left(y^*, \frac{x}{k-1}\right)$. If
    $y^* \le \frac{x}{k-1}$, then all perceived costs are equal, so
    this chunking is optimal by \autoref{lm:chunking_opt}. If $y^* >
    \frac{x}{k-1}$, then the perceived cost of starting with the final
    edge is still higher, but the actual cost of that edge cannot be
    reduced below 0. Note that the case where $y^* > \frac{x}{k-1}$
    (and $\delta > x$) is the only case where the optimal chunking
    might put a cost of 0 on any edge. 
\end{proof}

\section{Cost Ratio Corollary}
\costRatioCor*
\begin{proof}
    Let $c$ be a constant. By \autoref{thm:costRatio}, we will get a cost ratio of $O(c)$ if $\bmin \le c^{1/n}$. We thus solve for the following equation for $k$:
    \begin{align*}
        \frac{1}{1-\left(\frac{b-1}{b}\right)^k} &= c^{1/n} \\
        \frac{1}{c^{1/n}} &= 1 - \left(\frac{b-1}{b}\right)^k \\
        \left(\frac{b-1}{b}\right)^k &= 1 - \frac{1}{c^{1/n}} \\
        \left(\frac{b-1}{b}\right)^k &= \frac{c^{1/n} - 1}{c^{1/n}} \\
        k &= \frac{\log\left(\frac{c^{1/n} - 1}{c^{1/n}}\right)}{\log\left(\frac{b-1}{b}\right)} \\
        &=
                \frac{\log\left(\frac{c^{1/n}}{c^{1/n}-1}\right)}{\log\left(\frac{b}{b-1}\right)}. 
    \end{align*}
Since $b$ is a constant, $\log\left(\frac{b}{b-1}\right)$ is constant,
and $k$ is thus dominated by the numerator. Similarly, $c^{1/n} < c$,
and thus we are interested in the asymptotic behavior of
$\log\left(\frac{1}{c^{1/n}-1}\right)$. The series expansion as $n \to
\infty$ is $\frac{n}{\log c} - \frac{1}{2} + \frac{\log c}{12n} +
O(\frac{1}{n^2}) = O(n)$. 
\end{proof}

\section{Non-short Path Edge Chunking Algorithm}
\DontPrintSemicolon

\setcounter{algocf}{0}
\begin{algorithm}[H]
    \scriptsize
    \KwIn{A DAG $G$, edge $(u, v)$ in $G$, bias factor $b$ and chunking parameter $k$}
    \KwOut{The optimal chunking for edge $(u, v)$ and the associated bottleneck cost}
    $x \gets c(u, v)$, 
    $w \gets $ next node in shortest $u \to t$ path\\
    \If(\tcp*[f]{edge case for when $(u, v)$ is on the shortest path}){$w = v$} {
        \Return $\textsf{Chunk-Shortest-Edge}(k, x), \frac{1}{1 - \left(\frac{b-1}{b}\right)^{k}} x + c(v \to t)$
    } 
    $\delta \gets x+c(v \to t) - c(u,w) -c(w \to t)$\\
    \If{$\delta > x$}{
        $y^* \gets \frac{\delta + (b-1)x}{bk}$ \\
        $C \gets x_1, \dots, x_{k-1} \mapsto \max(y^*, \frac{x}{k-1})$ and $x_k \mapsto x-(k-1)\max(y^*, \frac{x}{k-1})$ \\
        $min\_bottleneck \gets b\max(y^*, \frac{x}{k-1}) + c(u, w) + c(w \to t)$ \\
        \Return $C, min\_bottleneck$
    }
    $min\_bottleneck \gets \infty, \tau^* \gets 0, opt\_chunk\_type \gets 0$\\
    \For{$\tau = 1$ \KwTo $k-1$} {
        $\alpha_0 \gets \frac{b\delta}{\tau}+c(u,w)+c(w \to t)$, 
        $\beta_0 \gets \frac{x-\delta}{1-\left(\frac{b-1}{b}\right)^{k-\tau}}+c(v \to t)$\\
        \uIf{$\alpha_0 = \beta_0$}{
            $C \gets x_1, \dots, x_\tau \mapsto \delta/\tau$ and $x_{\tau+1}, \dots, x_k \mapsto \textsf{Chunk-Shortest-Edge}(k-\tau, x-\delta)$\\
            \Return $C, \alpha_0$
        }
        \uElseIf{$\alpha_0 > \beta_0$} {
            \If{$\alpha_0 < min\_bottleneck$}{
                $min\_bottleneck \gets \alpha_0$, 
                $\tau^* \gets \tau, opt\_chunk\_type \gets 0$
            }
        }
        \Else{
            \If(\tcp*[f]{edge case for $\tau = 1$}){$\tau = 1$} { 
                \Return $\textsf{Chunk-Shortest-Edge}(k, x), \frac{1}{1 - \left(\frac{b-1}{b}\right)^{k}} x + c(v \to t)$
            }
            $z_\tau \gets 1-\left(\frac{b-1}{b}\right)^{k-\tau+1}$,  
            $y^* \gets \frac{\delta z_\tau + (1-z_\tau)x}{\tau-1+z_\tau b}$ \\
            \eIf{$\frac{\delta}{\tau-1} > y^*$} {
                $C \gets x_1, \dots, x_{\tau-1} \mapsto y^*$ and $x_\tau, \dots, x_k \mapsto \textsf{Chunk-Shortest-Edge}(k-\tau+1, x-(\tau-1)y)$\\
                \Return $C, by^* + c(u,w) + c(w \to t)$
            } {
                $\beta \gets \frac{x-\delta}{z_\tau} + c(v \to t)$ \\
                \If{$\beta < min\_bottleneck$} {
                    $min\_bottleneck \gets \beta$, 
                    $\tau^* \gets \tau, opt\_chunk\_type \gets 1$
                }
            }
        }
    }
    \eIf{$\frac{x}{k} \le \frac{\delta}{k-1}$}{
        $C \gets x_1, \dots, x_k \mapsto x/k$, 
        $min\_bottleneck \gets \frac{bx}{k} + c(u, w) + d(w)$\\ 
        \Return $C, min\_bottleneck$
    }{
        $\alpha \gets \frac{b\delta}{k-1} + c(u, w) + c(w \to t)$, 
        $\beta \gets b(x - \delta) + c(v \to t)$ \\
        \If{$\min(\alpha, \beta) \le min\_bottleneck$}{
            $C \gets x_1, \dots, x_{k-1} \mapsto \frac{\delta}{k-1}$ and $x_k \mapsto x - \delta$, 
            $min\_bottleneck \gets \min(\alpha, \beta)$ \\
            \Return $C, min\_bottleneck$
        }
    }

    \eIf{$opt\_chunk\_type = 0$}{
        $C \gets x_1, \dots, x_{\tau^*} \mapsto \delta/{\tau^*}$ and $x_{{\tau^*}+1}, \dots, x_k \mapsto \textsf{Chunk-Shortest-Edge}(k-{\tau^*}, x-\delta)$
    }{
        $C \gets x_1, \dots, x_{\tau^*-1} \mapsto \frac{\delta}{\tau^*-1}$ and $x_{\tau^*}, \dots, x_k \mapsto \textsf{Chunk-Shortest-Edge}(k-\tau^*+1, x-\delta)$
    }
    \Return $C, min\_bottleneck$
    \caption{Optimally chunk any edge. Uses $\textsf{Chunk-Shortest-Edge}(k, x)$ as a subroutine, which returns the optimal $k$-chunking of a shortest edge of cost $x$, which is given by \autoref{thm:chunk-opt-edge}.}
    \label{alg:opt-edge-chunk}
\end{algorithm}

\section{Splitting Agents onto Separate Paths}
We first provide a full description of \autoref{alg:chunkSplit}.
\begin{algorithm}
    maxBottleneck $\gets 0$, $C^* \gets \emptyset$ \\
    \For{$i = 1$ \KwTo $k$} {
        $C_i = (x_1, \dots, x_k) \gets$ optimal chunking of $(u, v)$ for $A_1$\\
        \For{$j = i-1$ \KwTo $1$} {
            \If{$p(e_i; b_1) < \alpha_u^{(1)}$}{
               $\delta = \min(x_j, (p(e_i; b_1) - \alpha_u^{(1)})/b_1)$ \\
               $x_{j} \gets x_{j} - \delta$\\
               $x_{i} \gets x_i + \delta$
            }
        }
        \For{$j = i+1$ \KwTo $k$} {
            \If{$p(e_i; b_1) < \alpha_u^{(1)}$}{
                $\gamma_w \gets c(u, w) + c(w \to t), \gamma_x \gets \sum_{l > i} x_l + c(v \to t)$ \\
                \uIf{$\gamma_x \le \gamma_w$}{
                    $\delta = \min(x_{j}, (p(e_i; b_1) - \alpha_u^{(1)})/(b_1 - 1))$
                }
                \uElseIf{$p(e_i; b_1) - \alpha_u^{(1)} \le b_1(\gamma_x - \gamma_w)$}{
                    $\delta = \min(x_{j}, (p(e_i; b_1) - \alpha_u^{(1)})/(b_1 - 1))$
                }
                \uElse{
                    $\delta' = \gamma_x - \gamma_w$ \\
                    $x_j \gets x_j - \delta'$ \\
                    $x_i \gets x_i + \delta'$ \\
                    $\delta = \min(x_{j}, (p(e_i; b_1) - \alpha_u^{(1)})/(b_1 - 1))$
                }
                $x_j \gets x_j - \delta$\\
                $x_i \gets x_i + \delta$
            }
        }
        $\lambda \gets \sum_{j = 1}^{i-1} p(e_j; b_1) - \alpha_u^{(1)}$ \\
        $\delta^* \gets \min(\lambda/(b_1 - 1), \sum_{j > i}x_j/b_1$ \\
        decrease $x_{>i}$ by $b_1\delta^*$ \\
        $x_i \gets x_i + \delta^*$ \\
        $j \gets i-1$ \\
        \While{$\delta^* > 0$} {
            $\delta \gets \min(x_j, \delta^*)$ \\
            $x_j \gets x_j - \min(x_j, \delta)$ \\
            $\delta^* \gets \delta^* - \delta$\\
            $j \gets j - 1$
        }
        bottleneck $\gets \max_j p(e_j; b_2)$ \\
        \If{bottleneck > maxBottleneck}{
            maxBottleneck $\gets$ bottleneck \\
            $C^* \gets C$
        }
    }
    \Return $C^*$
\caption{Chunk $(u, v)$ such that $A_1$ takes the chunking and $A_2$
    doesn't, if possible.} 
    \label{alg:chunkSplit}
\end{algorithm}
We now prove that this algorithm is correct via the following theorem.
\chunkSplitThm*
\begin{proof}
  First, suppose that $\sum_{j \neq i} x_j = 0$, that is, all of the
  weight is on $e_i$ in $C_i$. It's obvious that $p(e'_i; b_2)
    \le p(e_i; b_2)$, as the perceived cost of any chunk cannot exceed
    $bc(u, v) + c(v \to t)$, and $C_i$ achieves this cost on $e_i$. 
    %
It follows that $\sum_{j \neq i} x_j > 0$, which implies that
    $p(e_i; b_1) = \alpha_u^{(1)}$ by
\autoref{lm:postconditions}(a). We now consider two cases. 
\paragraph{Case 1: $x_i' > x_i$.} Suppose that $\sum_{j > i} x_j =
    0$. We must have $x_j' \ge x_j$,for all $j \ge i$, as costs must be
non-negative. Thus, $p(e'_i; b_1) > p(e_i; b_1)$, and $A_1$ will
    deviate from $C'$ at edge $i$.  
    
    So, suppose instead that $\sum_{j > i} x_j > 0$. By
    \autoref{lm:postconditions}(b), we have that $\forall j \le i,
   p(e_j; b_1) = \alpha_u^{(1)}$. A similar now argument applies: if 
    $\sum_{j > i} x_j' < \sum_{j >i} x_j$, then more weight must be
    put on $x_{\le i}$, and it's clear that doing so would cause $A_1$
    to deviate before or at edge $i$ (concretely, $A_1$ would deviate
    at the first edge with higher weight). But if $\sum_{j > i} x_j'
    \ge \sum_{j > i} x_j$, then $p(e'_i; b_1) > p(e_i; b_1)$, and
    $A_1$ will deviate from $C'$ at edge $i$. Either way, $A_1$ will
    not take the chunking $C'$. 
    \paragraph{Case 2: $x_i' \le x_i$.} Recall that we can write $p(e_i; b_2)$ as $b_2 x_i + c(u_{i+1} \to t)$, where $c(u_{i+1} \to t)$, the cost of the cheapest path from $u_{i+1}$ to $t$, is $\min(c(u,w) + c(w \to t), \sum_{j>i} x_j + c(v \to t))$.
    \begin{align*} 
        p(e_i'; b_2) &> p(e_i; b_2) \\
        \iff b_2 x_i' + c(u'_{i+1} \to t) &> b_2 x_i + c(u_{i+1} \to t) \\
        \iff c(u'_{i+1} \to t) - c(u_{i+1} \to t) &> b_2 (x_i - x_i') \\
        \implies c(u'_{i+1} \to t) - c(u_{i+1} \to t) &> b_1 (x_i -
                x_i') \tag{since $b_2 > b_1$ and $x_i - x_i' \ge 0$}\\  
        \iff b_1 x_i' + c(u'_{i+1} \to t) &> b_1 x_i + c(u_{i+1} \to t) \\
                \iff p(e_i'; b_1) &> p(e_i; b_1).
    \end{align*}
    Since $p(e_i; b_1) = \alpha_u^{(1)}$, $A_1$ won't take $C'$ (they will deviate at $e_i'$).
\end{proof}

We now describe the flipped version of this problem, where we chunk $(u, v)$ so that $A_2$ takes it but $A_1$ finds it maximally unappealing. The flipped algorithm has the same phase 1 and 2 as before.\footnote{We omit the full pseudocode for the modified algorithm, as it's easy to modify the third phase of \autoref{alg:chunkSplit}.} Phase 3 is modified to:
\begin{enumerate}
    \item[3.] Let $\lambda$ be the total amount of cost that could be added to $x_{>i}$ while ensuring that $p(e_j; b_1) \le \alpha_u^{(1)}$ for all $j > i$. Let $\delta = \min(\lambda/b_2, \sum_{j < i} x_j/(b_2 - 1), x_i)$. Decrease $x_i$ by $\delta$, decrease the cumulative cost of $x_{<i}$ by $(b_2 - 1)\delta$, and increase the cumulative cost of $x_{>i}$ by $b_2\delta$.
\end{enumerate}
We also modify part (b) of the lemma.
\begin{lemma}\label{lm:postconditions2}
    Let $C = (e_1, \dots, e_k)$ be the chunking produced by the algorithm above. Then:
    \begin{enumerate}[(a)]
        \item $\sum_{j \neq i} x_j > 0 \implies p(e_i; b_2) = \alpha_u^{(2)}$
        \item $\sum_{j < i} x_j > 0$ and $x_i > 0 \implies \forall j > i, p(e_j; b_2) = \alpha_u^{(2)}$
    \end{enumerate}
\end{lemma}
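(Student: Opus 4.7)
The plan is to mirror the argument used for Lemma~\ref{lm:postconditions}, with two adjustments reflecting the role swap: phases 1 and 2 now measure progress against $A_2$'s outside option $\alpha_u^{(2)}$ (since $A_2$ is the agent who must be kept on the chunking), and phase 3 now drains cost off $x_{\le i}$ and deposits it onto $x_{>i}$ rather than the other way around.

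For part (a), I argue just as in the original: phases 1 and 2 siphon cost into $x_i$ from $x_{<i}$ and $x_{>i}$ respectively, and each terminates either because no cost remains on the corresponding side or because $p(e_i;b_2)$ has climbed to $\alpha_u^{(2)}$. Under the hypothesis $\sum_{j\neq i} x_j > 0$, mass still remains on some other edge when these phases end, so the stopping condition must have been the latter. One then verifies that the subsequent phase-3 moves leave $p(e_i;b_2) = b_2 x_i + c(u_{i+1}\to t)$ invariant: a decrease of $\delta$ in $x_i$ is compensated by an increase of $b_2\delta$ in $\sum_{j>i}x_j$, which feeds directly into $c(u_{i+1}\to t)$ whenever the shortest path from $u_{i+1}$ follows the chunking.

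For part (b), I exploit the structure of phase 3. Each iteration transfers an amount $\delta = \min\bigl(\lambda/b_2,\ \sum_{j<i}x_j/(b_2-1),\ x_i\bigr)$, where $\lambda$ is the slack available for adding cost onto the edges $e_{>i}$ without making any of them unacceptable to $A_2$. Phase 3 stops precisely when $\delta = 0$, so at least one of the three terms must vanish. The hypotheses $\sum_{j<i}x_j > 0$ and $x_i > 0$ rule out the second and third terms, forcing $\lambda = 0$; equivalently, $p(e_j;b_2) = \alpha_u^{(2)}$ for every $j > i$.

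The main obstacle is the invariance claim inside part (a): one must ensure that phase 3 does not drive $p(e_i;b_2)$ strictly below $\alpha_u^{(2)}$, which could otherwise happen if the cheapest continuation from $u_{i+1}$ bypassed the chunking (so that adding $b_2\delta$ to $\sum_{j>i} x_j$ failed to raise $c(u_{i+1}\to t)$). A short monotonicity argument handles this: at the end of phase 2 the later edges collectively saturate $A_2$'s constraint at $\alpha_u^{(2)}$, which places them on the cheapest continuation from $u_{i+1}$, and phase 3 only redistributes mass among them in a way that keeps this continuation optimal, so the compensation is exact.
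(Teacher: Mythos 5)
Your overall structure matches the paper's, which disposes of this lemma in one line by saying part (a) is "identical to before" and part (b) follows because "if more could be siphoned from $\sum_{j < i} x_j$ and $x_i$, the algorithm would, unless no edges in $e_{>i}$ can be increased further." Your reading that phases 1 and 2 must now measure against $\alpha_u^{(2)}$ rather than $\alpha_u^{(1)}$ is in fact the right one — the paper's phrase ``the same phase 1 and 2 as before'' is loose, and without this substitution the lemma's conclusion $p(e_i;b_2)=\alpha_u^{(2)}$ wouldn't follow. Your argument for part (b), that the post-transfer hypotheses $\sum_{j<i}x_j>0$ and $x_i>0$ rule out the resource terms in the $\min$ and therefore force $\lambda$ to be binding, is exactly the paper's argument, though you phrase phase 3 as an iteration when it is a single transfer per choice of $i$; the logic is unaffected.

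The weak spot is your handling of the ``obstacle'' in part (a). You correctly flag that if the cheapest continuation from $u_{i+1}$ bypasses the chunking then the $b_2\delta$ added to $\sum_{j>i}x_j$ need not lift $c(u_{i+1}\to t)$, and so $p(e_i;b_2)$ could drop strictly below $\alpha_u^{(2)}$ during phase 3 — a genuine subtlety that the paper's one-line proof glosses over. But the resolution you offer is factually off. You write that ``at the end of phase 2 the later edges collectively saturate $A_2$'s constraint at $\alpha_u^{(2)}$''; this is backwards, since phase 2 siphons cost \emph{out of} $x_{>i}$ into $x_i$, so the later edges are at their lowest point then, not at the constraint. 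You then say phase 3 ``only redistributes mass among them,'' but phase 3 moves mass \emph{into} $x_{>i}$ from $x_i$ and $x_{<i}$, so it is not a redistribution among the later edges. What would actually be needed is an argument that, throughout phase 3, $\sum_{j>i}x_j + c(v\to t)$ stays at or below $c(u,w)+c(w\to t)$ (so that $c(u_{i+1}\to t)$ tracks the chunking), or a modification of how $\lambda$ is defined to prevent pushing past that crossover; neither is what you provide. So the concern you raise is worth raising, but your monotonicity claim does not close the gap.
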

\begin{proof}
    The proof of (a) is identical to before. For (b), as before, if more could be siphoned from $\sum_{j < i} x_j$ and $x_i$, the algorithm would, unless no edges in $e_{>i}$ can be increased further.
\end{proof}
\begin{theorem}
    \label{thm:diffPath}
    Let $C$ be the output of the algorithm above. Let $C'$ be another chunking such that $p(e_i'; b_1) > p(e_i; b_1)$. Then, $A_2$ will not take $C'$. 
\end{theorem}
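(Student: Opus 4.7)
The plan is to mirror the proof of \autoref{thm:chunk_split}, swapping the roles of $A_1$ and $A_2$ throughout. First I would dispense with the degenerate situation $\sum_{j \neq i} x_j = 0$: all cost of $(u,v)$ is concentrated on $e_i$, so $p(e_i; b_1) = b_1 c(u,v) + c(v\to t)$ is the maximum value that $p(e'_i; b_1)$ could take over any chunking, directly contradicting the hypothesis. Henceforth assume $\sum_{j \neq i} x_j > 0$; then \autoref{lm:postconditions2}(a) yields $p(e_i; b_2) = \alpha_u^{(2)}$, so it suffices to exhibit some $j$ with $p(e'_j; b_2) > \alpha_u^{(2)}$.

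Next I split on the sign of $x'_i - x_i$. If $x'_i \ge x_i$, the argument is a clean mirror of Case~2 of \autoref{thm:chunk_split}. Expanding the hypothesis $p(e'_i; b_1) > p(e_i; b_1)$ gives
\[ c(u'_{i+1}\to t) - c(u_{i+1}\to t) > b_1(x_i - x'_i). \]
Since $x_i - x'_i \le 0$ and $b_2 > b_1$, we have $b_2(x_i - x'_i) \le b_1(x_i - x'_i)$, so
\[ c(u'_{i+1}\to t) - c(u_{i+1}\to t) > b_2(x_i - x'_i), \]
which rearranges to $p(e'_i; b_2) > p(e_i; b_2) = \alpha_u^{(2)}$; hence $A_2$ deviates from $C'$ at $e'_i$.

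For the harder case $x'_i < x_i$, I first establish $\sum_{j > i} x'_j > \sum_{j > i} x_j$. Writing $c(u_{i+1}\to t) = \min(\gamma, \sum_{j>i} x_j + c(v\to t))$ with $\gamma$ the unbiased detour cost shared by $C$ and $C'$, the hypothesis combined with $x_i > x'_i$ forces $c(u'_{i+1}\to t) > c(u_{i+1}\to t)$; since $c(u'_{i+1}\to t) \le \gamma$, the minimum defining $c(u_{i+1}\to t)$ must be attained at the chunking branch, giving $\sum_{j>i} x_j + c(v\to t) = c(u_{i+1}\to t) < c(u'_{i+1}\to t) \le \sum_{j>i} x'_j + c(v\to t)$. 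Applying \autoref{lm:m_agents_greedy} with $l = i+1$ then produces some $j^* > i$ with $p(e'_{j^*}; b_2) > p(e_{j^*}; b_2)$. Provided \autoref{lm:postconditions2}(b) applies (its hypotheses $\sum_{j<i} x_j > 0$ and $x_i > 0$ hold, the latter automatic from $x_i > x'_i \ge 0$), this gives $p(e_{j^*}; b_2) = \alpha_u^{(2)}$ and hence $p(e'_{j^*}; b_2) > \alpha_u^{(2)}$, so $A_2$ deviates at $e'_{j^*}$.

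The main obstacle is the sub-case within the second case where $\sum_{j<i} x_j = 0$, which \autoref{lm:postconditions2}(b) does not cover. I expect to handle this by a direct algorithmic argument: because the flipped phase~3 siphons proportionally from both $x_i$ and $x_{<i}$, emptiness of $x_{<i}$ blocks further progress in phase~3, so phase~2 must already have driven $p(e_{j^*}; b_2)$ up to $\alpha_u^{(2)}$ for some $j^* > i$ (otherwise further cost could have been shifted to raise $p(e_i; b_1)$ without harming $A_2$). Alternatively, it may be cleanest to slightly strengthen \autoref{lm:postconditions2}(b) so that its guarantee applies in this boundary setting and thereby absorb the edge case into the main argument.
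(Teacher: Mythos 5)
Your case split is genuinely different from the paper's: you branch on $x'_i \ge x_i$ versus $x'_i < x_i$, while the paper branches on $\sum_{j>i} x'_j \le \sum_{j>i} x_j$ versus $\sum_{j>i} x'_j > \sum_{j>i} x_j$. Your first case is sound (multiplying the non-positive $x_i - x'_i$ by the larger $b_2$ only strengthens the inequality, so $p(e'_i; b_2) > p(e_i; b_2) = \alpha_u^{(2)}$), and your second case correctly derives $\sum_{j>i} x'_j > \sum_{j>i} x_j$ and then invokes \autoref{lm:m_agents_greedy}. The degenerate case and the appeal to \autoref{lm:postconditions2}(a) are also handled correctly.

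The gap you flag, the sub-case $\sum_{j<i} x_j = 0$, is real, but the two workarounds you sketch (a fresh algorithmic argument about phase~3, or strengthening \autoref{lm:postconditions2}) are unnecessary: this sub-case is simply \emph{vacuous} under your Case~2 hypotheses, and the paper already contains the contradiction you need (it uses it to justify applying \autoref{lm:postconditions2}(b) in its own Case~2). Concretely, if $\sum_{j<i} x_j = 0$ then $C$ puts all weight on $e_{\ge i}$, so $\sum_{j>i} x_j = x - x_i$. Writing $g(z) = b_1 z + \min\bigl(c(u,w)+c(w\to t),\ x - z + c(v\to t)\bigr)$, we have $p(e_i; b_1) = g(x_i)$ and $p(e'_i; b_1) \le g(x'_i)$ since $\sum_{j>i} x'_j \le x - x'_i$. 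Both linear pieces of $g$ have positive slope ($b_1$ or $b_1 - 1$), so $g$ is strictly increasing; hence $x'_i < x_i$ forces $p(e'_i; b_1) \le g(x'_i) < g(x_i) = p(e_i; b_1)$, contradicting the theorem's hypothesis. With this observation, $\sum_{j<i} x_j > 0$ holds in your Case~2, and the other hypothesis $x_i > 0$ of \autoref{lm:postconditions2}(b) is automatic from $x_i > x'_i \ge 0$ as you note, so the lemma applies and the proof closes. Note also that your split neatly avoids the separate $x_i = 0$ sub-case the paper must handle in its Case~2.
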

\begin{proof}
    First, suppose that $\sum_{j \neq i} x_j = 0$, i.e., all of the weight is on $e_i$ in $C$. Then, it's obvious that $p(e'_i; b_1) \le p(e_i; b_1)$, as the perceived cost of any chunk cannot exceed $bc(u, v) + c(v \to t)$, and $C$ achieves this cost on $e_i$.

    So, we know that $\sum_{j \neq i} x_j > 0$, which implies that
    $p(e_i; b_1) = \alpha_u^{(1)}$ by
    \autoref{lm:postconditions2}(a). We now consider two cases. 
    \paragraph{Case 1: $\sum_{j > i} x_j' \le \sum_{j > i} x_j$.} Recall that $c(u_{i+1} \to t) = \min(c(u,w) + c(w \to t), \sum_{j>i} x_j + c(v \to t))$. Thus, $\sum_{j > i} x_j' \le \sum_{j > i} x_j$ implies that $c(u_{i+1} \to t) \ge c(u'_{i+1} \to t)$.
    \begin{align*}
         p(e_i'; b_1) &> p(e_i; b_1) \\
        \iff b_1 x_i' + c(u'_{i+1} \to t) &> b_1 x_i + c(u_{i+1} \to t) \\
        \iff b_1 (x_i' - x_i) &> c(u_{i+1} \to t) - c(u'_{i+1} \to t) \\
        \implies b_2 (x_i' - x_i) &> c(u_{i+1} \to t) - c(u'_{i+1} \to
                t) \tag{since $b_2 > b_1$ and $c(u_{i+1} \to t) - c(u'_{i+1}
          \to t) \ge 0$} \\ 
        \iff b_2 x_i' + c(u'_{i+1} \to t) &> b_2 x_i + c(u_{i+1} \to t) \\
                \iff p(e_i'; b_2) &> p(e_i; b_2).
    \end{align*}
    Since $p(e_i; b_2) = \alpha_u^{(2)}$, $A_2$ won't take $C'$.
    \paragraph{Case 2: $\sum_{j >i} x_j' > \sum_{j >i} x_j$.} Suppose,
    for the sake of contradiction, that $\sum_{j < i} x_j = 0$. Then,
    $\sum_{j \ge i} x_j = x$, i.e., all the weight is on edges $e_{\ge
      i}$. Now, $p(e_i';b_1) > p(e_i; b_1)$ requires either that $x_i'
    > x_i$, or that $C'$ assigns more cost to edges $e_{\ge i}$ than
    $C$. The latter is impossible because $C$ assigns all the weight
 to edges $e_{\ge i}$, and thus $x_i' > x_i$. This 
    implies that $\sum_{j>i} x_j' < \sum_{j >i} x_j$, which
gives us a contradiction.

So it follows that $\sum_{j < i} x_j > 0$. We now consider two
    cases. First, suppose that $x_i > 0$. We apply
    \autoref{lm:postconditions2}(b), which says that $\forall j > i,
    p(e_j; b_2) = \alpha_u^{(2)}$. Since $\sum_{j > i} x_j' > \sum_{j
      > i} x_j$, by \autoref{lm:m_agents_greedy} there must be some
    edge $e'_j$ such that $p(e'_j; b_2) > p(e_j; b_2) =
    \alpha_u^{(2)}$, and thus $A_2$ deviates from $C'$. 
    
    Second, suppose that $x_i = 0$. $x_i' \ge x_i$. This combined with the fact that $\sum_{j > i} x_j' > \sum_{j >i} x_j$ implies that $p(e_i'; b_2) > p(e_i; b_2) = \alpha_u^{(2)}$, and thus $A_2$ doesn't take $C'$. 
\end{proof}

\section{Keeping Agents on the Same Path}
\mAgentsGreedy*
\begin{proof}
    We prove the contrapositive. That is, suppose that, for all $i \in [l, k]$, there exists some $b_i$ such that $p(e_i; b_i) \ge p(e_i'; b_i)$. We show that $\sum_{i = j}^k x_i \ge \sum_{i = j}^k x_i'$ by induction from $j = k$ to $l$.

    For the base case, suppose that $j = k$. Note that $p(e_k; b_k)
    \ge p(e_k'; b_k)$ if and only if $b_k x_k + c(v \to t) \ge b_k
    x_k' + c(v \to t)$, which implies that $x_k \ge x_k'$, as
    desired. For the inductive case, assume that $\sum_{i > j}^k
    x_i \ge \sum_{i > j}^k x_i'$. First, we expand $p(e_j; b_j) \ge
    p(e_j'; b_j)$: 
    \begin{align}\label{eq:pcosts1}
        bx_j + \min(c(u, w) + c(w \to t), \sum_{i > j}^k x_i + c(v \to
        t)) &\ge bx_j' + \min(c(u, w) + c(w \to t), \sum_{i > j}^k
                x_i' + c(v \to t)). 
    \end{align}
        We now proceed by cases.
    \paragraph{Case 1.} Suppose that $\min(c(u, w) + c(w \to t), \sum_{i > 
      j}^k x_i' + c(v \to t)) = c(u, w) + c(w \to t)$. Since $\sum_{i
      > j}^k x_i' + c(v \to t) \le \sum_{i > j}^k x_i + c(v \to t)$ by
    the inductive hypothesis, we also know that $\min(c(u, w) + c(w
    \to t), \sum_{i > j}^k x_i + c(v \to t)) = c(u, w) + c(w \to
    t)$. Thus, \autoref{eq:pcosts1} holds if and only if: 
    \begin{align*}
        bx_j + c(u, w) + c(w \to t) &\ge bx_j' + c(u, w) + c(w \to t) \\
        \iff bx_j  &\ge bx_j' \\
                \iff x_j &\ge x_j'. 
    \end{align*}
    Combining this with the inductive hypothesis yields $\sum_{i \ge j} x_i \ge \sum_{i \ge j} x_i'$, as desired.
        \paragraph{Case 2.} Suppose that $\min(c(u, w) + c(w \to t), \sum_{i > 
      j}^k x_i' + c(v \to t)) = \sum_{i > j}^k x_i' + c(v \to
    t)$. Clearly $\min(c(u, w) + c(w \to t), \sum_{i    
      > j}^k x_i + c(v \to t)) \le \sum_{i > j}^k x_i + c(v \to
    t)$. Thus, \autoref{eq:pcosts1} implies: 
    \begin{align*}
        bx_j + \sum_{i > j} x_i + c(v \to t) &\ge bx_j' + \sum_{i > j} x_i' + c(v \to t)\\
        \iff bx_j + b\sum_{i > j} x_i &\ge bx_j' + \sum_{i > j} x_i' +
                (b-1)\sum_{i >j}x_i \tag{adding $(b-1)\sum_{i >j}x_i$ to both
          sides}\\ 
        \implies bx_j + b\sum_{i > j} x_i &\ge bx_j' + b\sum_{i > j}
                x_i'   \tag{since $\sum_{i >j}x_i \ge \sum_{i >j}x_i'$ by the
          IH}\\ 
                                \iff \sum_{i \ge j} x_i &\ge \sum_{i \ge j} x_i'.
    \end{align*}
    The last line proves the inductive step, and thus completes the proof.
\end{proof}

\section{Graph-Chunking Theorems for Multiple Agents}
\twoagentslocal*
\begin{proof}
The main computational bottleneck is computing $\mathcal{P}(u, y)$ for
all $u, y \in V$. For $u \neq y$, this is very simple: we can chunk
edges for each agent independently when they aren't at the same
node. Doing so requires $2|E|$ applications of
\autoref{alg:opt-edge-chunk} ($|E|$ applications for each agent), for
a runtime of $O(2|E|k)$. For $\mathcal{P}(u, u)$, consider all
$(v, z) \in N(u) \times N(u)$. There are a total of $|E|^2$ such pairs
over all choices of $u$. When $v \neq z$, we apply
\autoref{alg:chunkSplit} (to $(u, v)$ and $(u, z)$), and when $v = z$,
we apply \autoref{alg:greedy2}. \autoref{alg:chunkSplit} runs in
$O(k^2)$ time and \autoref{alg:greedy2} runs in $O(k)$ time (for $m =
2$ agents). Thus, the total runtime to compute $\mathcal{P}$ is
$O(2|E|^2k^2 + |E|k + 2|E|k) = O(|E|^2k^2)$. 

Once we have $\mathcal{P}$, we need to compute the cost recurrence. For each element in $\mathcal{P}$, we compute the min over the three constant time functions $C_1, C_2$, and $C_3$, for a total time of $O(|\mathcal{P}|) = O(|E|^2)$, since $\mathcal{P} \subseteq E \times E$. Thus, the cost recurrence takes $O(|E|^2)$ time to compute, which means that the total runtime is dominated by computing $\mathcal{P}$. 

For correctness, $\mathcal{P}$ is correct by \autoref{thm:diffPath}, \autoref{thm:chunk_split}, and \autoref{thm:samePath}. Given the correctness of $\mathcal{P}$, the cost recurrence is correct by \autoref{lm:costRec}.
\end{proof}

\twoagentsglobal*
\begin{proof}
    We slightly modify the definition of $\mathcal{P}'(u, y)$ to be the set of $(v, z, i)$ such that $i$ is the minimum number of chunks needed for $(u, v)$ and $(v, z)$ to be compatibly chunked. With this, we can modify the cost recurrence in the obvious way. The individual cases become:
    \begin{align*}
        C_1(u, v, y, i) &= \begin{cases*}
            c(y, u) + cost[u, u, i] & if $(v, u) \in \mathcal{P}'(u, y)$ \\
            \infty & otherwise
        \end{cases*} \\
        C_2(u, y, z, i) &= \begin{cases*}
            c(u, y) + cost[y, y, i] & if $(y, z) \in \mathcal{P}'(u, y)$ \\
            \infty & otherwise
        \end{cases*} \\
        C_3(u, v, y, z, i) &= c(u, v) + c(y, z) + cost[v, z, i].
    \end{align*} 
    And the recurrence becomes:
    \begin{align*}
        cost[u, y, i] = &\min_{(v, z, l) \in \mathcal{P}'(u, y): l \le
          i} \min(C_1(u, v, y, i-l), C_2(u, y, z, i-l), C_3(u, v, y,
                z, i-l)). 
    \end{align*}
    Computing this recurrence will take time $O(|E|^2k)$, but this will not be the bottleneck. The correctness of this recurrence follows simply from the correctness of $\mathcal{P}'$. It remains to show how to compute this new $\mathcal{P}'$.

    For $\mathcal{P}'(u, y)$ where $u \neq y$, it is easy to return
    the minimum number of chunks needed to chunk the edges; we already solved this problem
    with binary search in the single-agent global budget case
    (\autoref{thm:global-budget}). This takes $O(2|E|\log k)$ time for
    all $u \neq y$. Now suppose $u = y$. If $v = z$, and we're thus
    trying to keep agents on the same path, we can also use binary
    search with \autoref{alg:greedy2} to find the minimum number of
    chunks to get both agents to stick to the path. This takes
    $O(|E|\log k)$ time in total.  

    The bottleneck is computing the minimum number of chunks to get
    $A_1$ to take $(u, v)$ and $A_2$ to take $(u, z)$. We can
    visualize the problem as searching through a two dimensional
    binary array, where $arr[i, j]=1$ iff we can get a compatible
    chunking where $A_1$ takes an $i$-chunking of $(u, v)$ and $A_2$
    takes a $j$-chunking of $(u, z)$. Luckily, the array is row-wise
    and column-wise sorted; that is, we can always simulate an
    $i$-chunking with an $(i+1)$-chunking (e.g., set the first chunk
    to $0$), so if $A_1$ can take an $i$-chunking of $(u, v)$ and
    $A_2$ can take a $j$-chunking of $(u, z)$, then it's true that
    $A_1$ can take an $(i+1)$-chunking of $(u, v)$ and $A_2$ can take
    a $j$-chunking of $(u, z)$. Our goal is to find $\min_{i, j:
      arr[i, j] = 1} i + j$. In the worst case, the matrix has
        dimensions $k \times k$.\footnote{Technically, we care only about 
    the lower triangle (i.e., entries $arr[i, j]$ where $i + j \le
    k$), but this doesn't affect the asymptotic runtime.} 
    
    One solution is to run binary search on each column of the matrix; this involves looking at $O(k \log k)$ entries of the matrix. The minimum indices will clearly be found this way, as the minimum point will be the lowest $1$ entry in some column. Evaluating each entry requires us to run \autoref{alg:chunkSplit}, which runs in $O(k^2)$. Thus, the total runtime over all edges in the graph is $O(|E|^2 k^3 \log k)$. This brings the total computation cost to $O(|E|^2 k^3 \log k + |V|)$. The correctness of $\mathcal{P}'$ follows obviously from the correctness of \autoref{alg:opt-edge-chunk}, \autoref{alg:chunkSplit}, and \autoref{alg:greedy2}.
\end{proof}

\mAgentsLocal*
\begin{proof}
We simply use \autoref{alg:greedy2} to determine which edges can be chunked such that all agents will take the chunking. Keep only those edges in the graph, and run a shortest-path algorithm. This is exactly analogous to \autoref{thm:local-budget}, except that the runtime increases by a factor of $m$ because \autoref{alg:greedy2} runs in $O(mk)$ time.
\end{proof}
\mAgentsGlobal*
\begin{proof}
This is exactly the same as the proof of \autoref{thm:global-budget}, except that we use binary search to find the minimum number of chunks $l_{e}$ such that \textit{all} agents take the optimal $l_{e}$ chunking of edge $e$. Thus, we run \autoref{alg:greedy2} $\log k$ times for each edge, resulting in a total runtime of $O(|E|mk\log k + |V|)$. 
\end{proof}

\end{document}